\def\extflag{}

\documentclass[acmsmall,dvipsnames,screen]{acmart}

\PassOptionsToPackage{table}{xcolor}

\setcopyright{cc}
\setcctype{by}
\acmJournal{PACMPL}
\acmYear{2026} \acmVolume{10} \acmNumber{OOPSLA1} \acmArticle{134}
\acmMonth{4} 
\acmDOI{10.1145/3798242}

\usepackage{booktabs}   
\usepackage{subcaption} 
\usepackage{amsfonts}
\usepackage{amsmath}

\usepackage{paralist}
\usepackage{listing}
\usepackage[ruled,algosection,noend]{algorithm2e}
\usepackage{multirow}
\usepackage{varwidth}
\usepackage{graphicx}
\usepackage{listings}
\usepackage{float}
\usepackage{multirow}
\usepackage[noend]{algorithmic}
\usepackage{mathpartir}
\usepackage{url}
\usepackage{hyperref}
\usepackage[capitalise]{cleveref}
\usepackage{xspace}
\usepackage{verbatim}
\usepackage{lipsum}
\usepackage{wrapfig}
\usepackage{ifsym}

\usepackage[inline, shortlabels]{enumitem}
\usepackage{url}
\usepackage{fancyvrb}
\usepackage[figurename=Figure]{caption}
\usepackage{nanoc}
\usepackage[T1]{fontenc}
\usepackage{relsize}
\usepackage{marvosym}
\usepackage{microtype}
\usepackage{adjustbox}
\usepackage{multicol}
\usepackage{tikz}
\usepackage{tcolorbox}
\usepackage{ragged2e}
\usepackage{pgfplots}
\usepackage{pgfplotstable}
\usepackage{arydshln}
\usepackage{array}
\newcolumntype{M}{>{\columncolor{gray!12}}c}

\usepackage{mathtools}
\usepackage{mathdots}
\usepackage{thmtools}
\usepackage[bottom,hang,flushmargin]{footmisc}
\usepackage{siunitx}
\usepackage{setspace}
\usetikzlibrary{arrows}
\usetikzlibrary{shapes.arrows}
\usetikzlibrary{shapes}
\usetikzlibrary{calc}
\usetikzlibrary{positioning}
\usetikzlibrary{fit}
\usetikzlibrary{tikzmark}
\usetikzlibrary{snakes}
\usetikzlibrary{decorations.pathreplacing,calc}

\usepgfplotslibrary{groupplots}

\usepackage{tikzsymbols} 
\usepackage{tikz-qtree}

\tikzstyle{arrow}+=[thick,rounded corners=0.5em]
\tikzstyle{every picture}+=[remember picture,baseline]

\usepackage{multirow}
\usepackage{hhline}

\hypersetup{colorlinks,
  linkcolor=ACMDarkBlue,
  citecolor=ACMPurple,
  urlcolor=ACMDarkBlue,
  filecolor=ACMDarkBlue}

\floatname{algorithm}{Algorithm}

\SetAlFnt{\small}
\SetAlCapFnt{\small}
\SetAlCapNameFnt{\small}
\SetAlCapHSkip{0pt}
\IncMargin{-\parindent}

\floatstyle{plaintop}
\restylefloat{table}

\defcitealias{Coq-manual}{Coq}

\makeatletter 
\def\arcr{\@arraycr}
\makeatother

\floatname{algorithm}{Algorithm}

\definecolor{shadecolor}{gray}{1.00}
\definecolor{ddarkgray}{gray}{0.75}
\definecolor{darkgray}{gray}{0.30}
\definecolor{lightgray}{gray}{0.98}
\definecolor{splashedwhite}{rgb}{1.0, 0.99, 1.0}
\definecolor{whitesmoke}{rgb}{0.96, 0.96, 0.96}
\definecolor{mintcream}{rgb}{0.96, 1.0, 0.98}
\definecolor{ghostwhite}{rgb}{0.97, 0.97, 1.0}
\definecolor{floralwhite}{rgb}{1.0, 0.98, 0.94}
\definecolor{azuremist}{rgb}{0.94, 1.0, 1.0}
\definecolor{aliceblue}{rgb}{0.94, 0.97, 1.0}

\definecolor{beige}{rgb}{0.96, 0.96, 0.86}
\definecolor{bananamania}{rgb}{0.98, 0.91, 0.71}
\definecolor{unmellowyellow}{rgb}{1.0, 1.0, 0.4}
\definecolor{pastelyellow}{rgb}{0.99, 0.99, 0.59}
\definecolor{cosmiclatte}{rgb}{1.0, 0.97, 0.91}
\definecolor{aliceblue}{rgb}{0.94, 0.97, 1.0}
\definecolor{coolblack}{rgb}{0.0, 0.18, 0.39}
\definecolor{darkblue}{rgb}{0.0, 0.0, 0.55}
\definecolor{darkpowderblue}{rgb}{0.0, 0.2, 0.6}
\definecolor{denim}{rgb}{0.08, 0.38, 0.74}
\definecolor{hanpurple}{rgb}{0.32, 0.09, 0.98}
\definecolor{ballblue}{rgb}{0.13, 0.67, 0.8}
\definecolor{bondiblue}{rgb}{0.0, 0.58, 0.71}
\definecolor{caribbeangreen}{rgb}{0.0, 0.8, 0.6}
\definecolor{midnightblue}{rgb}{0.1, 0.1, 0.44}
\definecolor{oxfordblue}{rgb}{0.0, 0.13, 0.28}
\definecolor{payne}{rgb}{0.25, 0.25, 0.28}
\definecolor{bluegray}{rgb}{0.4, 0.6, 0.8}
\definecolor{darkelectricblue}{rgb}{0.33, 0.41, 0.47}
\definecolor{steelblue}{rgb}{0.27, 0.51, 0.71}
\definecolor{prussianblue}{rgb}{0.0, 0.19, 0.33}

\definecolor[named]{ACMBlue}{cmyk}{1,0.1,0,0.1}
\definecolor[named]{ACMYellow}{cmyk}{0,0.16,1,0}
\definecolor[named]{ACMOrange}{cmyk}{0,0.42,1,0.01}
\definecolor[named]{ACMRed}{cmyk}{0,0.90,0.86,0}
\definecolor[named]{ACMLightBlue}{cmyk}{0.49,0.01,0,0}
\definecolor[named]{ACMGreen}{cmyk}{0.20,0,1,0.19}
\definecolor[named]{ACMPurple}{cmyk}{0.55,1,0,0.15}
\definecolor[named]{ACMDarkBlue}{cmyk}{1,0.58,0,0.21}
\definecolor[named]{PaleGreen}{RGB}{196, 255, 231}
\definecolor[named]{PaleOrange}{RGB}{255, 213, 169}
\definecolor{intnull}{RGB}{213,229,255}

\newcommand{\capcal}[1]{$\mathcal{#1}$}

\newcommand{\etc}{\emph{etc}\xspace}
\newcommand{\ie}{\emph{i.e.}\xspace}

\newcommand{\eg}{\emph{e.g.}\xspace}

\newcommand{\etal}{\emph{et~al.}\xspace}

\newcommand{\aka}{\textit{a.k.a.}\xspace}

\newcommand{\wrt}{\emph{wrt.}\xspace}

\newcommand{\eqdef}{\triangleq}

\newcommand{\nonts}{\mathsf{S}\xspace}

\newcommand{\tname}[1]{\textsf{#1}\xspace}
\newcommand{\token}[1]{{\small\texttt{#1}}}
\newcommand{\code}[1]{\lstinline{#1}}

\newcommand{\AmbPA}[3]{%
  {#1}_{\textcolor{gray}{\,{#2},{#3}}}
}

\definecolor{pblue}{rgb}{0.13,0.13,1}
\definecolor{pgreen}{rgb}{0,0.5,0}
\definecolor{pred}{rgb}{0.9,0,0}
\definecolor{pgrey}{rgb}{0.46,0.45,0.48}

\definecolor{ckeyword}{HTML}{7F0055}
\definecolor{ccomment}{HTML}{3F7F5F}
\definecolor{cnumber}{HTML}{2A0099}

\makeatletter
\newenvironment{btHighlight}[1][]
{\begingroup\tikzset{bt@Highlight@par/.style={#1}}\begin{lrbox}{\@tempboxa}}
{\end{lrbox}\bt@HL@box[bt@Highlight@par]{\@tempboxa}\endgroup}

\newcommand\btHL[1][]{%
  \begin{btHighlight}[#1]\bgroup\aftergroup\bt@HL@endenv%
}
\def\bt@HL@endenv{%
  \end{btHighlight}%
  \egroup
}
\newcommand{\bt@HL@box}[2][]{%
  \tikz[#1]{%
    \pgfpathrectangle{\pgfpoint{1pt}{0pt}}{\pgfpoint{\wd #2}{\ht #2}}%
    \pgfusepath{use as bounding box}%
    \node[
      anchor=base west, 
      outer sep=0pt,
      inner xsep=2pt, 
      inner ysep=0pt, 
      rounded corners=2pt, 
      minimum height=\ht\strutbox,
      fill=black!18,
      line width=0.25pt,
      #1
    ]{\raisebox{0.5pt}{\strut}\strut\usebox{#2}};
  }%
}
\makeatother

\tikzset{
  hlReach/.style={fill=black!10, line width=0.25pt}, 
  hlDup/.style={fill=black!20, line width=0.25pt}, 
  hlBox/.style={draw=black, line width=1pt},
  hlBoxSolid/.style={hlBox, dash pattern=},
  hlBoxDashed/.style={hlBox, dashed},
  hlBoxDotted/.style={hlBox, dotted},
}

\lstdefinestyle{HL}{
    basicstyle=\scriptsize\ttfamily\linespread{0.9}, 
    columns=flexible,
    moredelim=**[is][{\btHL[hlDup]}]{`}{`},
    moredelim=**[is][{\btHL[hlReach]}]{!}{!},
    moredelim=**[is][{\btHL[fill=none,draw=black,line width=0.4pt,dashed]}]{@}{@},
}

\newcommand{\greta}{\tname{Greta}}
\newcommand{\tool}{\greta}
\newcommand{\yacc}{\tname{yacc}}
\newcommand{\beaver}{\tname{Beaver}}
\newcommand{\menhir}{\tname{Menhir}}

\SetCommentSty{mycommfont}

\newcommand\Bstrut{\rule[-1.1ex]{0pt}{0pt}}

\makeatletter
\protected\def\ccell#1#{%
  \kern-\fboxsep
  \@ccell{#1}%
}
\def\@ccell#1#2#3{%
  \colorbox#1{#2}{#3}%
  \kern-\fboxsep
}
\makeatother

\DeclareSymbolFont{symbolsC}{U}{txsyc}{m}{n}
\DeclareMathSymbol{\synth}{\mathrel}{symbolsC}{123}
\definecolor{bittersweet}{rgb}{1.0, 0.44, 0.37}
\definecolor{awesome}{rgb}{1.0, 0.13, 0.32}
\definecolor{blush}{rgb}{0.87, 0.36, 0.51}
\definecolor{brilliantlavender}{rgb}{0.96, 0.73, 1.0}
\definecolor{brightube}{rgb}{0.82, 0.62, 0.91}
\definecolor{brightlavender}{rgb}{0.75, 0.58, 0.89}
\definecolor{brinkpink}{rgb}{0.98, 0.38, 0.5}
\definecolor{notpink}{rgb}{0.8, 0.3, 0.3}

\newcommand{\pts}{\mapsto}

\newcommand{\set}[1]{\left\{#1\right\}}

\newcommand{\sep}{\ast}

\makeatletter
\providecommand*{\cupdot}{%
  \mathbin{%
    \mathpalette\@cupdot{}%
  }%
}
\newcommand*{\@cupdot}[2]{%
  \ooalign{%
    $\m@th#1\cup$\cr
    \hidewidth$\m@th#1\cdot$\hidewidth
  }%
}
\makeatother

\newtcbox{\tracebox}[1][]{on line,size=fbox,#1}

\iftutex
  \usepackage{unicode-math}
  \setmathfontface\altgrfont{AntykwaTorunskaMed-Italic.otf}[Scale=MatchLowercase]
  \newcommand{\kronecker}{\altgrfont{δ}}
\else
  \usepackage[T1]{fontenc}
  \usepackage[utf8]{inputenc} 
  \usepackage{amsmath}
  \DeclareSymbolFont{altgr}{OML}{antt}{m}{it}
  \DeclareMathSymbol{\kronecker}{\mathord}{altgr}{"0E}
\fi

\newcommand{\ifext}[2]{\ifdefined\extflag{#1}\else{#2}\fi}

\ifext{
  \settopmatter{printfolios=true,printccs=false,printacmref=false}
}{
  \settopmatter{printfolios=true,printccs=true,printacmref=true}
}

\setlist[itemize]{leftmargin=*}
\setlist[enumerate]{leftmargin=*}

\allowdisplaybreaks

\begin{document}

\newcommand{\mytitle}{Grammar Repair with Examples and Tree Automata}

\newcommand{\runningtitle}{\mytitle}

\setlength\floatsep{1.25\baselineskip plus 3pt minus 2pt}
\setlength\textfloatsep{1.25\baselineskip plus 3pt minus 2pt}
\setlength\intextsep{1.25\baselineskip plus 3pt minus 2 pt}


\ifext{
  \title[\mytitle]{\mytitle\\[2pt]\large Extended version (includes appendix)}
}{
  \title[\mytitle]{\mytitle}
}

\author{Yunjeong Lee}
\affiliation{%
  \institution{National University of Singapore}
    \country{Singapore}
}
\email{yunjeong.lee@u.nus.edu}
\orcid{0000-0002-3803-9782}

\author{Gokul Rajiv}
\affiliation{%
  \institution{National University of Singapore}
    \country{Singapore}
}
\email{grajiv@u.nus.edu}
\orcid{0009-0003-6074-1847}

\author{Ilya Sergey}
\affiliation{%
  \institution{National University of Singapore}
    \country{Singapore}
}
\email{ilya@nus.edu.sg}
\orcid{0000-0003-4250-5392}

\begin{abstract}
%
Context-free grammars (CFGs) are the de-facto formalism for
declaratively describing concrete syntax for programming languages and
generating parsers.
%
%
One of the major challenges in defining a desired syntax is ruling out
all possible ambiguities in the CFG productions that determine scoping
rules as well as operator precedence and associativity.
%
%
Practical tools for parser generation typically apply ad-hoc
approaches for resolving such ambiguities, which might result in a
parser's behavior that contradicts the intents of the language
designer.
%
%
In this work, we present a user-friendly approach to soundly
\emph{repair} grammars with ambiguities, which is inspired by the
\emph{programming by example} line of research in automated program
synthesis.
%
%
At the heart of our approach is the interpretation of both the initial
CFG and additional examples that define the desired restrictions in
precedence and associativity, as \emph{tree automata} (TAs).
%
%
The technical novelties of our approach are (1)~a new TA learning
algorithm that constructs an automaton based on the original grammar
and examples that encode the user's preferred ways of resolving
ambiguities all in a single TA, and (2)~an efficient algorithm for 
TA intersection that utilises reachability analysis and optimizations
that significantly reduce the size of the resulting automaton, which
results in idiomatic CFGs amenable to parser generators.
%
%
We have proven the soundness of the algorithms, and implemented our
approach in a tool called \tool, demonstrating its effectiveness on a
series of case studies.

\end{abstract}

\begin{CCSXML}
<ccs2012>
  <concept>
    <concept_id>10011007.10011006.10011008.10011009.10011015</concept_id>
    <concept_desc>Software and its engineering~Context free grammars</concept_desc>
    <concept_significance>500</concept_significance>
  </concept>
  <concept>
    <concept_id>10011007.10011006.10011008.10011009.10011021</concept_id>
    <concept_desc>Software and its engineering~Syntax</concept_desc>
    <concept_significance>300</concept_significance>
  </concept>
  <concept>
    <concept_id>10011007.10011006.10011008.10011024.10011028</concept_id>
    <concept_desc>Software and its engineering~Parsers</concept_desc>
    <concept_significance>300</concept_significance>
  </concept>
</ccs2012>
\end{CCSXML}

\ccsdesc[500]{Software and its engineering~Context free grammars}
\ccsdesc[300]{Software and its engineering~Syntax}
\ccsdesc[300]{Software and its engineering~Parsers}


\keywords{context-free grammars, parsing, tree automata, programming
  by example}

\maketitle


\vspace{-5pt}

\section{Introduction}
\label{sec:intro}

In recent years, substantial progress has been made on
automatically synthesising grammars and parsers for context-free
languages~\cite{Bastani0AL17,LeungSL15,BettscheiderZ24,GopinathMZ20}.
Nevertheless, writing a precise description of a programming
language's concrete syntax still remains a challenging and
difficult-to-automate task.
The main challenge stems from the multitude of possibilities to
introduce ambiguities in the interpretation of language strings as
syntax trees when defining the language's context-free grammar (CFG).
For instance, according to the following CFG of a language of
arithmetic expressions
\begin{equation*}
\label{eq:cfg1}
\nonts \to \nonts + \nonts~|~\nonts \ast \nonts~|~(\nonts)~|~x~|~y~|~z   
\end{equation*}
the string $x + y * z$ can be parsed both as $(x + y) * z$ and
$x + (y * z)$, even though only one of those interpretations is
typically desired by the language designers.

Modern frameworks for LR($k$) parser generation, such as
\yacc~\cite{Johnson:yacc75}, \beaver~\cite{beaver},
\tname{ScalaBison}~\cite{Boyland-Spiewak}, and
\menhir~\cite{Regis:Menhir2016} can detect an overapproximation of 
such ambiguities in operator precedence, associativity, and nesting, 
in the form of so-called shift/reduce and reduce/reduce conflicts, 
reporting them to the users and even resolving them automatically.
Even though this simplifies the development of a language's syntax,
automated ambiguity resolution might result in an interpretation of
the syntax that is different from what is implicitly envisioned by its
designer.

To provide more control over operator precedence, associativity, and
nesting, existing tools for parser generation offer mechanisms to
specify these properties explicitly~\cite{BrandSVV02,MacedoS20}, while
standard compiler textbooks provide general strategies to describe a
CFG to avoid ambiguities in the first
place~\cite{Klint:ASMICS94,Wharton76}.
That said, adopting tool-specific conventions and following ``good
practices'' for structuring CFGs often leads to grammars that are
difficult to understand and maintain.
Even worse, should an ambiguity be introduced in a CFG, expert
knowledge, both in the structure of the object language and in the
workings of the parser generator tool is required to correctly
resolve~it.

A more \emph{declarative} approach to resolve ambiguities in a
context-free grammar has been proposed by Adams and
Might~\cite{Adams-Might:OOPSLA17} who suggested capturing the
restrictions imposed on top of an ambiguous grammar in the form of
\emph{tree regular expressions} (TREs) that explicitly forbid
undesired classes of parse trees.
The approach of Adams and Might exploits the fundamental connection
between CFGs, tree-regular expressions, and tree automata (TAs), using
the latter language representation, which is closed under
intersection, as a way to produce the \emph{repaired} (\ie
ambiguity-free) version of the grammar.
Unfortunately, writing a correct tree-regular expression that resolves
an ambiguity is not an easier task for a non-expert than fixing the
ambiguity directly in the grammar, as it requires one to have a good
grasp of TRE semantics to design suitable restrictions.
%
The goal of this work is to enhance the tree automata-based approach
to grammar repair and provide a user-friendly and sound way to repair
grammars from ambiguities by adopting a popular \emph{programming by
  example} paradigm from the works on automated program
synthesis~\cite{Gulwani11,LeG14,WangCB17,WuLF15,PelegSY18,LeCLGV17}.

\vspace{-5pt}

\paragraph{Key idea}
Our novel approach, dubbed \emph{grammar repair by example}, resolves
ambiguities in a grammar by suggesting to the user pairs of examples
that demonstrate mutually-exclusive ways to resolve parsing conflicts,
asking the user to choose one of them, and using the chosen examples
to generate a ``fixed'' version of the grammar.
More specifically, our approach (a)~converts an ambiguity detected as
an LR($1$) parser generation conflict, into a small set of concrete
examples that are shown to the user in the form of parse tree
alternatives.
Out of those examples, (b)~the user chooses the examples corresponding
to the desired syntax of the language in question.
The chosen examples are then (c)~automatically converted into a
grammar restriction in the form of a tree automaton, which is next
(d)~intersected with the automaton corresponding to the original
grammar, thus eliminating the ambiguity. Finally, (e)~the result of
the intersection is converted back to the CFG form, which is then
analysed for ambiguities again, and, in case any are detected, the
steps (a)-(e) repeat.

\vspace{-3pt}

\paragraph{Challenges}
The technical problems that needed to be solved in order to implement
this approach in practice had to do with designing algorithms for the
steps~(c) and~(d).
In particular, while step (c) appears to be a textbook automata
learning task, standard Angluin-style
algorithms~\cite{Angluin:IC81,Angluin:IC87,Besombes-Marion:TCS07} are
unsuitable for our scenario, as they require construction of tree
examples for \emph{all} the alphabet symbols.
In other words, such examples would have to collectively represent the
\emph{entire grammar} rather than its \emph{subsets} that are directly
involved in ambiguities---which is what we aim to provide to the
user of our approach.
%
%
%
%
Furthermore, using the standard definition of tree automata
intersection from the existing approaches as a way to update the
grammar with the learned
``fixes''~\cite{Adams-Might:OOPSLA17,Comon-et-al:tata08} \emph{as is}
produces non-idiomatic grammars that are difficult to comprehend.
%

We addressed these challenges by implementing two novel algorithms:
for tree automata learning from example sub-trees of an input ``base''
grammar (for the step~(c)) and for TA intersection (for~(d)). Finally,
we implemented the described end-to-end grammar repair pipeline in a
tool.

\vspace{-3pt}

\paragraph*{Contributions}

In summary, we make the following contributions:

\vspace{-3pt}

\begin{itemize}
\item Our main conceptual contribution is \emph{grammar repair by
    example}---a novel approach to automatically resolve ambiguities
  in context-free grammars following a simple input from the user
  regarding what parse (sub)trees are acceptable. The workings of our
  approach rely on a fundamental relation between context-free
  grammars and tree automata (\autoref{sec:overview}).
\item Our first technical contribution is a novel algorithm for
  passive learning of tree automata from positive/negative parse
  subtree examples and its soundness proof stating that the
  synthesised automaton indeed correctly accepts all positive and
  rejects all negative examples (\autoref{subsec:learner}).
\item Our second technical contribution is a new algorithm for
  computing an intersection of tree automata tailored to regular tree
  grammars producing a result that can be rendered as an idiomatic
  CFG, along with the proof of its correctness
  (\autoref{subsec:intersect}).
\item We implemented our approach to grammar repair by example in a
  tool called \tool on top of \menhir---a framework for parser
  generators in OCaml~\cite{Regis:Menhir2016}.
  Our evaluation on examples from undergraduate compiler classes,
  questions posed on StackOverflow, and grammars of real-world
  languages
  %
  %
  demonstrates utility and efficiency of our approach: it does indeed
  fix ambiguities, with minimal help from the user, in most of the
  case studies and does so quite fast (\autoref{sec:eval}).
\end{itemize}


\section{Overview}
\label{sec:overview}


This section illustrates an example run of \tool, demonstrating how
ambiguities in a CFG can be resolved through a lightweight interaction
with the user, by learning the ``fixes'' in the form of tree automata
(TA) from user-provided examples and subsequently updating the grammar
by means of TA intersection.
We start from a characteristic example of an initial input CFG with
ambiguities (\autoref{subsec:ambig-cfg}), explain how it is translated
into a TA (\autoref{subsec:cfg-as-ta}), describe generation of a TA
based on the user-specified tree examples and the input CFG
(\autoref{subsec:restrictions-as-ta}), and show how the ambiguities
are resolved by intersecting the two TAs and translating the result
back to a CFG (\autoref{subsec:ta-intersect}).

%
\subsection{Context-Free Grammars and Ambiguities}
\label{subsec:ambig-cfg}

A context-free grammar (CFG) is formally defined as a tuple
$(V, \Sigma, S, P)$, where $V$ is a set of nonterminal symbols,
$\Sigma$ a set of terminal symbols, $S \in V$ a start nonterminal,
and $P \subseteq V \times (V \cup \Sigma)^*$ is a set of productions.
An example of a CFG \capcal{G} is shown in~\autoref{fig:cfg}; its
nonterminals $V$ consists of \token{stmt}, \token{decl}, \token{expr}
and \token{ident}, with a start nonterminal \token{stmt}; its
terminals $\Sigma$ include \token{SEMI}, \token{TINT}, \token{IDENT},
and \token{EQ} involved in declaration statements, \token{IF},
\token{THEN}, and \token{ELSE} for conditional statements with only a
then-branch or both then- and else-branches, \token{PLUS},
\token{STAR}, and \token{INT} for expressions with binary operations,
and \token{LPAREN} and \token{RPAREN} for open and close parentheses;
its production rules \texttt{P} shown in~\autoref{fig:cfg} replace
nonterminal symbols with sequences of nonterminal and/or terminal
symbols.

\begin{figure}[tbp!]
\begin{lstlisting}[belowskip=-0.1\baselineskip,basicstyle=\linespread{1.0}\ttfamily\footnotesize,numbers=none]
V = { stmt, decl, expr, ident }
Σ = { SEMI, IF, THEN, ELSE, PLUS, STAR, INT, LPAREN, RPAREN, TINT, EQ }
S = stmt
P = 
\end{lstlisting}
\begin{minipage}{0.51\textwidth}
\begin{lstlisting}[basicstyle=\linespread{1.0}\ttfamily\footnotesize,numbers=none]
{ stmt -> decl SEMI 
  stmt -> IF expr THEN stmt
  stmt -> IF expr THEN stmt ELSE stmt 
  decl -> TINT ident EQ expr 
  ident -> IDENT
\end{lstlisting}
\end{minipage}
\begin{minipage}{0.47\textwidth}
\begin{lstlisting}[basicstyle=\linespread{1.0}\ttfamily\footnotesize,numbers=none,escapechar=!]
expr -> expr PLUS expr 
expr -> expr STAR expr 
expr -> INT 
expr -> LPAREN expr RPAREN 
expr -> ident               }
\end{lstlisting}
\end{minipage}
\caption{An example CFG \capcal{G} with ambiguities.}
\vspace{-8pt}
\label{fig:cfg}
\end{figure}

\begin{figure}[!b]
\begin{lstlisting}[basicstyle=\linespread{1.0}\ttfamily\footnotesize,numbers=none]
(1) expr PLUS expr PLUS expr
(2) expr STAR expr STAR expr
(3) expr PLUS expr STAR expr
(4) IF expr THEN IF expr THEN stmt ELSE stmt
\end{lstlisting}
\caption{Expressions demonstrating the ambiguities in the grammar from
  \autoref{fig:cfg}.}
\label{fig:ambig-exprs}
\end{figure}

\begin{figure}[t]
  \centering
     \begin{subfigure}[b]{0.48\linewidth}
         \centering
         \includegraphics[width=\linewidth]{figs/ui-interact/ui-interact-a.pdf}
         \vspace{-1.2\baselineskip}
         \caption{Left- {\footnotesize\texttt{vs.}} right-associative \code{plus3}}
         \label{fig:user-interact-a}
     \end{subfigure}
     \hfill
     \begin{subfigure}[b]{0.48\linewidth}
         \centering
         \includegraphics[width=\linewidth]{figs/ui-interact/ui-interact-b.pdf}
         \vspace{-1.2\baselineskip}
         \caption{Left- {\footnotesize\texttt{vs.}} right-associative \code{star3}}
         \label{fig:user-interact-b}
     \end{subfigure}
     \hfill
     \begin{subfigure}[b]{0.48\linewidth}
         \centering
         \includegraphics[width=\linewidth]{figs/ui-interact/ui-interact-c.pdf}
         \vspace{-1.2\baselineskip}
         \caption{\code{plus3} {\footnotesize\texttt{vs.}} \code{star3}}
         \label{fig:user-interact-c}
     \end{subfigure}
     \hfill
     \begin{subfigure}[b]{0.48\linewidth}
         \centering
         \includegraphics[width=\linewidth]{figs/ui-interact/ui-interact-d.pdf}
         \vspace{-1.2\baselineskip}
         \caption{\code{if4} {\footnotesize\texttt{vs.}} \code{if6}}
         \label{fig:user-interact-d}
     \end{subfigure}
        \caption{Tree examples representing conflicts in \capcal{G}.}
        \label{fig:user-interact}
\end{figure}
%
%
%
A CFG is called \emph{ambiguous} when there are multiple ways to match
a string of terminals via its production rules.
\capcal{G} in~\autoref{fig:cfg} is an example of such grammar, with
four different ambiguities, \aka \emph{conflicts}. 
In particular, these conflicts are caused by lack of information about
associativity of \token{PLUS} and \token{STAR} operators, and about
the precedence order between \token{PLUS} and \token{STAR} as well as
between \token{IF} with \token{THEN} branch and \token{IF} with
\token{THEN} and \token{ELSE} branches.
\autoref{fig:ambig-exprs} shows expressions that are subject to the
conflicts.
The first expression \token{(1)} can be derived by either applying the
grammar's production rule \token{expr $\rightarrow$ expr PLUS expr} on
the left \token{expr PLUS expr} first and the right one next \emph{or}
the other way around, since the grammar is ambiguous on whether
\token{PLUS} is left- or right-associative.
Similarly, the second expression \token{(2)} is subject to the
ambiguity of \token{STAR} being left-associative or right-associative.
The third expression \token{(3)} is subject to the unspecified
precedence orders between \token{PLUS} and \token{STAR}.
The fourth expression \token{(4)} illustrates an ambiguity of
parsing nested \texttt{IF}-expressions with only then- or both then-
and else-branches, famously known as the \emph{dangling else}
problem~\cite{Abrahams66}.

\begin{wrapfigure}[11]{r}{0.5\textwidth}
\vspace{-12pt}
  \centering
  \includegraphics[width=0.5\textwidth]{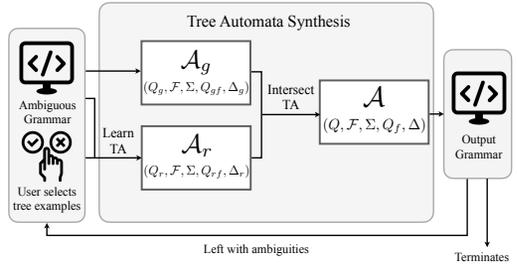}
  \setlength\abovecaptionskip{-15pt}
  \caption{\greta grammar disambiguation workflow.}
  \label{fig:workflow}
\end{wrapfigure}
These different ways to parse the same expression can be depicted via
\emph{tree examples}, as shown in~\autoref{fig:user-interact}.
%
%
%
We rely on the hierarchical property of such trees: symbols in tree
examples that are \emph{deeper} have \emph{higher} precedence orders
than symbols at lower depths.
These examples are sub-parse trees that represent ambiguities in
a CFG and thus use only a subset of the transitions.
When only one symbol is present in a tree example, we consider 
it to represent an associativity ambiguity.
%
Our idea is to allow the user to choose \emph{one such tree per
conflict}.
The resulting combined set of \emph{tree examples} that are \emph{not}
selected by the user $T^{-}$ can be then used to disambiguate the 
grammar, updating its rules accordingly.
In a nutshell, \tool achieves that by (a)~generating a TA from the
user-specified tree examples and from the initial (input) CFG and
(b)~intersecting the learned TA with a TA obtained from the original
grammar to resolve the ambiguities (\autoref{fig:workflow}).
In the rest of this section, we will walk through the stages of \tool
working using the conflict-featuring grammar \capcal{G} as an example.

%

\subsection{From a Context-Free Grammar to a Tree Automaton}
\label{subsec:cfg-as-ta}
%

%

The first step in our approach is to convert the input grammar into a
tree automaton.
It is well-known that any CFG can be represented by a tree automaton 
that recognises the language of its parse trees---so-called
\emph{regular tree
  language}~\cite{Comon-et-al:tata08,Engelfriet:arXiv15}.
We now provide a brief explanation of basic concepts of TAs and
show how the grammar from \autoref{fig:cfg} is translated to its
corresponding TA.

\begin{figure}[b]
\vspace{-8pt}
\begin{lstlisting}[basicstyle=\linespread{1.0}\ttfamily\footnotesize,numbers=none]
  (*@$\mathcal{F}$@*) = { (SEMI,2), (IF,4), (IF,6), (TINT,4), (PLUS,3), (STAR,3), (INT,1)
  (*@\hspace{3em}\,\,\,\,\,@*)((),3), (δ,1), (IDENT,1) (*@\,@*)}
\end{lstlisting}
\caption{Ranked alphabet $\mathcal{F}$ for the running example.}
\label{fig:ranked-alphabet}
\end{figure}

%
A \emph{tree automaton} (TA) is a tuple $(Q, \mathcal{F}, \Sigma,
Q_{\text{f}}, \Delta)$ where $Q$ refers to a set of states,
$\mathcal{F}$ is a set of constructor labels (\aka ranked alphabet),
$\Sigma$ is a set of terminal symbols, $Q_{\text{f}} \subseteq Q$ is a
set of final states, and $\Delta$ is a set of transition rules.
%
Each ranked symbol in $\mathcal{F}$ has an associated arity
($\textit{Rank}$) corresponding to the number of terminals and
nonterminals that appear on the right-hand side of a production in the
original CFG. It also has a symbol ($\textit{Sym}$) which is unique to
each production. 
A TA accepts a tree if there exists a run from the leaves to a final
state at the root, following the transition rules in $\Delta$
(described formally in the supplementary material).
%
%
%
%
%
%
\autoref{fig:ranked-alphabet} shows a readable set of ranked symbols
for~$\mathcal{G}$, chosen for use in this paper.
%
%
%
%
%
%



%
%

\begin{figure}[t]
\begin{lstlisting}[basicstyle=\linespread{1.0}\ttfamily\footnotesize,belowskip=-0.3\baselineskip,numbers=none]
Q_g = { stmt, decl, expr, ident }
(*@\,@*)Q_gf = { stmt }
Del_g = 
\end{lstlisting}
\begin{minipage}{0.55\textwidth}
\begin{lstlisting}[basicstyle=\linespread{1.0}\ttfamily\footnotesize,numbers=none]
{ stmt <-_semi     decl SEMI 
  stmt <-_if4     IF expr THEN stmt
  stmt <-_if6     IF expr THEN stmt ELSE stmt
  decl <-_tint     TINT ident EQ expr
  ident<-_id     IDENT
\end{lstlisting}
\end{minipage}
\begin{minipage}{0.43\textwidth}
\begin{lstlisting}[basicstyle=\linespread{1.0}\ttfamily\footnotesize,numbers=none]
expr<-_+s    expr PLUS expr
expr<-_*s    expr STAR expr
expr<-_N    INT
expr<-_()    LPAREN expr RPAREN
expr<-_ds    ident               }
\end{lstlisting}
\end{minipage}
\setlength\abovecaptionskip{5pt}
\setlength\belowcaptionskip{-10pt}
\caption{TA $\mathcal{A}_{\text{g}}$ reinterpreted from $\mathcal{G}$.}
\label{fig:ta-ambiguous}
\end{figure}

A TA \capcal{A\textsubscript{g}} translated from the grammar
\capcal{G} is shown in~\autoref{fig:ta-ambiguous}, combined with the
ranked alphabet from~\autoref{fig:ranked-alphabet}.
%
%
It is generated from the grammar as follows.
First, the productions of \capcal{G} are converted to ranked alphabet
symbols with their corresponding arities to result
in~\autoref{fig:ranked-alphabet}.
A nonterminal-to-nonterminal production like \token{expr $\rightarrow$ ident}
is mapped to a transition \code{expr <-_ds1- ident} using a \code{ds1}-label,
which is added to the ranked alphabet.
%
%
%
The set of nonterminal symbols $V$ and the start symbol $S$ of
\capcal{G} are mapped to a set of states \texttt{Q\textsubscript{g}}
and a singleton set of $S$ as the final accepting state
\texttt{Q\textsubscript{gf}}, respectively, as shown
in~\autoref{fig:ta-ambiguous}.
The set of terminal symbols of \capcal{G} becomes the set of terminal
symbols $\Sigma$ of the TA.
Lastly, production rules of \capcal{G} are annotated with each rule's
ranked alphabet symbol, resulting in transition rules of
\capcal{A\textsubscript{g}}.
%
%

\subsection{Generating a Tree Automaton from Tree Examples and an Input Grammar}
\label{subsec:restrictions-as-ta}
%

%
As we learned from~\autoref{subsec:ambig-cfg} ambiguities in a grammar
are caused by the lack of information about associativity and/or
precedence orders among terminal symbols.
%
%
We observe that \emph{semi-concrete} parse trees, as shown in
\autoref{fig:user-interact}, can be useful for illustrating the
options regarding associativity and precedence orders.
%
In our approach, we show such trees to the language designer, asking
them to select the options that correspond to their preferred
hierarchies among the alphabet symbols in question.
Based on the user's selections as well as the original CFG, we
generate a TA that rejects unwanted parse trees. The TA must also 
accept trees consistent with selections and with symbols in 
$\mathcal{G}$ unrelated to the ambiguities.
%
\subsubsection{Conflict Examples and Their Resolution}
\label{sec:examples}

Each of the ambiguities coming from the expressions
in~\autoref{fig:ambig-exprs} is presented to the user as parse trees
with different parsing orders, as shown
in~\autoref{fig:user-interact}.
%
%
Our approach relies on a standard LR(1) parser generator to identify
conflicts and generate the tree examples from the sets of productions
featuring the ambiguities.
%
%
%
For example, the expression \token{(1) expr PLUS expr PLUS expr}
in~\autoref{fig:ambig-exprs} can be parsed by applying a production
rule \token{expr $\rightarrow$ expr PLUS expr} first and then applying
the same rule to the left---or the right---\token{expr} on the
right-hand side of the production, essentially exemplifying the left-
or the right-associativity of the \code{plus3} operator.
Once the rule is labeled with its respective tree-constructor label,
\code{plus3} in this case, the trees representing different ways of
parsing are constructed, as shown in~\autoref{fig:user-interact-a}.
Likewise, trees representing left- and right-associativity of \code{star3}
in~\autoref{fig:user-interact-b} represent two different ways to parse
\token{(2) expr STAR expr STAR expr} in~\autoref{fig:ambig-exprs}. 
%
As observed in these sets of productions, if a conflict involves
just one symbol, we can infer that it is because of the symbol's
unclear associativity.

Other than associativity, the conflicts might be caused by ambiguous
precedence orders among different symbols.
Consider \token{(3) expr PLUS expr STAR expr}
in~\autoref{fig:ambig-exprs}.
The expression can be parsed by applying the production \token{expr
  $\rightarrow$ expr PLUS expr} first and then \token{expr
  $\rightarrow$ expr STAR expr} to the right \token{expr}, generating
a parse tree on the left in~\autoref{fig:user-interact-c}.
Alternatively, \token{expr $\rightarrow$ expr STAR expr} can be
applied first and then \token{expr $\rightarrow$ expr PLUS expr} to
the left \token{expr}, generating a tree on the right
in~\autoref{fig:user-interact-c}.
These represent two alternative ways to generate parse trees 
by putting these symbols at different \emph{depths} in those
trees when precedence orders between these symbols (\eg,
\code{plus3} and \code{star3}) are ambiguous.
Similarly, ambiguity from parsing the expression \token{(4) IF expr
  THEN IF expr THEN stmt ELSE stmt} in~\autoref{fig:ambig-exprs},
known as the \emph{dangling else problem}, is shown as trees
in~\autoref{fig:user-interact-d}, where the left tree symbolises
\code{if4} having a higher precedence order than \code{if6} and the
right one with an opposite precedence relation.

Coming back to our running example, suppose the user has indicated
their parsing preferences by selecting the ticked trees
in~\autoref{fig:user-interact}.
This results in \tool learning the relations of the tree-constructor
labels \code{plus3} $<$ \code{star3} and \code{if4} $<$ \code{if6} in
addition to \code{plus3} and \code{star3} being left-associative.
These user preferences can be thought of as \emph{restrictions} on the
input grammar.
%
%
\tool collects these restrictions by traversing each of the
trees \emph{not} selected by the user to combine them with the 
so-called \emph{base precedence order} of the original grammar.

The base precedence order $O_{\text{bp}}$ denotes the set of all the
ranked alphabet symbols obtained from the tree
automaton~$\mathcal{A}_{\text{g}}$ representing the input CFG~$\mathcal{G}$ with
their corresponding orders, which can be thought of as the
\emph{shortest distance} to the final accepting state of
$\mathcal{A}_{\text{g}}$ in~\autoref{fig:ta-ambiguous}.
For our example whose ranked alphabet denoted by $\mathcal{F}$ is
shown in~\autoref{fig:ranked-alphabet}, $O_{\text{bp}}$ is as follows:
\begin{lstlisting}[basicstyle=\linespread{1.0}\ttfamily\footnotesize,numbers=none]
{(*@\,@*)((IF,4),0), ((IF,6),0), ((SEMI,2),0), ((TINT,4),1), ((PLUS,3),1), ((STAR,3),1), 
 (*@\,@*)((INT,1),1), (((),3),1), ((δ,1),1)(*@\,@*)}
\end{lstlisting}

The $1$-arity symbol \code{IDENT} is identified as a \emph{trivial
symbol} (detailed in section \ref{subsec:base-precedence-order}) and 
is handled separately, since we know that it will not be conflicting 
with other symbols, and their corresponding nonterminals remain 
intact throughout the run of \tool. Its associated state  \code{ident} 
as well as transition \code{ident <-_ident1- IDENT} remain unchanged 
in the generated TA.
%
%
%
%
%
%
We provide a formal definition of $O_{\text{bp}}$ and describe a
procedure to construct it in~\autoref{subsec:learner}.

%
%
\subsubsection{Parsing Preferences as a Tree Automaton}
\label{sec:prefs}

The following step---constructing a tree automaton that elaborates the
input grammar with additional restrictions---is the key novel idea of
this work.

\begin{figure}[t!]
\begin{lstlisting}[basicstyle=\linespread{1.0}\ttfamily\footnotesize,belowskip=-0.3\baselineskip,numbers=none]
Q_r = { e_0, e_1, e_2, e_3, e_4, ident }
(*@\,@*)Q_rf = { e_0 }
Del_r = 
\end{lstlisting}
\centering
\begin{subfigure}{0.49\linewidth}
\begin{lstlisting}[basicstyle=\linespread{1.0}\ttfamily\footnotesize,numbers=none]
{ e_0 <-_if4    IF e_0 THEN e_0
  e_0 <-_semi    e_0 SEMI
  e_0 <-_εs    e_1
  e_1 <-_if6    IF e_1 THEN e_1 ELSE e_1
  e_1 <-_semi    e_1 SEMI
  e_1 <-_εs    e_2
  e_2 <-_+s    e_2 PLUS e_3
  e_2 <-_tint    TINT e_2 EQ e_2
  e_2 <-_N    INT
  e_2 <-_()    LPAREN e_2 RPAREN
  e_2 <-_ds    ident
  e_2 <-_εs    e_3
\end{lstlisting}
\end{subfigure}
\begin{subfigure}{0.49\linewidth}
\begin{lstlisting}[basicstyle=\linespread{1.0}\ttfamily\footnotesize,numbers=none]
e_3 <-_*s    e_3 STAR e_4
e_3 <-_tint    TINT ident EQ e_3
e_3 <-_N    INT
e_3 <-_()    LPAREN e_3 RPAREN
e_3 <-_ds    ident
e_3 <-_εs    e_4
e_4 <-_tint    TINT ident EQ e_4
e_4 <-_N    INT
e_4 <-_()    LPAREN e_4 RPAREN
e_4 <-_ds    ident
ident <-_id    IDENT
(*@@*)                             (*@\,@*)}
\end{lstlisting}
\end{subfigure}
\caption{The tree automaton $\mathcal{A}_{\text{\text{r}}}$ generated from
  user-specified parsing preferences.}
\label{fig:ta-restrictions}
\end{figure}

Notice that the trees in \autoref{fig:user-interact} are \emph{not}
parse trees of the input grammar~$\mathcal{G}$; rather they
are \emph{sub-trees} of some of the parsing trees allowed by the
grammar. To wit, they feature symbols \code{plus3}, \code{star3},
\code{if4}, and \code{if6}, while missing information about precedence
orders of other symbols from the alphabet $\mathcal{F}$, such as
\code{semi2} or \code{tint4} with respect to symbols in the examples,
as that information can be restored via $O_{\text{bp}}$. Given the
preferences indicated by the user on the provided examples, our goal
is to capture them into a \emph{new} tree
automaton~\capcal{A\textsubscript{\text{r}}} that is also ``permissive
enough'' to accept the desired ``complete'' parse trees allowed by the
TA $\mathcal{A}\textsubscript{g}$ corresponding to the original
grammar. In the remainder of this section, we provide an informal
description of this procedure, fleshing out its details and correctness 
argument in \autoref{subsec:learner}.

We start by combining the restrictions learned from the examples
in~\autoref{fig:user-interact} (\eg, \code{plus3} $<$ \code{star3},
\code{if4} $<$ \code{if6}, \etc) with the base precedence order
$O_{\text{bp}}$, obtaining the following set of precedence
orders---let's call it $O_{p}$---for all symbols:
\begin{lstlisting}[basicstyle=\linespread{1.0}\ttfamily\footnotesize,numbers=none]
{(*@\,@*)((IF,4),0), ((SEMI,2),0), ((IF,6),1), ((SEMI,2),1), ((PLUS,3),2), ((TINT,4),2),
 ((INT,1),2), (((),3),2), ((δ,1),2), ((STAR,3),3), ((TINT,4),3), ((INT,1),3), 
 (((),3),3), ((δ,1),3), ((TINT,4),4), ((INT,1),4), (((),3),4), ((δ,1),4)(*@\,@*)}
\end{lstlisting}
%
%
When orders of symbols from the chosen examples are compared in this
updated set $O_{p}$, it is consistent with the precedence relations
\code{plus3} $<$ \code{star3} as well as \code{if4} $<$ \code{if6}.
In addition, for all other pairs of symbols in the base precedence 
order, there exists a pair in $O_p$ for those symbols with the 
same relative order.
$O_p$ also accommodates for later integration with associativity 
restrictions $O_a$, full details of which are explained in 
\autoref{subsec:learner}.

%
Subsequently, given a pair of ranked symbol and its order $(f, o)$ in
$O_{p}$, \tool generates an $f$-labeled transition from the 
states and terminals in the right-hand side of $\textit{Prod}(f)$ to 
\texttt{e}\textsubscript{$o$}. 
As an example, \code{((IF,4),0)} from $O_{p}$ produces the
transition \code{e_0 <-_if4- *if* e_0 *then* e_0}.
At the same time, each ordered state \texttt{e}\textsubscript{$o$} is
linked to its next ordered state in hierarchy
\texttt{e}\textsubscript{$o+1$} by an $\epsilon$-transition: \eg,
\code{e_0 <-_eps1- e_1}.
The state \code{e_0} corresponding to the symbol \code{(IF,4)}
represents a level in the hierarchy of parsing orders. 
%
Moreover, if a symbol was used to specify an associativity in the
examples, \tool generates a transition that takes it into account:
\eg, \code{e_2 <-_plus3- e_2 *plus* e_3} produced based on
left-associative \code{(PLUS,3)} from~$O_{a}$ and \code{((PLUS,3),2)}
from~$O_{p}$.
%
%
These steps lead to construction of
$\mathcal{A}\textsubscript{\text{r}}$
in~\autoref{fig:ta-restrictions}.
The details of the algorithm are explained in \autoref{subsec:learner}.


%
Notice that the automaton obtained this way is not immediately the
result we want that corresponds to the repaired grammar.
That is because the constructed $\mathcal{A}\textsubscript{\text{r}}$
recognises not only all those non examples-related parse trees allowed
by $\mathcal{A}\textsubscript{\text{g}}$ but \emph{even} those
terms that are not allowed by $\mathcal{A}\textsubscript{\text{g}}$. 
For example, consider an expression \code{*if* 1 THEN 2 STAR 3}. A
tree corresponding to this expression is accepted by
$\mathcal{A}\textsubscript{\text{r}}$ since \code{e_1} appearing after
\code{THEN} can be rewritten by \code{e_3 STAR e_4}, whereas this is
not possible in the original grammar, hence \emph{not} accepted by
$\mathcal{A}\textsubscript{\text{g}}$.
This is the reason why we need to take an intersection of the tree
automata to eventually produce a grammar that is both \emph{not} too
permissive \wrt original grammar and restrictive \wrt user-specified
parsing preferences.



\begin{figure}[t!]
\begin{minipage}{0.55\textwidth}
\begin{lstlisting}[style=HL,numbers=none]
!st_e0(*@\,@*)<-_semi2-(*@\,@*)de_e0 SEMI(*@\\@*)
st_e0(*@\,@*)<-_semi2-(*@\,@*)de_e1 SEMI(*@\\@*)
st_e0(*@\,@*)<-_If4(*@\,@*)-if(*@\,@*)ex_e0 -then(*@@*)st_e0(*@\\@*)
st_e0(*@\,@*)<-_If6(*@\,@*)-if(*@\,@*)ex_e1 -then(*@@*)st_e1 -else st_e1!

de_e0(*@\,@*)<-_tint4-(*@@*)-tint(*@@*)id_id(*@\,\,@*)-eq(*@@*)ex_e2
de_e0(*@\,@*)<-_tint4-(*@@*)-tint(*@@*)id_id(*@\,\,@*)-eq(*@@*)ex_e3
de_e0(*@\,@*)<-_tint4-(*@@*)-tint(*@@*)id_id(*@\,\,@*)-eq(*@@*)ex_e4

de_e1(*@\,@*)<-_tint4-(*@@*)-tint(*@@*)id_id(*@\,\,@*)-eq(*@@*)ex_e2
de_e1(*@\,@*)<-_tint4-(*@@*)-tint(*@@*)id_id(*@\,\,@*)-eq(*@@*)ex_e3
de_e1(*@\,@*)<-_tint4-(*@@*)-tint(*@@*)id_id(*@\,\,@*)-eq(*@@*)ex_e4

`ex_e0(*@\,@*)<-_+(*@\,@*)ex_e2(*@\,@*)-plus(*@\,@*)ex_e3`
`ex_e0(*@\,@*)<-_int(*@\,@*)INT`
`ex_e0(*@\,@*)<-_d(*@\,@*)id_id`
`ex_e0(*@\,@*)<-_(-)(*@\,@*)-lparen (*@@*)ex_e2 -rparen`
`ex_e0(*@\,@*)<-_(-)(*@\,@*)-lparen (*@@*)ex_e3 -rparen`
`ex_e0(*@\,@*)<-_(-)(*@\,@*)-lparen (*@@*)ex_e4 -rparen`

`ex_e1(*@\,@*)<-_+(*@\,@*)ex_e2(*@\,@*)-plus(*@\,@*)ex_e3`
`ex_e1(*@\,@*)<-_int(*@\,@*)INT`
`ex_e1(*@\,@*)<-_d(*@\,@*)id_id`
`ex_e1(*@\,@*)<-_(-)(*@\,@*)-lparen (*@@*)ex_e2 -rparen`
`ex_e1(*@\,@*)<-_(-)(*@\,@*)-lparen (*@@*)ex_e3 -rparen`
`ex_e1(*@\,@*)<-_(-)(*@\,@*)-lparen (*@@*)ex_e4 -rparen`

st_e1(*@\,@*)<-_semi2-(*@\,@*)de_e1 SEMI
st_e1(*@\,@*)<-_If6(*@\,@*)-if(*@\,@*)ex_e1(*@\,@*)-then(*@@*)st_e1 -else st_e1

id_id <-_id-(*@\,@*)INT

`ex_e2(*@\,@*)<-_+(*@\,@*)ex_e2(*@\,@*)-plus(*@\,@*)ex_e3`
`ex_e2(*@\,@*)<-_int(*@\,@*)INT`
`ex_e2(*@\,@*)<-_d(*@\,@*)id_id`
`ex_e2(*@\,@*)<-_(-)(*@\,@*)-lparen (*@@*)ex_e2 -rparen`
`ex_e2(*@\,@*)<-_(-)(*@\,@*)-lparen (*@@*)ex_e3 -rparen`
`ex_e2(*@\,@*)<-_(-)(*@\,@*)-lparen (*@@*)ex_e4 -rparen`

ex_e3(*@\,@*)<-_*(*@\,@*)ex_e3(*@\,@*)-star(*@\,@*)ex_e4
ex_e3(*@\,@*)<-_int(*@\,@*)INT
ex_e3(*@\,@*)<-_d(*@\,@*)id_id
ex_e3(*@\,@*)<-_(-)(*@\,@*)-lparen (*@@*)ex_e3 -rparen
ex_e3(*@\,@*)<-_(-)(*@\,@*)-lparen (*@@*)ex_e4 -rparen

ex_e4(*@\,@*)<-_int(*@\,@*)INT
ex_e4(*@\,@*)<-_d(*@\,@*)id_id
ex_e4(*@\,@*)<-_(-)(*@\,@*)-lparen (*@@*)ex_e4 -rparen
\end{lstlisting}
\subcaption{\footnotesize{Cross product result of $\Delta_{\text{g}}$ and $\Delta_{\text{r}}$.}}
\label{fig:intersect-a}
\end{minipage}
\begin{minipage}{0.42\textwidth}
\begin{lstlisting}[style=HL,numbers=none]
(*@@*)
(*@@*)
(*@@*)
(*@@*)
(*@@*)
(*@@*)
(*@@*)
(*@@*)
(*@@*)
st0(*@\,@*)<-_semi2-(*@\,@*)decl SEMI
st0(*@\,@*)<-_semi2-(*@\,@*)decl SEMI
st0(*@\,@*)<-_If4(*@\,@*)-if(*@\,@*)ex0 -then(*@@*)st0
st0(*@\,@*)<-_If6(*@\,@*)-if(*@\,@*)ex1 -then(*@@*)st1 -else st1

decl(*@\,@*)<-_tint4-(*@@*)-tint(*@@*)ident(*@\,\,@*)-eq(*@@*)ex0
decl(*@\,@*)<-_tint4-(*@@*)-tint(*@@*)ident(*@\,\,@*)-eq(*@@*)ex1
decl(*@\,@*)<-_tint4-(*@@*)-tint(*@@*)ident(*@\,\,@*)-eq(*@@*)ex2

st1(*@\,@*)<-_semi2-(*@\,@*)decl SEMI
st1(*@\,@*)<-_If6(*@\,@*)-if(*@\,@*)ex1(*@\,@*)-then(*@@*)st1 -else st1

ex0(*@\,@*)<-_+(*@\,@*)ex2(*@\,@*)-plus(*@\,@*)ex1
ex0(*@\,@*)<-_(-)(*@\,@*)-lparen (*@@*)ex0 -rparen
ex0(*@\,@*)<-_(-)(*@\,@*)-lparen (*@@*)ex1 -rparen
(*@\tikzmark{a}@*)ex0(*@\,@*)<-_int(*@\,@*)INT
ex0(*@\,@*)<-_d(*@\,@*)ident
ex0(*@\,@*)<-_(-)(*@\,@*)-lparen (*@@*)ex2 -rparen (*@\tikzmark{b}@*)

ex1(*@\,@*)<-_*(*@\,@*)ex1(*@\,@*)-star(*@\,@*)ex2
ex1(*@\,@*)<-_(-)(*@\,@*)-lparen (*@@*)ex1 -rparen
(*@\tikzmark{c}@*)ex1(*@\,@*)<-_int(*@\,@*)INT
ex1(*@\,@*)<-_d(*@\,@*)ident
ex1(*@\,@*)<-_(-)(*@\,@*)-lparen (*@@*)ex2 -rparen (*@\tikzmark{d}@*)

(*@\tikzmark{e}@*)ex2(*@\,@*)<-_int(*@\,@*)INT
ex2(*@\,@*)<-_d(*@\,@*)ident
ex2(*@\,@*)<-_(-)(*@\,@*)-lparen (*@@*)ex2 -rparen (*@\tikzmark{f}@*)

ident <-_id-(*@\,@*)IDENT
(*@@*)
(*@@*)
(*@@*)
(*@@*)
(*@@*)
(*@@*)
(*@@*)
(*@@*)
(*@@*)
\end{lstlisting}
\subcaption{\footnotesize{After removing duplicates and renaming states.}}
\label{fig:intersect-b}
\vspace{0.5\baselineskip}
\end{minipage}
\setlength\abovecaptionskip{5pt}
\setlength\belowcaptionskip{-20pt}
\caption{Minimisation of the cross product of transitions $\Delta_{\text{g}}$ and $\Delta_{\text{r}}$.}

\begin{tikzpicture}[remember picture,overlay]
\draw[hlBoxDotted]
  ([shift={(-3pt,1.3ex)}]pic cs:a) 
    rectangle 
  ([shift={(3pt,-0.65ex)}]pic cs:b);
\end{tikzpicture} 
\begin{tikzpicture}[remember picture,overlay]
\draw[hlBoxDotted]
  ([shift={(-3pt,1.5ex)}]pic cs:c) 
    rectangle 
  ([shift={(3pt,-0.65ex)}]pic cs:d);
\end{tikzpicture} 
\begin{tikzpicture}[remember picture,overlay]
\draw[hlBoxDotted]
  ([shift={(-3pt,1.5ex)}]pic cs:e) 
    rectangle 
  ([shift={(3pt,-0.65ex)}]pic cs:f);
\end{tikzpicture} 

\label{fig:intersect-deltas}
\end{figure}

\subsection{Repairing the Grammar by Intersecting Tree Automata}
\label{subsec:ta-intersect}
%

We now have two tree automata:
$\mathcal{A\textsubscript{g}} \eqdef (Q_{\text{g}}, \mathcal{F}, \Sigma, Q_{\text{gf}},
\Delta_{\text{g}})$ from \capcal{G} and
$\mathcal{A\textsubscript{\text{r}}} \eqdef (Q_{\text{r}}, \mathcal{F}, \Sigma, Q_{\text{rf}}, \Delta_{\text{r}})$ learned from the tree examples as well as \capcal{G}.
The outcome of our example-based ambiguity repair is captured by the
tree automaton $\mathcal{A}_{\text{res}} \eqdef (Q, \mathcal{F}, \Sigma, Q_{\text{f}}, \Delta)$ obtained as an
intersection of $\mathcal{A\textsubscript{g}}$ and
$\mathcal{A\textsubscript{\text{r}}}$.
It is defined as a tuple that consists of cross-products of each
component~\cite{Comon-et-al:tata08}: \ie, $Q = Q_{\text{g}} \times Q_{\text{r}}$,
$Q_{\text{f}} = Q_{\text{gf}} \times Q_{\text{rf}}$, and
$\Delta = \Delta_{\text{g}} \times \Delta_{\text{r}}$. Let us discuss each component for our example.

First, taking the product of the accepting states of
\capcal{A\textsubscript{g}} and \capcal{A\textsubscript{\text{r}}}
produces the new state \code{(stmt,e_0)}, which is the start for the
intersected TA.
%
Next, starting from the state \code{(stmt,e_0)} as the left-hand side
of the transition, going across \code{stmt}-producing transitions in
$\Delta_{\text{g}}$ and \code{e_0}-producing transitions in $\Delta_{\text{\text{r}}}$ reveals
that there are three constructor labels ---\code{semi2}, \code{if4},
and \code{if6}---with all the \emph{matching} right-hand sides.
Note \code{x}-producing transitions essentially refer to those rules
that transition to the state \code{x} in the TA.
%
Given a state $(\alpha_{\text{g}}, \alpha_{\text{\text{r}}})$, the
transitions $\alpha_{\text{g}}' \leftarrow \beta_{\text{g}} \in \Delta_{\text{g}}$ and
$\alpha_{\text{r}}' \leftarrow \beta_{\text{r}} \in \Delta_{\text{r}}$ (where 
$\alpha_{\text{g}}'$ and $\alpha_{\text{r}}'$ are $\alpha_{\text{g}}$ and $\alpha_{\text{r}}$
respectively, or reachable from them by epsilon transitions) are
considered \emph{matching} when the lengths (number of terminals and
states) in $\beta_{\text{g}}$ and in $\beta_{\text{r}}$ are identical
\emph{and} at each position, either both elements are states or are
\emph{the same} terminal.

For example, consider the transition for symbol \code{if4}. In
$\Delta_{\text{g}}$, the transition \code{stmt <-_if4-} \code{IF expr
THEN stmt}, matches transition \code{e_1 <-_if4- IF e_1 THEN e_1}
in $\Delta_{\text{r}}$ because their right-hand sides have the same
number of elements \emph{and} at each position, it is either both
states or the same terminal symbol.
%
On the other hand, when we look at the symbol \code{plus3}, we cannot
find a corresponding transition in $\Delta_{\text{g}}$ producing
\code{stmt}, taking \code{plus3} as a constructor label.
This results in no \code{(stmt,e_1)}-producing transition for the
symbol \code{plus3} in the tree automata intersection.

Following this intuition, taking cross-product of transitions producing \code{(stmt, e_0)} results in {\btHL[hlReach]transitions in
  light grey}~~with labels \code{semi2}, \code{if4}, \code{if6}
in~\autoref{fig:intersect-a}.
Moreover, it shows that there are following states that can be
\emph{reached} from the \code{(stmt,e_0)}-producing transitions:
\code{(decl,e_0)}, \code{(decl,e_1)}, \code{(expr,e_0)}, \code{(expr,e_1)},
\code{(stmt,e_0)} and \code{(stmt,e_1)}.
%
We look at these so-called \emph{reachable} states, 
%
%
identifying transitions that produce each of them, so that
considering those transitions might add new states to the reachable
states.
%
%
This generates the rest of the transition rules 
in~\autoref{fig:intersect-a}.

Once all the transitions producing the reachable states are
obtained, these rules are examined in order to get rid of any
duplicate states and accordingly the rules that transition to the
duplicate states.
In our running example, the states \code{(expr, e_0)}, \code{(expr,e_1)} 
and \code{(expr,e_2)} as well as \code{(decl, e_0)} and \code{(decl, e_1)} 
are identified as duplicates. We highlight the former {\btHL[hlDup]sets 
of transitions in darker grey}, which all have identical right-hand sides.
Hence, we remove two of them---let's say, \code{(expr,e_1)} and
\code{(expr,e_2)}---and their transition rules, while also replacing
all their occurrences with \code{(expr,e_0)}. We do this also for
\code{(decl,e_0)} and \code{(decl,e_1)}.
%
%
After removing these duplicates and renaming the states, we obtain a
smaller, yet equivalent, set of transitions
in~\autoref{fig:intersect-b}.
%

Lastly, we introduce \code{eps1}-transitions to further simplify the
resulting transitions.
%
For example, looking at the \code{ex_2}-producing transitions,
\code{ex_0}- and \code{ex_1}-producing transitions repeat them, as
indicated by the \tikzmark{g}transitions in dotted boxes\tikzmark{h}
in~\autoref{fig:intersect-b}. 
\begin{tikzpicture}[remember picture,overlay]
\draw[hlBoxDotted]
  ([shift={(-1pt,2ex)}]pic cs:g) 
    rectangle 
  ([shift={(1pt,-0.65ex)}]pic cs:h);
\end{tikzpicture} 
%
%
%
That is, the transition rules take the same set of tree-constructor
labels and respectively transition to the same set of right-hand
sides. 
Hence, we can simplify \code{ex_0}- and \code{ex_1}-producing transitions by
respectively adding the rules \code{ex_0 <-_eps1- ex_2} and \code{ex_1 <-_eps1-
ex_2}, while removing the repeated transitions.
We repeat this process for all the states and their transitions,
resulting in a final TA $\mathcal{A}\textsubscript{res}$
in~\autoref{fig:ta-intersection-result}.
The obtained tree automaton $\mathcal{A}_{\text{res}}$
(\autoref{fig:ta-intersection-result}) is converted back to its
corresponding CFG trivially, by un-labeling the
transitions.\footnote{Since $\epsilon$-transitions don't consume tree
  nodes (represented with symbols), they \textit{cannot} be trivially unlabelled, as they lead to 
  productions in trees that weren't in the original grammar. Instead, 
  the unit productions after unlabelling need to be removed, to produce
  a new CFG. In \menhir, we circumvent this with semantic actions in 
  curly braces, (\eg \texttt{expr0 -> expr1 \{ \$1 \}}) with which we can 
  avoid having such productions appear in the final AST.}
%
%
This grammar is then passed back to \tool. If there are any remaining
ambiguities in it, then \tool performs all the previous steps again.
It stops running when there are no more ambiguities in the grammar,
\textit{or} any remaining ones are outside the scope of \tool or
\menhir, details of which we discuss in~\autoref{subsec:termination}.
%
%
\tool eventually terminates because the original CFG and tree automata
constructed from it as well as its ambiguities are finite, while the
number of ambiguities (and hence the number of parse trees admitted by
the CFG) decreases with each run of \tool
(\autoref{subsec:termination}).
%
%
We present our intersection algorithm in~\autoref{subsec:intersect}.

\begin{figure}[t!]
\begin{lstlisting}[basicstyle=\linespread{1.0}\ttfamily\footnotesize,belowskip=-0.3\baselineskip,numbers=none]
Q = { stmt0, stmt1, expr0, expr1, expr2, ident }
Q_f = { stmt0 }
Del = 
\end{lstlisting}
\centering
\begin{subfigure}{0.57\linewidth}
\begin{lstlisting}[basicstyle=\linespread{1.0}\ttfamily\footnotesize,numbers=none]
{ stmt0(*@\,@*)<-_if4  (*@\,@*)IF expr0 THEN stmt0
  stmt0(*@\,@*)<-_εs (*@\,\,@*)stmt1
  stmt1(*@\,@*)<-_semi   (*@\,\,@*)decl SEMI 
  stmt1(*@\,@*)<-_if6  (*@\,@*)IF expr0 THEN stmt1 ELSE stmt1
  decl(*@\,@*)<-_tint   (*@\,@*)TINT ident EQ expr0
  ident(*@\,@*)<-_id   (*@\,\,@*)IDENT
  (*@@*)
\end{lstlisting}
\end{subfigure}
\begin{subfigure}{0.42\linewidth}
\begin{lstlisting}[basicstyle=\linespread{1.0}\ttfamily\footnotesize,numbers=none]
expr0(*@\,@*)<-_+s   (*@\,\,@*)expr0 PLUS expr1
expr0(*@\,@*)<-_εs (*@\,\,@*)expr1
expr1(*@\,@*)<-_*s   (*@\,\,@*)expr1 STAR expr2
expr1(*@\,@*)<-_εs (*@\,\,@*)expr2
expr2(*@\,@*)<-_N  (*@\,@*)INT
expr2(*@\,@*)<-_εs (*@\,\,@*)ident
expr2(*@\,@*)<-_()  (*@\,@*)LPAREN expr0 RPAREN(*@\,\,@*)}
\end{lstlisting}
\end{subfigure}
\caption{An automaton $\mathcal{A}_{\text{res}}$ resulted from intersection of
$\mathcal{A}_{\text{g}}$ and $\mathcal{A}_{\text{\text{r}}}$.}
\label{fig:ta-intersection-result}
\end{figure}


\subsection{Putting It All Together}
\label{subsec:altogether}
The overall workflow of grammar disambiguation in \greta framework works as follows, as illustrated in~\autoref{fig:workflow}. 
%
%
If an ambiguous CFG is provided as an input, it is interpreted as a
tree automaton, and the ambiguities in it are presented to the user,
asking the user, for each of them, to select one of the two
alternative tree examples, which together represent different ways to
parse the same expression per conflict (\autoref{sec:examples}).
Once the user specifies their preferences by selecting a set of the
tree examples, \tool subsequently uses the chosen examples as well as
the precedence orders of all the alphabet symbols from the initial
grammar to learn a TA encoding the user's parsing preferences in line
with the grammar (\autoref{sec:prefs}).
Next, the two TAs are intersected to result in a new TA
(\autoref{subsec:ta-intersect}), which is then translated back to its
corresponding grammar.
If the resulting grammar is still left with any ambiguities, \tool is
run on it again until the grammar is fully disambiguated or only left
with non-addressable ones.
In this process, the user is simply involved in clarifying parsing
preferences in the form of tree examples, without having to compute TA
operations or even knowing that these operations are done at all.


\section{Grammar Repair by Example, Formally}
\label{sec:methodology}


This section describes technical details of the \greta framework. 
First, we present an algorithm for learning a TA from tree examples
and a base grammar, with its soundness guarantees
in~\autoref{subsec:learner}.
Next, we explain the algorithm we use for intersecting TAs and
show its correctness~\autoref{subsec:intersect}.

\subsection{From Tree Examples to a Tree Automaton}
\label{subsec:learner}

%
%
%
%
%
The learning process can be largely divided into two parts: (a)
learning restrictions about the associativity and precedence order
through \textsc{LearnOaOp} (\autoref{algo:oaop-learner}) from the tree
examples and (b) subsequently constructing a TA via \textsc{GenTA}
procedure (\autoref{algo:ta-learner}).
Before we describe each of the algorithms, we provide a formal 
description of tree examples and tree automata.

%
Unlike a complete parse tree, tree examples can start from any
nonterminal, and can have nonterminals at the leaves instead of terminals, 
\eg $\text{\code{PLUS}}_3(\text{\code{STAR}}_3(\text{\code{expr}}, *, \text{\code{expr}}), +, \text{\code{expr}})$.
An example tree represents a set of valid complete parse trees for which
(1) all child nonterminals in the tree example are substituted with valid
subtrees rooted at that nonterminal, and (2) the resulting tree after
substitution is a subtree of the complete CFG parse tree. Given a tree example $t$, we write
$\textit{ParseTrees}(t)$ to denote the set of parse trees it
represents. Suppose we have tree languages 
$\mathcal{L}_{\text{g}}$ representing the set of parse trees of the
grammar $\mathcal{G}$ rooted at the start nonterminal,
$\mathcal{T}^{nt}_{\text{g}}$ representing parse trees rooted at the
nonterminal $nt$ in $\mathcal{G}$, and functions
$\mathcal{T}^{nt}_{\text{g}} \to \mathcal{L}_{\text{g}}$ representing
parse tree contexts for some nonterminal $nt$ in $\mathcal{G}$. Then,
tree examples with $k$ incomplete nonterminals $nt_1, \dots, nt_k$,
rooted at $nt$ are functions
$\mathcal{T}^{nt_1}_{\text{g}} \times \dots \times
\mathcal{T}^{nt_k}_{\text{g}} \to \mathcal{T}^{nt}_{\text{g}}$, and
$\textit{ParseTrees}(t)$ is defined as:
\begin{align*}
\textit{ParseTrees}(t) \eqdef \{ t' \mid {}& \exists~nt, nt_1, \dots, nt_k,\, \exists~C \in \mathcal{T}^{nt}_{\text{g}} \to \mathcal{L}_{\text{g}}, t \in \mathcal{T}^{nt_1}_{\text{g}} \times \dots \times \mathcal{T}^{nt_k}_{\text{g}} \to \mathcal{T}^{nt}_{\text{g}},\\
& \exists~t_1 \in \mathcal{T}^{nt_1}_{\text{g}}, \dots, t_k \in \mathcal{T}^{nt_k}_{\text{g}} .\; t' = C[t(t_1, \dots, t_k)] \}
\end{align*}

\tool currently supports tree examples that use exactly two
productions, which covers the vast majority of ambiguities in
practice. We define some convenient notation to work with these
examples: given a tree example $t$, $t_T$ denotes the root (top)
symbol of $t$, $t_B$ denotes the nested (bottom) symbol, $t_{idx}$
denotes the index $i$ (starting from 0 on the left) of the child node
at which $t_{B}$ appears, where $0 \leq i < \textit{Rank}(t_T)$. Given
symbols $\alpha$, $\beta$ and index $i$, $\textit{Eg}(\alpha, \beta,
i)$ denotes the tree example such that $\textit{Eg}(\alpha, \beta,
i)_T = \alpha$, $\textit{Eg}(\alpha, \beta, i)_B = \beta$,
$\textit{Eg}(\alpha, \beta, i)_{idx} = i$ and all other child nodes
are wildcards. For example,
$\textit{Eg}(\text{\code{PLUS}}_3,\text{\code{STAR}}_3, 0)$ denotes
the tree example 
$\text{\code{PLUS}}_3(\text{\code{STAR}}_3(\text{\code{expr}}, *, \text{\code{expr}}), +, \text{\code{expr}})$.
An alternative to a tree example will involve the same
two symbols in a different configuration. The tree examples supported
by \tool are the ones that can be expressed with $\textit{Eg}$.

An example $t$ is an associativity-related example if 
$t_T = t_B$, and a precedence order-related example if $t_T \neq t_B$.
When a user is presented with two tree examples, the one that is 
not selected, corresponds to trees that we want to exclude from the
repaired grammar. $P^-(t)$ denotes the set of parse trees to 
remove, corresponding to a tree example $t$ not selected by the user:
\[
P^-(t) \eqdef
\begin{cases}
\textit{ParseTrees}(t) & \text{if } t_T = t_B \\
\bigcup\limits_{0 \leq i < \textit{Rank}(t_T)} \textit{ParseTrees}(\textit{Eg}(t_T, t_B, i)) & \text{otherwise}
\end{cases}
\]

In resolving precedence order related conflicts, we enforce a 
hierarcy of symbols in which a symbol of higher precedence appears
strictly deeper in the tree than a symbol of lower precedence, hence 
requiring a union of excluded parse trees for all possible child 
positions. 
%
From the set of symbols $\mathcal{S}_{\text{C}}$ involved in precedence
or associativity related conflicts, \greta constructs the following
partition of $\mathcal{S}_{\text{C}}$:
\begin{align*}
  \mathcal{S}_{\text{E}} = \{ S \subseteq \mathcal{S}_{\text{C}} \,|\,
    &\forall s_i, s_j \in S, s_i \neq s_j, \exists~n, m \in \mathbb{N} \text{ such that } \\
    &  \textit{ParseTrees}(\textit{Eg}(s_i, s_j, n)) \neq \emptyset \lor
       \textit{ParseTrees}(\textit{Eg}(s_j, s_i, m)) \neq \emptyset,~S \text{ is maximal}
\}
\end{align*}
Each set of symbols in $\mathcal{S}_{\text{E}}$ are the maximal subsets of 
$\mathcal{S}_{\text{C}}$ such that for all pairs $s_i$ and $s_j$ in 
the set, the grammar (describing language $\mathcal{L}_{\text{g}}$) 
permits trees $\textit{Eg}(s_i, s_j, n)$ \emph{or} 
$\textit{Eg}(s_j, s_i, m)$ for some child positions $n$ and $m$.

For each of these sets, \greta generates precedence-related tree
examples for the pairs $\textit{Eg}(s_i, s_j, n)$ and $\textit{Eg}(s_j,
s_i, m)$ if they can be parsed in both orders, to obtain a total order
of the symbols in the set. 
It also generates associativity-related tree examples for those in
conflict. It then understands the desired restrictions by interaction
with the user.

We have now laid out notions of tree examples, associativity, precedence,
and excluded parse trees. Our formalism for tree automata and 
conversion from CFGs to TAs is relatively standard, and can be found
in the supplementary material. We now turn to \tool's algorithm 
for TA learning.

\subsubsection{Base Precedence Order}
\label{subsec:base-precedence-order}

\autoref{algo:oaop-learner} starts by computing a set of existing
precedence orders for all the symbols in $\mathcal{F}$ in
$\mathcal{G}$, referred to as the \emph{base precedence order}
$O_{\text{bp}}$, in the following way:
\begin{itemize}
  {\item First, a \emph{level} (or the distance from the start nonterminal) 
  for each nonterminal $e \in V$ is
  \begin{itemize}
    \item $d(e) = 0$ if $e$ is a start nonterminal of $\mathcal{G}$.
    \item Otherwise, $d(e)$ is the smallest $n$ such that there exists
    a sequence of nonterminals $nt_0, \ldots, nt_{n}$ where $nt_{n} = e$,
    $nt_0$ is the start nonterminal, and for each $nt_i, nt_{i+1}$, there
    exists a production $nt_i \rightarrow \beta$ in $P$ such that $\beta$ 
    contains $nt_{i+1}$.
  \end{itemize}}
\item Next, an \emph{order} of each symbol $s$ in $\mathcal{F}$ is
  determined using the order function
  $\widehat{o} \colon \mathcal{F} \mapsto \mathbb{Z}$, defined
  $\widehat{o}(s) \eqdef d(\textit{Lhs}(\textit{Prod}(s)))$, where
  $\textit{Prod}(s)$ is the unique nonterminal of a ranked symbol,
  and $\textit{Lhs}$ denotes the left-hand side nonterminal of a
  production.
\item Lastly, $O_{\text{bp}}$ is constructed as a set
  $\{ (s, \widehat{o} (s)) \,|\, s \in \mathcal{F} \setminus 
  \mathcal{F}_{\text{tr}}\}$, where $\mathcal{F}_{\text{tr}}$ is 
  the set of trivial symbols defined below.
\end{itemize}
One can think of an order of a symbol $s$ as the shortest
distance to reach $s$ from the start nonterminal. In the absence of
cycles, a symbol at a higher order always appears deeper in the parse
tree than a symbol at a lower order. Correctly stratifying symbol
order based on user preference allows us to correct precedence 
ambiguities between different symbols and associativity ambiguities
between a symbol and itself. To account for cycles in order, we will need 
to reintroduce these cycles later in the TA learning process
described in \autoref{sec:tapref}.
%
Now, looking at the running example $\mathcal{G}$
from~\autoref{fig:cfg} with its start nonterminal \code{stmt}, we
have the following levels for the nonterminals: $0$~for \code{stmt},
1~for both \code{decl} and \code{expr}, and 2~for \code{ident}.
Based on this, we can determine orders for all the symbols. Looking at
the symbol \code{if6} and its production \code{stmt -> IF expr THEN
  stmt ELSE stmt}, for instance, the order of \code{if6} is computed
$d$(\code{stmt}) = $0$.
%
Similarly, we can compute orders for all the symbols, producing
$O_{\text{bp}}$ in~\autoref{subsec:restrictions-as-ta}.

The set of trivial symbols $\mathcal{F}_{\text{tr}}$ is a set of
$1$-arity symbols defined as follows:
\begin{align*}
\mathcal{F}_{\text{tr}} \eqdef \set{ 
~s
~\left|~
\begin{array}{l}
s \in \mathcal{F} \land \textit{Rank}(s) = 1 \land \kronecker_{s} \eqdef
  \alpha \leftarrow_{s} \beta \in \Delta \text{ where } \beta \in \Sigma~\land
\\[2pt]
\forall \, \kronecker_{s'} \in \Delta \,\textit{s.t.}\, \kronecker_{s'} = \alpha \leftarrow_{s'} \beta', \beta' \in \Sigma
\end{array}
\right.
}
\end{align*}
$\mathcal{F}_{\text{tr}}$ are symbols whose left hand side nonterminals
only lead to a single terminal symbol. Because we know that symbols in 
$\mathcal{F}_{\text{tr}}$ will not be involved in any ambiguities \wrt 
associativity or precedence order, $\mathcal{O}_{\text{bp}}$ does not 
include symbols in $\mathcal{F}_{\text{tr}}$ and the productions involving 
trivial symbols remain intact in the generated TA. Note however, that this
separate handling of trivial symbols is only an optimisation and does not
affect the correctness of the learning algorithm. It is also the only such
optimisation in \autoref{algo:oaop-learner} and \autoref{algo:ta-learner}.

The construction of the base order $O_{\text{bp}}$ is crucial for
obtaining the relative precedence orders of symbols not involved
in the tree examples. 
If a TA were to be synthesised based on tree examples alone, the
learned TA would feature only those symbols associated with the
ambiguities, which is typically a relatively small subset of the
original symbol alphabet.
At the same time, we believe that generating examples that would
collectively contain all symbols of the input grammar would be
detrimental for the usability of our tool.
This is why the examples provided by \tool only contains a subset of
the grammar's symbols, leaving aside the ones that were not involved
in the ambiguities from the original grammar. 
%
%
Preserving the precedence orders $O_{\text{bp}}$ of the original grammar
allows \tool to learn a TA involving \emph{all} precedence and associativity
conflicts efficiently, without compromising soundness \wrt its
correctness statement in \autoref{sec:soundness-ta-learning}.

\subsubsection{Computing Associativity and Precedence Order}

\begin{algorithm}[t]
\setstretch{0.95}
\SetKwInput{KwInput}{Input}
\SetKwInput{KwOutput}{Output}
\SetKwFunction{FMain}{Main}
\DontPrintSemicolon
\SetNoFillComment
\footnotesize

  \KwInput{tree examples $T^{-}$, order to ordered symbols map $M_{\text{to}}$, input CFG $\mathcal{G}$}
  \KwOutput{associativity $O_{\text{a}}$, precedence order $O_{\text{p}}$}
  
  $O_{\text{bp}} \leftarrow \text{ base precedence order of } \mathcal{G}$\;
  $O_{\text{a}}, O_{\text{p}} \leftarrow \{\};\, O_{\text{tmp}} \leftarrow O_{\text{bp}}$ 
  \tcp{temporary set initialised with all elements in $O_{\text{bp}}$}
  \For {$t \in T^{-}$}{
    \If {$t_T = t_B$} {
      $O_{\text{a}} \leftarrow O_{\text{a}} \cup \{(t_T, t_{idx})\}$
    }
  }
  \For { \text{descending} $(o, G)$ in $M_{\text{to}}$ } {
    $\text{size} \gets \text{maxSize}(G)$\\
    $S = \text{symbols}(\text{ofOrder}(O_{\text{tmp}}, o)) \setminus \bigcup\limits_{g \in G} g $\\
    $O_{\text{tmp}} \gets \text{pushN}(O_\text{tmp}, o + 1, size - 1) \setminus \text{ofOrder}(O_{\text{tmp}}, o)$\\
    \For {i in [$0$, size)} {
      $\text{ithSymbols} \gets \text{getAtIndex}(G, i)$\\
      $O_{\text{tmp}} \gets O_{\text{tmp}} \cup \text{withOrder}(S \cup \text{ithSymbols}, o + i)$\\
      \If {$i = size - 1 \land \exists~s \in \text{ithSymbols}, (s,\_) \in O_{\text{a}}$} {
        $O_{\text{tmp}} \gets \text{pushN}(O_\text{tmp}, o + i + 1, 1)$\\
        $O_{\text{tmp}} \gets O_{\text{tmp}} \cup \text{withOrder}(S, o + i + 1)$
  }
    }
  }
  $O_{\text{p}} \leftarrow O_{\text{tmp}}$ \;
  \Return{} $(O_{\text{a}}, O_{\text{p}})$
\caption{\textsc{LearnOaOp}: learning associativity and precedence orderings.}
\label{algo:oaop-learner}
\end{algorithm}

As illustrated in~\autoref{algo:oaop-learner}, the
{\small\textsc{LearnOaOp}} procedure takes tree examples \emph{not}
chosen by the user $T^{-}$, the input CFG $\mathcal{G}$ and a map
$M_{\text{to}}$ from order to a set of ordered sets of symbols from
$\mathcal{S}_{\text{E}}$ that are at that order, from lowest to highest
precedence, inferred from $T^{-}$. \footnote{ Since all pairs of
symbols in the sets of $\mathcal{S}_{\text{E}}$ can be in precedence
conflict (\ie parsed in either order), they must have the same 
order (allowing us to use it to index $M_{\text{to}}$) or are at adjacent
orders. If they are at adjacent orders, it is sound to merge the symbols at 
the two orders into one. } 
Intuitively, we care about negative tree examples in particular because 
the goal of the learning algorithm is to \emph{exclude} undesirable patterns
appearing anywhere in the parse tree, as opposed to preserving desired
patterns. This is made clear in the description of the algorithm's 
correctness in \autoref{sec:soundness-ta-learning}.
The algorithm then returns restrictions on associativity $O_{\text{a}}$ 
and precedence order $O_{\text{p}}$.
%
%

The goal of \autoref{algo:oaop-learner} is to deduce a set
$O_{\text{a}}$ of associativity restrictions (illegal child positions)
and a set $O_{\text{p}}$ which enforces a hierarchy of precedence
orders among conflicting symbols, while preserving other pairwise
orders from $O_{\text{bp}}$. In section \ref{sec:tapref}, each order
in $O_{\text{p}}$ will correspond to states in the learned TA, where
some state $e_i$ corresponding to order $i$ can be reached by epsilon
transitions from state $e_j$ for $j > i$. 
With $O_{\text{bp}}$, {\small\textsc{LearnOaOp}} collects restrictions 
related to associativity and precedence order, respectively in 
$O_{\text{a}}$ and $O_{\text{p}}$.

First, each tree is examined to check if it is an associativity
or precedence order-related example. If it is an associativity-related
example, then we store in $O_{\text{a}}$ a pair of the symbol and the 
position of the nested symbol we want to disallow.
For example, in~\autoref{fig:user-interact-a}, the tree \emph{not}
selected contains symbol \code{plus3} as a right child (position 1).
Then a pair $($\code{plus3}$,\, 1)$ is added to the set
$O_{\text{a}}$.
Combined with the tree selected in~\autoref{fig:user-interact-b}, the
user interaction produces a set $O_{\text{a}} = \{ ($\code{plus3}$,\,
1), ($\code{star3}$,\, 1)\}$.

Next, the algorithm iterates over $M_{\text{to}}$ in descending order
of the symbol orders. For each order $o$ and corresponding set of
ordered sets of symbols $G$, it first computes the size of the largest
set in $G$. It then stores in $S$ the set of symbols at order $o$ that
are not involved in any conflict. Then, it removes the symbols at
order $o$ from $O_{\text{tmp}}$ and calls \code{pushN}, which increments the
orders of all symbols of order $\geq o + 1$ by $size - 1$, to make
room for the new symbols.

Then, for each order $o + i$ from $o$ to $o + size - 1$, it inserts
the $i$-th set of symbols in $G$, and the auxiliary set of symbols $S$
that are not involved in conflicts. \code{getAtIndex} returns the
symbols at index $i$ in every set in $G$, if it exists, and
\code{withOrder} assigns the specified order to the provided symbols.

As an example, a map $M_{\text{to}} = \{ 0 \rightarrow (\text{\code{star3}, \code{plus3}}) \}$ 
would update $O_{\text{tmp}}$ as follows:

\begin{lstlisting}[basicstyle=\linespread{1.0}\ttfamily\footnotesize,numbers=none]
                   {(*@\,@*)((SEMI,2),0), ((PLUS,3),0), ((STAR,3),0)(*@\,@*)}
                                        _downarrow
             {(*@\,@*)((SEMI,2),0), ((STAR,3),0), ((SEMI,2),1), ((PLUS,3),1)(*@\,@*)}
\end{lstlisting}

\noindent
Here, the set $S = \{ $\text{\code{semi2}}$ \}$,
\code{getAtIndex}($G, 0$) is \code{star3} and
\code{getAtIndex}($G, 1$) is \code{plus3}.

This procedure preserves precedence order relations between pairs 
of symbols that are not involved in the tree examples.
Finally, for all associativity-related symbols in $O_{\text{a}}$, 
we want that if it is present at level $o$, then the non-conflicting 
symbols at that level are also present at level $o+1$, for later 
TA construction. This is automatically ensured for orders $o$ to 
$o + size - 2$ in the algorithm, but might require an additional 
push and insert operation for order $o + size - 1$.
In addition, \textsc{LearnOaOp} repeatedly shifts and reassigns orders
while reinserting conflicting symbols, resulting in
$O(|\mathcal{F}|^{2})$ time and $O(|\mathcal{F}|)$ additional space
where $|\mathcal{F}|$ refers to the size of $\mathcal{F}$.
In the above where \code{plus3} and \code{star3} are involved in
associativity conflicts, the final set $O_p$ is:

\begin{lstlisting}[basicstyle=\linespread{1.0}\ttfamily\footnotesize,numbers=none]
       {(*@\,@*)((SEMI,2),0), ((STAR,3),0), ((SEMI,2),1), ((PLUS,3),1), ((SEMI,2),2)(*@\,@*)}
\end{lstlisting}
\vspace*{-2pt}

%
%

\begin{algorithm}[!t]
\SetKwInput{KwInput}{Input}
\SetKwInput{KwOutput}{Output}
\SetKwFunction{FMain}{Main}
\DontPrintSemicolon
\SetNoFillComment
\footnotesize

  \KwInput{$O_{\text{a}}$ associativity set, $O_{\text{p}}$ precedence order set, $\mathcal{G} = (V, \Sigma, S, P)$ input CFG}
  \KwOutput{\capcal{A} $= (Q, \mathcal{F}, \Sigma, Q_{f}, \Delta)$ a tree automaton}

  $m \leftarrow \text{max order of } O_{\text{p}}; \,\, Q \leftarrow \{ e_{0},
  \,\ldots,\, e_{m} \}; \,\, Q_{f} \leftarrow \{ e_{0} \}$ \\ 

  $\mathcal{F} \leftarrow \text{ranked symbols of } \mathcal{G};
  \,\,\mathcal{F}_{\text{tr}} \leftarrow \text{trivial symbols of } \mathcal{G}$ \\ 
  $\kronecker_{\mathcal{F}} \leftarrow \kronecker\text{-generator } w.r.t.
  \text{productions } P$ \\

  \tcc{Transitions for non-trivial symbols}
  \For {$(s, i) \in O_{\text{p}}$} {
    \If {$\exists p, (s, p) \in O_{\text{a}}$} { 
      \tcp{$\,\kronecker_{s} \eqdef e_{i} \leftarrow_{s} \,\, \dots e_{i+1} \dots$
      ($e_{i+1}$ at position $p$ in the RHS, with $n$ leading and $m$ trailing
      nonterminals respectively)}
      $\Delta \leftarrow \Delta \cup \{
      \kronecker_{\mathcal{F}}(e_{i}, [e_{i};\dots e_{i+1};\dots e_{i}], s ) \}$ 
      \tcp{$n$ leading and $m$ trailing $e_i$'s}
    } \Else { 
      $\Delta \leftarrow \Delta \cup \{ \kronecker_{\mathcal{F}}(e_{i}, \widebar{e_{i}}, s ) \}$
    }
  }
  \tcc{Transitions for trivial symbols}
  \For {$s' \in \mathcal{F}_{\text{tr}} $} {
    $Q \leftarrow Q \cup \{ e_{s'} \} ; \,\, \Delta \leftarrow \Delta \cup \{
    \kronecker_{\mathcal{F}}(e_{s'}, [], s' ) \}$
    \tcp{$i.e., \,\kronecker_{s'} \eqdef e_{s'} \leftarrow_{s'} \alpha$}
  }
  \tcc{Transitions for connecting ordered states}
  \For {$i \in [0..m-1]$} {
    $\Delta \leftarrow \Delta \cup \{\kronecker_{\mathcal{F}}(e_{i}, [e_{i+1}],
    (\epsilon,1) )\}$
    \tcp{$i.e., \,\kronecker_{(\epsilon,1)} \eqdef e_{i}
    \leftarrow_{(\epsilon,\, 1)} e_{i+1}$}
  }
  \tcc{Handle cycles in order}
  \For {$(sl, ol), (sh, oh) \in \text{HighToLow}(G, O_{\text{p}})$} {
    $\Delta \leftarrow \Delta \cup \{ \kronecker_{\mathcal{F}}(e_{oh}, \widebar{e_{ol}}, sh) \}$
  }
  \Return{} $(Q,\, \mathcal{F},\, \Sigma,\, Q_{f},\, \Delta)$ \\
  \caption{\textsc{GenTA}: generating a TA from associativity,
    precedence, and the input grammar.}
\label{algo:ta-learner}
\end{algorithm}

\subsubsection{A Tree Automaton for Inferred Preferences}
\label{sec:tapref}

Based on the preferences specified by the associativity set
$O_{\text{a}}$, the precedence order set $O_{\text{p}}$, and the input
CFG $\mathcal{G}$, \autoref{algo:ta-learner} constructs a TA
$\mathcal{A} = (Q, \mathcal{F}, \Sigma, Q_{f}, \Delta)$ that encodes the given
parsing preferences over $\mathcal{F}$.
%
The set of its states $Q$ is initially populated with states associated
with a range of levels, from $0$ to the maximum order $m$ in
$O_{\text{p}}$, representing each level in the hierarchy of orders: 
that is, $Q \eqdef \{ e_{0}, \ldots, e_{m} \}$.
These \emph{ordered} states are used to generate transitions to encode
precedence relations among symbols specified in $O_{\text{p}}$, with
$Q_{f} \eqdef \set{e_{0}}$.
%
%

Next, we have a $\kronecker$-\emph{generator}
$\kronecker_{\mathcal{F}} \colon Q \times Q^{*} \times \mathcal{F} \mapsto \Delta$.
%
The function $\kronecker_{\mathcal{F}}$ takes the left hand side state,
right hand side list of states, and a symbol in $\mathcal{F}$ to 
produce transition with the symbols in the right positions.
In other words, given a transition structure for a symbol in
$\mathcal{F}$ obtained from the productions $P$ in $G$,
$\kronecker_{\mathcal{F}}$ replaces all the nonterminals with the left
and right hand states provided, while keeping all the terminal 
symbols.
We write $\widebar{e_i}$ to denote an appropriately lengthed list 
(dictated by the number of nonterminals in the production) 
filled with the state $e_i$.
%
%
For example, given a symbol \code{if4}, when we provide
$\widebar{e_1}$ of length $2$ as an input list of states,
$\kronecker_{\mathcal{F}}(e_1, \,\widebar{e_1}, \,$\code{if4}$)$
returns \code{e_1 <-_if4- IF e_1 THEN e_1} which is a result of
replacing all the nonterminals from the original production \code{stmt
-> IF expr THEN stmt} with a state \code{e_1} while keeping all the
terminals \code{IF} and \code{THEN} intact.\footnote{In this case, we
replace all the nonterminals---\texttt{stmt}, \texttt{expr}, and
\texttt{stmt}---with \code{e_1} as it is listed as the only new state
to replace the old states and there is no nonterminal associated with
the symbols in $\mathcal{F}_{\text{tr}}$. Note that if the input list
is of length $> 1$, we replace each old state (excluding the states
associated with trivial symbols) with the new states in the input list
starting from the head element consecutively. For example, given
structure of \texttt{(PLUS, 3)}-transition \texttt{expr}
\code{<-_plus3-} \texttt{expr PLUS expr}, applying the
$\kronecker$-generator $\kronecker_{\mathcal{F}}(e_2,[ e_2; e_3 ],
$\texttt{(PLUS, 3)}$)$ produces \code{e_2} \code{<-_plus3-} \code{e_2}
\texttt{PLUS} \code{e_3}. }
%
%
This procedure of retrieving a template (predefined transition rule)
from a map and replacing certain (nonterminal) placeholders with
provided contents (new nonterminals) is comparable to template-based 
code generation~\cite{Syriani:CLSS2018}.

For each $(s, i) \in O_{\text{p}}$, if there exists a pair $(s, p) 
\in O_{\text{a}}$ (i.e. $s$ is involved in an associativity conflict),
then a transition rule $\kronecker_{\mathcal{F}}(e_{i}, [e_{i}; \dots 
e_{i+1}; \dots e_{i}], s)$ is added to $\Delta$, where there the list
of right hand side states include $n$ leading $e_{i}$'s then $e_{i+1}$
followed by $m$ trailing $e_{i}$'s. This corresponds to $e_{i+1}$
appearing at a postition $p$ in the right hand side of the production
with $n$ leading nonterminals. This ensures that child symbol does 
\emph{not} appear at the $p$-th position.
For example, $($\code{plus3}$,\, 1)$ in $O_{\text{a}}$ along
with $($\code{plus3}$,\, 2)$ in $O_{\text{p}}$ adds a rule
$\kronecker_{\mathcal{F}}(e_{2}, [e_{2}; e_{3}], s) \eqdef$ \code{e_2
<-_plus3- e_2 *plus* e_3} to $\Delta$. 

When a symbol $s$ (at order $i$) is not contained in $O_{\text{a}}$
then we simply add $\kronecker_{\mathcal{F}}(e_{i}, \widebar{e_i},
s)$. For example, $($\code{if4}$,\,1)$ in $O_{\text{p}}$ results in
$\kronecker_{\mathcal{F}}(e_{1}, \widebar{e_1}, \,$\code{if4}$)
\eqdef$ \code{e_1 <-_if4- IF e_1 THEN e_1} being added to $\Delta$.

Next, for each symbol $s'$ in $\mathcal{F}_{\text{tr}}$ and its
associated state $e'$, we add $\kronecker_{\mathcal{F}}(e', [],
s')$ to $\Delta$ and $e'$ to $Q$, thus allowing a transition such as
\code{ident <-_ident1- IDENT} to be included in $\Delta$.
%
Following the addition of rules for the trivial symbols, we connect
the states of different levels, by generating \code{eps1}-transitions
\emph{consecutively}, from $e_0$ to $e_1$, $e_1$ to $e_2$, and so on
until $e_{m-1}$ to $e_{m}$.

\begin{wrapfigure}[10]{r}{0.43\textwidth}
\vspace{-10pt}
  \centering
  \includegraphics[width=0.42\textwidth]{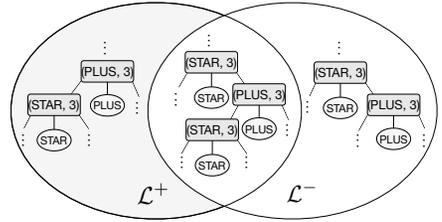}
  \caption{Trees in $\mathcal{L}^{+}$ and $\mathcal{L}^{-}$.}
  \label{fig:lang-diagram}
\end{wrapfigure}

To handle cycles in the order of symbols correctly, we also need to
add transitions that allow for these cycles in the learned automata.
Consider symbols $s_l$ and $s_h$ where $s_h$ is at a higher order in
$O_{\text{bp}}$ than $s_l$, with productions $e_l \rightarrow_{s_l}
\beta_l$ and $e_h \rightarrow_{s_h} \beta_h$ in $\mathcal{G}$ such that 
$e_l \in \beta_h$---i.e., $s_l$ can appear deeper in a parse tree than
 $s_h$. Intuitively, such circularity in symbol order needs to be
reintroduced in the automaton. The function \textit{HighToLow}($G$,
$O_\text{p}$) identifies such pairs and returns $(s_h, o_h), (s_l,
o_l)$ where $o_h$ is the highest order of $s_h$ and $o_l$ is the 
lowest order of $s_l$ in $O_\text{p}$. For each such pair, we add 
the transition $\kronecker_{\mathcal{F}} (e_{o_h}, \widebar{e_{o_l}}, 
s_h)$ connecting them.


Lastly, the transition rules in $\Delta$ along with $Q$,
$\mathcal{F}$, and $Q_{f}$ result in a TA, as shown
in~\autoref{fig:ta-restrictions}.
Given that $|\mathcal{F}|$ is the total number of symbols,
\textsc{GenTA} has time and space complexity of
$O(|\mathcal{F}|)$. 
Details of the complexity analysis can be found in the supplementary
material.

\subsubsection{Soundness of Tree Automata Learning}
\label{sec:soundness-ta-learning}

\autoref{algo:ta-learner} is sound in the following sense: a tree
automaton it constructs faithfully encodes associativity and
precedence order, and does not lose unrelated information from 
the input grammar~$\mathcal{G}$.
Given trees $T^{+}$ selected by the user, and trees $T^{-}$ 
not selected by the user, we define the tree languages:
$\mathcal{L}^{+} \eqdef \bigcup\limits_{t \in T^{+}} \text{\textit{ParseTrees}}(t)$ and
$\mathcal{L}^{-} \eqdef \bigcup\limits_{t' \in T^{-}} P^{-}(t')$.
$L^{+}$ is the set of all parse trees described by the selected tree
examples, and $L^{-}$ is the set of trees that are excluded by \tool,
which strictly enforces symbol hierarchy as described in \autoref{subsec:learner}.
These sets can have an overlap, as illustrated in \autoref{fig:lang-diagram}.
We want our learned automaton to reject all trees described by $L^{-}$, which 
will also exclude this intersected region of $L^{+}$.

With these definitions in hand, we are ready to state our main
soundness result.


%


\begin{theorem}[Soundness of \textsc{GenTA}]
  \label{thm:soundness-learner}
  Let $\mathcal{A}_{\text{r}}$ be the finite TA returned
  by~\autoref{algo:ta-learner} and $\mathcal{A}_{\text{g}}$ be the TA 
  derived from CFG $\mathcal{G}$, $T^{-}$ be the negative tree examples.
  Further, $\mathcal{L}_{\text{r}} = L(\mathcal{A}_{\text{r}})$, 
  $\mathcal{L}_{\text{g}} = L(\mathcal{A}_{\text{g}})$,
  and $\mathcal{L}^{-} = \bigcup\limits_{t \in T^{-}} P^{-}(t)$.
  %
  Then, $\mathcal{L}_r \supseteq \mathcal{L}_{\text{g}} \setminus
  \mathcal{L}^-$, and $\mathcal{L}_r \cap \mathcal{L}^- = \emptyset$.
\end{theorem}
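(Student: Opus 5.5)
Both inclusions rest on a single invariant that links the orders in $O_{\text{p}}$ to the ordered states $e_0,\dots,e_m$ of $\mathcal{A}_{\text{r}}$. \textbf{(I)}: for every $(s,i)\in O_{\text{p}}$, \textsc{GenTA} produces an $s$-transition whose target is $e_i$, and $e_j$ reaches $e_i$ through the chain of $\epsilon$-transitions $e_{j}\leftarrow e_{j+1}$ exactly when $j\le i$. Hence a subtree whose root symbol $s$ has order $i$ can be assigned state $e_j$ for any $j\le i$, and for no $j>i$ unless a \textit{HighToLow} edge is used. I will check (I) directly against the loops of \autoref{algo:ta-learner}. I will also record what \autoref{algo:oaop-learner} guarantees, reading it off the pushN/withOrder steps:
\begin{itemize}
\item[(a)] every non-trivial symbol of $\mathcal{F}$ keeps at least one order;
\item[(b)] if $\textit{Eg}(s_h,s_l,\cdot)\in T^-$ is a precedence example, where the user chose $s_l$ below $s_h$, then every order of $s_l$ is strictly smaller than every order of $s_h$;
\item[(c)] for every pair $(s,p)\in O_{\text{a}}$, the non-conflicting symbols at order $o$ reappear at order $o+1$;
\item[(d)] relative orders from $O_{\text{bp}}$ between unrelated symbols are preserved.
\end{itemize}

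\textbf{First part, $\mathcal{L}_{\text{g}}\setminus\mathcal{L}^-\subseteq\mathcal{L}_{\text{r}}$.} I will prove by structural induction on a parse tree $t\in\mathcal{L}_{\text{g}}\setminus\mathcal{L}^-$ a stronger statement: every subtree with root symbol $s$ admits a run of $\mathcal{A}_{\text{r}}$ ending in $e_{i}$, where $i$ is the largest order of $s$ in $O_{\text{p}}$ that is compatible with its parent. Trivial symbols are handled by their dedicated states $e_{s'}$.

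For the inductive step at a node labelled $s$ with order $i$, each child position must accept a state $e_i$, or $e_{i+1}$ at the associativity position $p$. Each child has some root symbol $c$. There are three cases.
\begin{itemize}
\item If $c$ is unrelated to $s$ in the conflict sets, then by (d) and the definition of $O_{\text{bp}}$ (child nonterminals have level at least that of the parent), $c$ has an order at least $i$, and the $\epsilon$-chain lowers its state to $e_i$.
\item If $c$ and $s$ are in precedence conflict, then since $t\notin\mathcal{L}^-$ the pattern $\textit{Eg}(s,c,\cdot)$ with the excluded orientation does not occur. So $c$ is the higher-precedence symbol and, by (b), has a larger order.
\item If $c=s$ at position $p$ with $(s,p)\in O_{\text{a}}$, then $t\notin\mathcal{L}^-$ forbids this. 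Any other symbol at position $p$ needs an order at least $i+1$, which (c) supplies.
\end{itemize}
Children whose symbol has a lower base order than $s$ (grammar cycles, e.g.\ parentheses or \code{if} inside \code{tint}) are exactly the pairs returned by \textit{HighToLow}. For these I use the extra transition $e_{o_h}\leftarrow_{s_h}\widebar{e_{o_l}}$, which is why I choose the maximal order for $s_h$ and the minimal order for $s_l$. The root is a start-symbol production, which has order $0$ or reaches $e_0$ via $\epsilon$, so it is accepted at $e_0\in Q_f$.

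\textbf{Second part, $\mathcal{L}_{\text{r}}\cap\mathcal{L}^-=\emptyset$.} Suppose $t\in\mathcal{L}^-$ is accepted. Then $t$ contains an occurrence of some $t'\in T^-$, i.e.\ a node $s_h$ with child $s_l$, which I split into two cases.
\begin{itemize}
\item \emph{Associativity case.} Here $s_h=s_l=s$ and the child is at position $p$. The only $s$-transitions read position $p$ at state $e_{i+1}$ while producing $e_i$. Inverting (I), the child $s$ would have to sit at some order $\ge i+1$, then again at some order $\ge i+2$, and so on. This contradicts finiteness unless $s$ has several orders, so I must argue from the construction that an associative symbol has a single order.
\item \emph{Precedence case.} By (b), $s_l$ only produces states $e_j$ with $j\le o(s_l)<o(s_h)$, while $s_h$ demands children at states of index $\ge o(s_h)$. $\epsilon$-transitions only lower indices, so no run exists.
\end{itemize}

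\textbf{Main obstacle.} The \textit{HighToLow} transitions lower indices as well and could reintroduce a forbidden pattern. I would try to show that they are added only for pairs $(s_h,s_l)$ that are not in a common conflict set $\mathcal{S}_{\text{E}}$; otherwise $s_l$ would be deeper in $O_{\text{bp}}$-compatible conflicts. This needs a careful look at the definition of \textit{HighToLow} and may require restricting it to non-conflicting pairs, and establishing it is the step I expect to be hardest.
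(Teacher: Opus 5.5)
Your decomposition is the paper's own. Part~1 is a local analysis of parent--child pairs: unrelated pairs keep their base order as in Lemma~\ref{lemma:preservation}, conflicting pairs are correctly oriented as in Lemma~\ref{lemma:oplr-relation}, and cycles go through \textit{HighToLow}. Part~2 splits $T^-$ into associativity and precedence examples. However, the step you call the main obstacle is a genuine gap, and the paper does not prove its way past it either. The ($\Leftarrow$) direction of Lemma~\ref{lemma:oplr-relation} is asserted only ``with the exception of introduced cycles'', and conflicting symbols involved in cycles are declared out of scope. You must adopt such a hypothesis explicitly, and the one you propose (no \textit{HighToLow} pair inside a common class of $\mathcal{S}_{\text{E}}$) is too weak. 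The transition added for $(s_h,s_l)$ has right-hand side $\overline{e_{o_l}}$, so it lowers \emph{every} child position of $s_h$, including an associativity position $p$ and any position where a lower-precedence partner of $s_h$ can sit; every such child, having order $\ge o_l$, reaches $e_{o_l}$ by $\epsilon$-steps. For instance, take $\texttt{expr}\to\texttt{LET IDENT EQ stmt IN expr}$ in conflict with \texttt{plus3}, with the user letting \texttt{let} bind tighter. The pair of \texttt{let} with a \texttt{stmt}-production is not a conflict pair, yet its transition accepts \texttt{plus3} directly as the body, which is a tree of $\mathcal{L}^-$. So you need that no symbol of $\mathcal{S}_{\text{C}}$ occurs as $s_h$, or cycle transitions that lower only the positions holding the lower-level nonterminal. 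Separately, in the associativity case drop the descent argument, which yields nothing for a two-level pattern. In \textsc{LearnOaOp} only the symbols in $S$ are replicated, while a symbol of $\mathcal{S}_{\text{C}}$ is inserted once by \texttt{getAtIndex} and afterwards only shifted by \texttt{pushN}. Hence $s$ has a single order $i$, the child $s$-node can only be assigned $e_i$, and $\epsilon$-steps, which only decrease the index, never bring it to $e_{i+1}$.

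Part~1 has a second gap, which the paper's edge-by-edge argument shares. Your fact~(d), like Lemma~\ref{lemma:preservation}, gives for each edge separately \emph{some} compatible pair of orders. A run, however, assigns one state per node, so these choices must agree along paths. Your hypothesis fixes a subtree's state from its parent, whereas the states a subtree can reach are determined by its contents. The choices really do clash. In the running example, the parse tree of \texttt{int x = (a+b)*c;} lies in $\mathcal{L}_{\text{g}}\setminus\mathcal{L}^-$, since \texttt{plus3} is not a direct child of \texttt{star3}. It also passes each of your local checks, because the parenthesis symbol has orders $2,3,4$. But the \texttt{plus3} node reaches only $e_2$. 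This forces the parenthesis node above it, whose transitions are $e_k\leftarrow \texttt{LPAREN}\ e_k\ \texttt{RPAREN}$, to $e_2$, while \texttt{star3} needs $e_3$ on its left. Nor is this a \textit{HighToLow} case under your criterion or the paper's, since parentheses and \texttt{plus3} have the same base order. To repair Part~1 you need a bottom-up invariant, e.g.\ the largest $j$ such that the subtree runs to $e_j$. You also need cycle transitions that let parenthesis-like symbols re-enter at the lowest order; compare \texttt{expr2} deriving \texttt{LPAREN expr0 RPAREN} in Figure~\ref{fig:ta-intersection-result}. You must then recheck Part~2 against those extra transitions.
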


\begin{proof}
  Provided in the supplementary material.
\end{proof}

%
The first statement asserts that $\mathcal{A}_{\text{r}}$ does not
lose any unnecessary parse trees in $\mathcal{L}_{\text{g}}$ that are 
not excluded by the negative examples in $T^{-}$ (which includes the
parse trees selected by the user).
The second statement asserts that $\mathcal{A}_{\text{r}}$ does not
accept any tree that belongs to $\mathcal{L}^{-}$, that have been 
discarded by the user.

%

%
%

%
We conclude by noting that the CFG translated from the resulted automaton
is \emph{not} readily usable as a repair solution of the original
grammar. 
That is because the TA is generated to be permissive enough to include
all the original grammar's parse trees that do not have to do with
conflicts, but this makes the TA accept even the trees that should not
be allowed.
For example, a tree for \code{*if* 1 THEN 2 PLUS 3} is accepted by
$\mathcal{A}_{\text{r}}$, whereas it is not accepted by
$\mathcal{A}_{\text{g}}$.
This leads us to discuss our next contribution, an algorithm for
intersection of $\mathcal{A}_{\text{r}}$ and $\mathcal{A}_{\text{g}}$.

\begin{figure}[t!]
\begin{minipage}{0.28\textwidth}
\begin{lstlisting}[style=HL,numbers=none]
ss:
| s ss 
|  /*  empty  */ 
s: 
| dcl SEMI  
| id EQ x SEMI
| IF LPAREN x RPAREN s
| IF LPAREN x RPAREN s ELSE s
| RETURN x SEMI
| WHILE LPAREN x RPAREN s
| LBRACE ss RBRACE
e:
| e PLUS e
| e DASH e  
| e STAR e
| id 
| in(*@t@*)  
| LPAREN e RPAREN

id: IDENT
in(*@t@*): INT 
dcl: TINT id EQ e
\end{lstlisting}
\vspace{32pt}
\subcaption{{Original grammar}}
\label{fig:grammar-before}
\end{minipage}
\begin{minipage}{0.33\textwidth}
\begin{lstlisting}[style=HL,numbers=none]
ss:
| s1 ss
| /* empty */ 
s1:
| IF LPAREN e1 RPAREN s1
| s2 
s2:
| dcl SEMI 
| id EQ e1 SEMI
| WHILE LPAREN e1 RPAREN s2 
| RETURN e1 SEMI
| LBRACE ss RBRACE 
| IF LPAREN e1 RPAREN s2 ELSE s2
e1:
| e1 PLUS e2 | e2
e2:
| e2 DASH e3 | e3 
e3:
| e3 STAR e4 | exp4
exp4:
| id
| in(*@t@*)
| LPAREN e1 RPAREN 
id: IDENT
in(*@t@*): INT
dcl: TINT id EQ e1
\end{lstlisting}
\subcaption{{Repair via~\tool}}
\label{fig:grammar-greta}
\vspace{0.5\baselineskip}
\end{minipage}
\begin{minipage}{0.37\textwidth}
\begin{lstlisting}[basicstyle=\linespread{1.0}\ttfamily\tiny,numbers=none]
(ss,e1): 
| (s,e1) (ss,e1)
(s,e1):
| (i(*@f@*),i(*@f@*)) (lparen,lparen) (e,e1) 
  (rparen,rparen) (s,e1)
(e2,e2): 
| (in(*@t@*),e3)
| (id,e3)
(ss,e1): 
| (ε,ε)
| (ε,e2)
(e,e1): 
| (e2,ε)
| (e2,e2)
| (e2,e4)
| (e2,e3)
(e,e2):
| (e,e2) (dash,plu(*@s@*)) (e,e2)
| (e,e2) (dash,dash) (e,e2)
| (e,retur(*@n@*)) (dash,e2) (e,semi)
| (e,lparen) (dash,e1) (e,rparen)
| (e,e2) (dash,dash) (e,e2)
| (e,lbrace) (dash,e1) (e,rbrace)
(s,e2):
| (retur(*@n@*),e2) (exp,plu(*@s@*)) (semi,e2)
| (retur(*@n@*),e2) (exp,dash) (semi,e2)
| (retur(*@n@*),retur(*@n@*)) (e,e2) (semi,semi)
| (retur(*@n@*),lparen) (e,e1) (semi,rparen)
: 
\end{lstlisting}
\subcaption{{Repair via classical TA intersection}}
\label{fig:grammar-txtbk}
\vspace{0.5\baselineskip}
\end{minipage}
\setlength\abovecaptionskip{5pt}
\setlength\belowcaptionskip{-15pt}
\caption{Disambiguated grammar returned by \tool vs. simple cross products.}
\label{fig:idiomatic-grammar}
\end{figure}


\subsection{Intersecting Tree Automata for Context-Free Grammars}
\label{subsec:intersect}

%
%
Our intersection algorithm ultimately computes the component-wise 
cross-products of sets of states, accepting states, and transition 
rules~\cite{Comon-et-al:tata08}. 
In the standard definition of finite tree automata (FTA) intersection,
this cross-product construction is applied uniformly to all states and
transitions, combining transitions solely based on matching arity of
ranked symbols and enumerating all resulting state tuples, regardless
of whether they can contribute to accepting run.  
%
%
\autoref{algo:intersect} departs from the textbook construction in two
key aspects that are essential in our grammar-based setting: (1) it
gives special treatment to transitions involving terminals (\ie,
producing the product of transitions not only based on matching arity,
but also the precise sequence of terminals and nonterminals on the
right-hand side), and (2) it incorporates reachability-based
optimisations directly into the construction. 
%
%
That is because (1) the standard definition does not account for
transitions involving terminals, producing a more complex grammar due
to states mapped from all the existing terminals. This helps us to
produce a more idiomatic grammar, as shown by the grammar
in~\autoref{fig:grammar-greta} as compared to the results taken from
the textbook definition of cross-products
in~\autoref{fig:grammar-txtbk}.
We introduced (2) because simply taking cross-products of each
component yields a number of states and transitions that cannot 
reach the final states, which we can avoid computing entirely,
by leveraging reachability analysis.

\begin{algorithm}[!t]
\setstretch{0.95}
\SetKwInput{KwInput}{Input}
\SetKwInput{KwOutput}{Output}
\SetKwProg{Fn}{Function}{:}{}
\DontPrintSemicolon
\SetNoFillComment
\footnotesize

  \KwInput{$\mathcal{A}_{\text{g}} = (Q_{\text{g}}, \mathcal{F}, \Sigma, Q_{\text{gf}}, \Delta_{\text{g}})$ and 
          $\mathcal{A}_{\text{r}} = (Q_{\text{r}}, \mathcal{F}, \Sigma, Q_{\text{rf}}, \Delta_{\text{r}})$ tree automata}
  \KwOutput{\capcal{A} $= (Q, \mathcal{F}, \Sigma, Q_{\text{f}}, \Delta)$ a tree automaton}
  $Q_{\text{f}} \leftarrow Q_{\text{gf}} \times Q_{\text{rf}}$ \\

  \tcc{Learn $\Delta$ \wrt reachable states}
  $Q, Q_{\text{tmp}}  \leftarrow Q_{\text{f}}$ \\ 
  \While{$Q_{\text{tmp}} \neq \emptyset$} {
    \For {$(e_{\text{g}}, e_{\text{r}}) \in Q_{\text{tmp}}$} {
      $\mathcal{F}_{\text{g}} \leftarrow \text{symbols of } e_{\text{g}} \text{ in } \Delta_{\text{g}}; \,\, \mathcal{F}_{\text{r}} \leftarrow \text{symbols of } e_{\text{r}} \text{ in } \Delta_{\text{r}}$ 
      \tcc{Includes symbols reachable by \code{eps1}-transitions}

      $\mathcal{F}' \leftarrow \mathcal{F}_{\text{g}} \cap \mathcal{F}_{\text{r}}$ \\
      \For {$s \in \mathcal{F}'$} {
          $\kronecker^{g}_{s} \leftarrow \kronecker_{s} \text{ producing } e_{\text{g}} \text{ in } \Delta_{\text{g}}; \,\, \kronecker^{r}_{s} \leftarrow \kronecker_s \text{ producing } e_{\text{r}} \text{ in } \Delta_{\text{r}}$ \\
          $\Delta \leftarrow \Delta \cup \{ \kronecker^{g}_{s} \times \kronecker^{r}_{s} \}$ \\
          $Q' \leftarrow \text{reachable states of } (e_{\text{g}}, e_{\text{r}}) \text{ in } \kronecker^{g}_{s} \times \kronecker^{r}_{s}$ \\
          $Q_{\text{tmp}} \leftarrow Q_{\text{tmp}} \cup \{ q \, | \, q \in Q' \text{ and } q \notin Q \} \backslash \{ (e_{\text{g}}, e_{\text{r}}) \}$; \,
          $Q \leftarrow Q \cup Q'$
           
      }
    }
  }

  \tcc{Remove duplicate states}
  $Q_{\text{dup}} \eqdef \textsc{FindDupStates} (Q, \Delta)$ \\

  \For {$((x_{\text{g}}, x_{\text{r}}), (y_{\text{g}}, y_{\text{r}})) \in Q_{\text{dup}}$} {
    $Q \leftarrow Q \backslash \{ (y_{\text{g}}, y_{\text{r}}) \}$ \\
    $\Delta \leftarrow \Delta \backslash \{ \kronecker \, | \, \kronecker \text{ producing } (y_{\text{g}}, y_{\text{r}}) \}$ \\
    $\text{Replace } (y_{\text{g}}, y_{\text{r}}) \text{ with } (x_{\text{g}}, x_{\text{r}}) \text{ in } \Delta$ 
  }
  
  \tcc{Introduce $\epsilon$-transitions to simplify $\Delta$}
  $L_{q} \leftarrow \text{Ordered list of } Q \text{ whose } | \kronecker | \text { is smallest to largest} $ \\
  \For {$i \in [ 0 \, .. \, |L_{q}| )$} {
    $\Delta_{i} \leftarrow \{ \kronecker \, | \, \kronecker \in \Delta \text { producing i\textsuperscript{th} } Q \text{ in } L_{q} \}$ \\
    \For {$j \in [ i+1 \, .. \, |L_{q}| )$} {
        $\Delta_{j} \leftarrow \{ \kronecker \, | \, \kronecker \in \Delta \text { producing j\textsuperscript{th} } Q \text{ in } L_{q} \}$ \\
        \If {$\text{RHS of } \Delta_{i} \subset \text{RHS of } \Delta_{j}$} {
            $\Delta_{j}' \leftarrow (\Delta_{j} \backslash \Delta_{i}) \cup \{ e_{j} \leftarrow_{(\epsilon, 1)} e_{i} \}$ \\
            $\Delta \leftarrow (\Delta \backslash \Delta_{j}) \cup \Delta_{j}'$
        }
    }
  }
  \Return{} $(Q, \mathcal{F}, \Sigma, Q_{\text{f}}, \Delta)$ 
\caption{\textsc{IntersectTA}}
\label{algo:intersect}
\end{algorithm}


%
\autoref{algo:intersect} summarises the algorithm optimised for
reachability as well as transitions involving terminals, to
efficiently compute the intersection of tree automata. 
Below, we elaborate on some of its components.
First, the algorithm computes a set of accepting states $Q_{\text{f}}$
by taking the cross-product of the sets of accepting states of the
input TAs: $Q_{\text{gf}}$ and $Q_{\text{rf}}$.
Since we are intersecting two TAs, $Q_{\text{f}}$ consists of only
one state (\ie, a pair consisting of states respectively from
$Q_{\text{gf}}$ and $Q_{\text{rf}}$).
Next, the algorithm updates a set of states $Q$ and a temporary list
of states $Q_{\text{tmp}}$ with $Q_{\text{f}}$.
Notice that if we simply take a cross-product of $Q_{\text{g}}$ and
$Q_{\text{r}}$ to populate $Q$, the resulting set $Q$ might contain
states that are not reachable, and, thus, don't have to be included in
$Q$.
Therefore, we add to $Q$ only those states that can reach the final
state in $Q_{\text{f}}$ to result in a TA.
We maintain $Q_{\text{tmp}}$ as a worklist by adding any states we
encounter for the first time and consuming the states whenever we 
generate transitions producing them.
Specifically, for all $(e_{\text{g}}, e_{\text{r}})$ in
$Q_{\text{tmp}}$, we obtain all the transitions corresponding to
$(e_{\text{g}}, e_{\text{r}})$ with labels that are intersection of
the ranked symbols which $e_{\text{g}}$ and $e_{\text{r}}$ can
respectively take in $\Delta_{\text{g}}$ and $\Delta_{\text{r}}$,
possibly with $\epsilon$-transitions.
Then, we update $Q$ and $Q_{\text{tmp}}$ with states that have not
been collected by $Q$ and are \emph{reachable} from---\ie, appearing
on the right-hand sides of the transitions from---$(e_{\text{g}},
e_{\text{r}})$, while we remove $(e_{\text{g}}, e_{\text{r}})$ from
$Q_{\text{tmp}}$.
These steps are repeated to add unseen states to $Q$ and new set of
transitions to $\Delta$ until $Q_{\text{tmp}}$ is empty.

Upon collecting all the \emph{raw} cross-products of the transitions,
we remove any duplicate states, identified via
\textsc{RemoveDupStates} in~\autoref{algo:find-dups}, and their
corresponding transitions, to reduce the TA following the idea of TA
minimisation~\cite{Comon-et-al:tata08}.
%
%
Lastly, we examine $\Delta$ again to determine which (sub)set of
transitions are repeated for states and introduce
\code{eps1}-transitions to simplify the TA further.
%
Moreover, let $|Q_{\text{g}}|, |Q_{\text{r}}|$ and
$|\Delta_{\text{g}}|, |\Delta_{\text{r}}|$ be the numbers of states
and transitions of the two input automata.
Then, \textsc{IntersectTA} costs $O((|Q_{\text{g}}| \cdot
|Q_{\text{r}}|)^{2} \cdot |\Delta_{g}| \cdot |\Delta_{r}|)$ time and
$O((|Q_{\text{g}}| \cdot |Q_{\text{r}}|)^{2} + |\Delta_{g}| \cdot
|\Delta_{r}|)$ space in the worst case.
We include relevant details of complexity analysis in the
supplementary material. 

\begin{theorem}[Correctness of \tool]
  \label{thm:correctness-greta}
  The intersection of automaton $\mathcal{A}_{\text{r}}$
  (\textsc{GenTA}'s result) 
  with $\mathcal{A}_{\text{g}}$ (the automaton derived from CFG $\mathcal{G}$)
  produces a tree automaton recognizing the language $\mathcal{L}_{\text{g}} \setminus \mathcal{L}^{-}$.
\end{theorem}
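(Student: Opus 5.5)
The argument splits into a language-level part and an algorithmic part. At the language level, we have $L(\mathcal{A}_{\text{g}}) \cap L(\mathcal{A}_{\text{r}}) = \mathcal{L}_{\text{g}} \setminus \mathcal{L}^-$, which follows from \autoref{thm:soundness-learner}. Indeed, $\mathcal{L}_{\text{g}} \setminus \mathcal{L}^- \subseteq \mathcal{L}_{\text{r}}$ gives $\mathcal{L}_{\text{g}} \setminus \mathcal{L}^- \subseteq \mathcal{L}_{\text{g}} \cap \mathcal{L}_{\text{r}}$. Conversely, $\mathcal{L}_{\text{r}} \cap \mathcal{L}^- = \emptyset$ gives $\mathcal{L}_{\text{g}} \cap \mathcal{L}_{\text{r}} \subseteq \mathcal{L}_{\text{g}} \setminus \mathcal{L}^-$. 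It therefore remains to show that \textsc{IntersectTA} (\autoref{algo:intersect}) returns an automaton $\mathcal{A}$ with $L(\mathcal{A}) = L(\mathcal{A}_{\text{g}}) \cap L(\mathcal{A}_{\text{r}})$. I will establish this phase by phase, showing that each phase preserves the accepted language.

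\textbf{Phase 1: the reachability-driven product.} Let $\mathcal{A}_\times$ be the textbook product automaton~\cite{Comon-et-al:tata08}, adapted to grammar transitions. A product transition $(q_{\text{g}},q_{\text{r}}) \leftarrow_s \beta$ exists iff $q_{\text{g}} \leftarrow_s \beta_{\text{g}} \in \Delta_{\text{g}}$ and $q'_{\text{r}} \leftarrow_s \beta_{\text{r}} \in \Delta_{\text{r}}$, where $q'_{\text{r}}$ is $\epsilon$-reachable from $q_{\text{r}}$ (and symmetrically for $q_{\text{g}}$). In addition, $\beta_{\text{g}},\beta_{\text{r}}$ must agree on terminals position-wise, and $\beta$ pairs the states. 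Since both automata share the labelled ranked alphabet $\mathcal{F}$ and $\textit{Prod}(s)$ fixes the terminal skeleton, the matching condition is automatically satisfied for equal labels. I will show by induction on tree height that a tree $t$ runs to $(q_{\text{g}},q_{\text{r}})$ in $\mathcal{A}_\times$ iff $t$ runs to $q_{\text{g}}$ in $\mathcal{A}_{\text{g}}$ and to $q_{\text{r}}$ in $\mathcal{A}_{\text{r}}$. The $\epsilon$-closure is folded into the choice of producing transition; this is the only nonstandard point, and it is handled by taking runs modulo $\epsilon$-steps. Next, the worklist loop computes exactly the transitions of $\mathcal{A}_\times$ whose left-hand state is top-down reachable from $Q_{\text{f}}$. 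I will prove an invariant: every state added to $Q$ is reachable, and every reachable state is eventually dequeued with all its producing transitions added. Termination holds because $Q \subseteq Q_{\text{g}} \times Q_{\text{r}}$ is finite. Any accepting run only visits states reachable from its root, so dropping unreachable states and transitions preserves the language.

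\textbf{Phase 2: duplicate removal.} I need a specification for \textsc{FindDupStates}: a pair is reported only if the two states have identical sets of producing right-hand sides, up to the renaming being performed. Under that specification, such states accept the same tree set; formally, the equivalence is a bisimulation-style congruence, proved by induction on height. Replacing $y$ by $x$ everywhere and deleting $y$'s transitions then maps runs to runs bijectively. The delicate case is when $y$ is final, which requires care in choosing the representative; I would require $x$ to be kept as final.

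\textbf{Phase 3: $\epsilon$-introduction.} This step replaces $\Delta_j$ by $(\Delta_j \setminus \Delta_i) \cup \{e_j \leftarrow_\epsilon e_i\}$ whenever the right-hand sides of $\Delta_i$ are contained in those of $\Delta_j$. The set of one-step derivations from $e_j$ is unchanged: the removed rules are recovered through $e_i$, and nothing new is added because $\Delta_i \subseteq \Delta_j$ as right-hand sides. The subtlety is that $\Delta_i$ itself may later be rewritten via $\epsilon$. I will argue by induction over the processing order of $L_q$ that the invariant ``the $\epsilon$-closure of each state yields exactly its original right-hand sides'' is preserved, ruling out $\epsilon$-cycles by the strict-subset test and the size ordering.

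I expect Phase 1 to be the main obstacle, specifically making the $\epsilon$-aware matching in the product precise. Transitions of $\mathcal{A}_{\text{r}}$ are reached through chains of $\epsilon$-transitions together with the \textit{HighToLow} transitions, so I first would normalise $\mathcal{A}_{\text{r}}$ to an $\epsilon$-free automaton with the same language and then run the standard product proof on that.
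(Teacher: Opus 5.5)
Your language-level step is exactly the paper's proof. The paper asserts, without further argument, that the automaton produced by the intersection accepts $\mathcal{L}_{\text{r}} \cap \mathcal{L}_{\text{g}}$, and combines this with Theorem~\ref{thm:soundness-learner} by the same set algebra you give.

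Your Phases 1--3 go beyond the paper, which never checks that \textsc{IntersectTA}'s reachability pruning, duplicate merging and $\epsilon$-introduction preserve that language. If you carry them out, your ``choose a representative'' caveat in Phase 2 is genuinely needed: \textsc{FindDupStates} reports both $(x,y)$ and $(y,x)$, and processing both pairs as written deletes both states together with all their producing transitions.
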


\begin{proof}
  Follows from Theorem \ref{thm:soundness-learner} and set intersection.
\end{proof}


\begin{algorithm}[t]
  \setstretch{0.95}
  \SetKwInput{KwInput}{Input}
  \SetKwInput{KwOutput}{Output}
  \SetKwProg{Fn}{Function}{:}{}
  \DontPrintSemicolon
  \SetNoFillComment
  \footnotesize

  \KwInput{$Q$ a set of states, $\Delta$ a set of transitions}
  \KwOutput{$Q_{\text{dup}}$ a set of state pairs}
  \For {$e_{i} \in Q$} {
      $\Delta_{i} \leftarrow \text{Transitions to } e_{i} \text{ in } \Delta$ \\
      $\Delta_{i} \leftarrow \text{Replace } e_{i} \text{ with } e_{tmp} \text{ in } \Delta_{i}$ \\
      \For {$e_{j} \in Q \backslash \{ e_{i} \}$} {
          $\Delta_{j} \leftarrow \text{Transitions to } e_{j} \text{ in } \Delta$ \\
          $\Delta_{j} \leftarrow \text{Replace } e_{j} \text{ with } e_{tmp} \text{ in } \Delta_{j}$ \\
          \If {$\Delta_{i} = \Delta_{j}$} {
              $Q_{\text{dup}} \leftarrow Q_{\text{dup}} \cup \{ (e_{i}, e_{j}) \}$
          }        
      }
  }
  \Return{} $Q_{\text{dup}}$
\caption{\textsc{FindDupStates}}
\label{algo:find-dups}
\end{algorithm}

\section{Implementation and Evaluation}
\label{sec:eval}

We implemented the proposed methodology in a tool called \tool.
\tool is written in OCaml and depends on OCaml's standard LR(1) parser
generator \menhir~\cite{Regis:Menhir2016} for identifying ambiguities
in the given grammar.
%
%
We use \menhir, as it can handle more complex grammars than
\texttt{ocamlyacc}, an LALR(1) parser generator, and produces more
detailed and comprehensible error messages for debugging faulty
grammars, making it the parser generator of choice in
OCaml~\cite{Madhavapeddy-Minsky:CUP22}. 
The initial CFG is therefore expressed in \menhir's input format.
%
Then, tree examples are created based on the set of productions that
lead to each conflict identified by \menhir.
%
%

\subsection{Interaction Design}
We designed the user interface of \tool in a way that reduces the
number of choices the user has to make, so \tool would resolve $N$
ambiguities with \emph{at most} $N$ interactions with the user.
We do so by presenting one set of tree alternatives out of the group
of involved tokens (\ie, symbols) per state, where a conflict happens,
according to \menhir.
For example, if a grammar has two precedence order conflicts, one
between \code{if4} and \code{plus3} and another between \code{if4} and
\code{star3}, \menhir reports a conflict at a state reached after
processing the \code{if4}-production where tokens involved are
\code{PLUS} and \code{STAR}.
In this case, \tool presents only one set of trees, each of which
alternate the depths of symbols, \eg, \code{if4} and \code{plus3},
respectively.

%
%
%
We aim to reduce the interaction burden on the user by \emph{not}
requiring them to choose a tree example for every single ambiguity in
the grammar.
%
%
Consider the following scenario where there are three ambiguities: one
between \code{if6} and \code{star3}, another between \code{if4} and
\code{star3}, and third one between \code{if4} and \code{if6}.
Based on our interaction design, we present the two sets of trees to
the user: one showing trees with alternating depths of symbols
\code{if6} and \code{star3} and another one with symbols \code{if4}
and \code{star3}.
Suppose the user has selected precedence relations \code{if6} $>$ \code{star3}
and \code{if4} $<$ \code{star3}.
Based on this, we can infer \code{if4} $<$ \code{if6}. Thus, \tool
ends up resolving three ambiguities with only two tree selections.
Note that if the user has chosen \code{if6} $>$ \code{star3} and 
\code{if4} $>$ \code{star3}, \tool would have required another user 
interaction.
%
Therefore, the number of prompts to the user depends not just on the
conflicts in the input CFG, but also on the set of trees selected by
the user in each interaction.
%

\subsection{Experimental Setup}
A run of \tool involves a series of user prompts, in which the user
specifies their preference between two tree examples. 
\tool then produces a new grammar by applying the methodology
from~\autoref{sec:methodology} and provides it as a new input to
\menhir. 
%
In this process, the user does not have to do or know that \tool
involves any operations on TAs. 
%
When the newly produced CFG still contains ambiguities, there are
subsequent rounds of running \tool involving user interaction until
\tool successfully resolves all the ambiguities \textit{or} left with
ambiguities not addressable by \tool.
We discuss the non-addressable ambiguities in detail
in~\autoref{subsec:scopelimits}. 
Hence, the end-to-end workflow of \greta involves a series of prompts.
Since each prompt provides two tree examples, the cumulative scenarios
become exponential in the number of prompts, which quickly gets
intractable for manual testing.
Our testing framework therefore automates this interaction with
\tname{Expect}~\cite{Libes:Expect1995}, a terminal text interface
automation tool. 
Moreover, since we do not involve real users, we present all
ambiguities in each grammar and test for all possible scenarios.
All our experiments are conducted on a commodity machine running
Ubuntu 22.04, with 16GB RAM and 16 logical CPU cores.

%

\paragraph{Benchmarks}
To evaluate our methodology, we accumulated a variety of grammars
whose sizes vary \wrt numbers of terminals, nonterminals, and
productions, as shown in the first three columns
in~\autoref{tab:grammar-stats}.
%
%
$G0$ is identical to the running example in~\autoref{fig:cfg}. 
$G1$ is similar to $G0$ but more interesting as it allows symbols for
binary operations \code{plus3} and \code{star3} to conflict with
symbols \code{if4} and \code{if6}, making it possible for the grammar
to have more ambiguities.
We obtained the next three grammars $G2$ to $G4$ from course
assignments of a popular class on compiler design.
$G2$ describes a simple boolean language, $G3$ and $G4$ define more
complex languages including while loop and various binary operations
in addition to boolean expressions.
$G5$ and $G6$ are collected from questions posted on StackOverflow. 
$G5$ is a simple grammar written for logical
expressions,\footnote{\url{https://stackoverflow.com/questions/910445/issue-resolving-a-shift-reduce-conflict-in-my-grammar}.
} whereas $G6$ describes a language for expressing constraints with
inequalities and binary
operations.\footnote{\url{https://stackoverflow.com/questions/4588397/fixing-lemon-parsing-confilcts?rq=3}.
}
$G1$ to $G6$ comes in 2 or 3 variants, $GN$a to $GN$b (or $GN$c),
containing different number/type of ambiguities to see how \tool
performs in terms of accuracy and speed with an increasing number of
ambiguities.

We compiled practical grammars of real-world languages: $G7$, $G8$,
and $G9$. $G7$ is the Michelson grammar that is used for specifying
smart contracts on the Tezos blockchain, while $G8$ is the grammar of
Kaitai, a declarative language for describing binary structures in
Tezos.\footnote{\url{https://gitlab.com/tezos/tezos/-/tree/master/client-libs/kaitai-ocaml}}
$G7$ grammar was obtained from a publicly available subset of the Michelson grammar,\footnote{\url{https://github.com/aigarashi/ocaml_of_michelson}} combined with more constructs from the Michelson reference.\footnote{\url{https://tezos.gitlab.io/michelson-reference/}}
Lastly, $G9$ is a subset of the SQL language, a standard language used
for accessing and manipulating databases.

\begin{table}[!t]
\scriptsize
\centering
\begin{adjustbox}{max width=\textwidth}
\begin{tabular}{c@{\quad}cccccccccccccc}
\toprule
 & {$|\Sigma|$} & {$|V|$} & {$|P|$}
 & {A$_{\textcolor{gray}{\scriptsize{p},{a}}}$}
 & {$\Delta$} & {Conv} & {Learn}
 & {$I^{\text{def}}_{\textcolor{gray}{\%}}$}
 & {$I^{\text{1}}_{\textcolor{gray}{\%}}$}
 & {$I^{\text{2}}_{\textcolor{gray}{\%}}$}
 & {$I^{\text{3}}_{\textcolor{gray}{\%}}$}
 & {$I^{\text{123}}_{\textcolor{gray}{\%}}$}
 & {\tikz[remember picture]\node(vline-top){};$T_{\text{man}}$}
 & {$\Delta_{\text{man}}$} \\
\midrule
$G0$ & 13 & 5 & 10
& $\AmbPA{5}{2}{2}$
& 4 & 0.06 & 1.68
& $0.62_{\textcolor{gray}{50}}$ & $3.06_{\textcolor{gray}{50}}$ & $0.91_{\textcolor{gray}{0}}$
& $0.50_{\textcolor{gray}{20}}$ & $2.21_{\textcolor{gray}{0}}$ & 
\:\, 265 & 9.5 \\
\midrule
$G1$a & 11 & 4 & 10
& $\AmbPA{4}{3}{1}$
& 4 & 0.06 & 0.75
& $0.84_{\textcolor{gray}{46}}$ & $3.53_{\textcolor{gray}{46}}$ & $1.36_{\textcolor{gray}{0}}$
& $0.67_{\textcolor{gray}{15}}$ & $3.02_{\textcolor{gray}{0}}$ & 
\:\, 163 & 6 \\
$G1$b & 11 & 4 & 10 
& $\AmbPA{7}{6}{1}$
& 7 & 0.06 & 0.93 
& $0.83_{\textcolor{gray}{47}}$ & $2.42_{\textcolor{gray}{47}}$ & $0.97_{\textcolor{gray}{0}}$ & $0.58_{\textcolor{gray}{23}}$ & $1.85_{\textcolor{gray}{0}}$ & 
\:\, 202 & 7 \\
$G1$c & $11$ & $3$ & $9$
& $\AmbPA{9}{6}{2}$
& 8 & 0.05 & 1.07
& $0.80_{\textcolor{gray}{44}}$ & $1.94_{\textcolor{gray}{44}}$ & $0.89_{\textcolor{gray}{0}}$
& $0.62_{\textcolor{gray}{6}}$ & $1.36_{\textcolor{gray}{0}}$ &
\:\, 169 & 9.5 \Bstrut \\
\hdashline\noalign{\vskip 0.5ex}
\multicolumn{3}{c}{Average} & 
&
$\AmbPA{7}{5}{1}$
& 6.3 & 0.06 & 0.92
& $0.82_{\textcolor{gray}{45}}$ & $2.63_{\textcolor{gray}{46}}$ & $1.08_{\textcolor{gray}{0}}$
& $0.62_{\textcolor{gray}{15}}$ & $2.08_{\textcolor{gray}{0}}$ &
\:\, 178 & 7.5 \\
\midrule
$G2$a & 10 & 3 & 10 
& $\AmbPA{4}{2}{2}$
& 4 & 0.07 & 0.72 
& $0.94_{\textcolor{gray}{100}}$ & $3.15_{\textcolor{gray}{100}}$ & $1.76_{\textcolor{gray}{100}}$ & $0.86_{\textcolor{gray}{100}}$ & $2.97_{\textcolor{gray}{100}}$ & 
\:\, 294 & 12 \\
$G2$b & 10 & 3 & 10 
& $\AmbPA{6}{3}{2}$ 
& 5 & 0.06 & 0.76 
& $0.68_{\textcolor{gray}{100}}$ & $1.56_{\textcolor{gray}{100}}$ & $0.83_{\textcolor{gray}{100}}$ & $0.53_{\textcolor{gray}{100}}$ & $1.26_{\textcolor{gray}{100}}$ & 
\:\, 123 & 8 \\
$G2$c & 10 & 2 & 9 
& $\AmbPA{12}{6}{3}$ 
& 9 & 0.05 & 1.20 
& $0.83_{\textcolor{gray}{100}}$ & $1.52_{\textcolor{gray}{100}}$ & $0.92_{\textcolor{gray}{100}}$ & $0.62_{\textcolor{gray}{100}}$ & $1.05_{\textcolor{gray}{100}}$ & 
\:\, 178 & 12 \Bstrut \\
\hdashline\noalign{\vskip 0.5ex}
\multicolumn{3}{c}{Average} & 
&
$\AmbPA{7}{4}{2}$
& 6.0 & 0.06 & 0.89 
& $0.82_{\textcolor{gray}{100}}$ & $2.08_{\textcolor{gray}{100}}$ & $1.17_{\textcolor{gray}{100}}$ & $0.67_{\textcolor{gray}{100}}$ & $1.76_{\textcolor{gray}{100}}$ & 
\:\, 198 & 10.7 \\
\midrule
$G3$a & 17 & 8 & 20 
& $\AmbPA{2}{1}{1}$
& 3 & 0.13 & 1.01 
& $3.39_{\textcolor{gray}{50}}$ & $10.71_{\textcolor{gray}{50}}$ & $4.16_{\textcolor{gray}{0}}$ & $2.97_{\textcolor{gray}{11}}$ & $8.75_{\textcolor{gray}{0}}$ & 
\:\, 244 & 14 \\
$G3$b & 18 & 9 & 21 
& $\AmbPA{3}{2}{1}$
& 6 & 0.10 & 1.01 
& $1.63_{\textcolor{gray}{50}}$ & $8.23_{\textcolor{gray}{50}}$ & $2.54_{\textcolor{gray}{0}}$ & $1.49_{\textcolor{gray}{11}}$ & $8.09_{\textcolor{gray}{0}}$ & 
\:\, 209 & 12.5 \Bstrut \\
\hdashline\noalign{\vskip 0.5ex}
\multicolumn{3}{c}{Average} & 
& 
$\AmbPA{2}{2}{1}$
& 4.5 & 0.12 & 1.01 
& $2.51_{\textcolor{gray}{50}}$ & $9.47_{\textcolor{gray}{50}}$ & $3.35_{\textcolor{gray}{0}}$ & $2.23_{\textcolor{gray}{11}}$ & $8.42_{\textcolor{gray}{0}}$ & 
\:\, 226 & 13.2 \\
\midrule 
$G4$a & 25 & 10 & 26 
& $\AmbPA{3}{1}{2}$
& 3 & 0.17 & 1.10 
& $2.64_{\textcolor{gray}{100}}$ & $15.20_{\textcolor{gray}{100}}$ & $3.51_{\textcolor{gray}{100}}$ & $2.63_{\textcolor{gray}{100}}$ & $8.23_{\textcolor{gray}{100}}$ & 
\:\, 123 & 3 \\
$G4$b & 25 & 8 & 24 
& $\AmbPA{12}{6}{3}$ 
& 9 & 0.12 & 1.36 
& $1.81_{\textcolor{gray}{100}}$ & $7.47_{\textcolor{gray}{100}}$ & $2.52_{\textcolor{gray}{100}}$ & $1.36_{\textcolor{gray}{100}}$ & $7.05_{\textcolor{gray}{100}}$ & 
\:\, 176 & 10.5 \\
$G4$c & 25 & 8 & 24 
& $\AmbPA{16}{6}{4}$
& 10 & 0.11 & 1.45 
& $1.67_{\textcolor{gray}{100}}$ & $7.58_{\textcolor{gray}{100}}$ & $2.35_{\textcolor{gray}{100}}$ & $1.17_{\textcolor{gray}{100}}$ & $7.13_{\textcolor{gray}{100}}$ & 
\:\, 154 & 10 \Bstrut \\
\hdashline\noalign{\vskip 0.5ex}
\multicolumn{3}{c}{Average} & 
& 
$\AmbPA{10}{4}{3}$
& 7.3 & 0.13 & 1.30 
& $2.04_{\textcolor{gray}{100}}$ & $10.09_{\textcolor{gray}{100}}$ & $2.80_{\textcolor{gray}{100}}$ & $1.72_{\textcolor{gray}{100}}$ & $7.47_{\textcolor{gray}{100}}$ & 
\:\, 151 & 7.8 \\
\midrule
$G5$a & 8 & 4 & 9 
& $\AmbPA{2}{1}{1}$
& 2 & 0.06 & 0.47 
& $0.64_{\textcolor{gray}{100}}$ & $2.09_{\textcolor{gray}{100}}$ & $0.90_{\textcolor{gray}{100}}$ & $0.49_{\textcolor{gray}{100}}$ & $1.93_{\textcolor{gray}{100}}$ & 
\:\, 80 & 4 \\
$G5$b & 8 & 4 & 9 
& $\AmbPA{2}{1}{1}$
& 3 & 0.06 & 0.61 
& $0.73_{\textcolor{gray}{100}}$ & $3.13_{\textcolor{gray}{100}}$ & $0.82_{\textcolor{gray}{100}}$ & $0.58_{\textcolor{gray}{100}}$ & $2.34_{\textcolor{gray}{100}}$ & 
\:\, 56 & 5 \\
$G5$c & 8 & 2 & 7 
& $\AmbPA{6}{3}{2}$
& 5 & 0.04 & 0.71 
& $0.49_{\textcolor{gray}{100}}$ & $0.84_{\textcolor{gray}{100}}$ & $0.58_{\textcolor{gray}{100}}$ & $0.41_{\textcolor{gray}{100}}$ & $0.71_{\textcolor{gray}{100}}$ & 
\:\, 86 & 8 \Bstrut \\
\hdashline\noalign{\vskip 0.5ex}
\multicolumn{3}{c}{Average} & 
& 
$\AmbPA{3}{2}{1}$
& 3.3 & 0.05 & 0.60 
& $0.62_{\textcolor{gray}{100}}$ & $2.02_{\textcolor{gray}{100}}$ & $0.76_{\textcolor{gray}{100}}$ & $0.50_{\textcolor{gray}{100}}$ & $1.66_{\textcolor{gray}{100}}$ & 
\:\, 74 & 5.7 \\
\midrule
$G6$a & 21 & 5 & 23 
& $\AmbPA{2}{1}{1}$
& 2 & 0.19 & 1.03 
& $2.24_{\textcolor{gray}{100}}$ & $4.80_{\textcolor{gray}{100}}$ & $2.88_{\textcolor{gray}{100}}$ & $2.33_{\textcolor{gray}{100}}$ & $5.02_{\textcolor{gray}{100}}$ & 
\:\, 84 & 5 \\
$G6$b & 21 & 5 & 23 
& $\AmbPA{14}{7}{4}$
& 11 & 0.16 & 1.75 
& $1.96_{\textcolor{gray}{100}}$ & $8.22_{\textcolor{gray}{100}}$ & $2.72_{\textcolor{gray}{100}}$ & $1.38_{\textcolor{gray}{100}}$ & $6.59_{\textcolor{gray}{100}}$ & 
\:\, 220 & 14.5 \\
$G6$c & 21 & 5 & 23 
& $\AmbPA{18}{8}{6}$
& 14 & 0.13 & 2.03 
& $2.28_{\textcolor{gray}{100}}$ & $8.32_{\textcolor{gray}{100}}$ & $3.21_{\textcolor{gray}{100}}$ & $1.54_{\textcolor{gray}{100}}$ & $6.71_{\textcolor{gray}{100}}$ & 
\:\, 266 & 17 \Bstrut \\
\hdashline\noalign{\vskip 0.5ex}
\multicolumn{3}{c}{Average} & 
& 
$\AmbPA{11}{5}{4}$
& 9.0 & 0.16 & 1.60 
& $2.16_{\textcolor{gray}{100}}$ & $7.12_{\textcolor{gray}{100}}$ & $2.94_{\textcolor{gray}{100}}$ & $1.75_{\textcolor{gray}{100}}$ & $6.11_{\textcolor{gray}{100}}$ & 
\:\, 190 & 12.2 \\
\midrule
$G7$ & 56 & 12 & 77 
& $\AmbPA{3}{2}{1}$
& 3 & 0.51 & 2.47 
& $4.79_{\textcolor{gray}{100}}$ & $24.67_{\textcolor{gray}{100}}$ & $5.60_{\textcolor{gray}{100}}$ & $4.01_{\textcolor{gray}{100}}$ & $19.45_{\textcolor{gray}{100}}$ & 
\:\, 337 & 8.5 \\
\midrule
$G8$ & 37 & 32 & 72 
& $\AmbPA{9}{3}{3}$
& 6 & 116.19 & 127.02 
& $1.92_{\textcolor{gray}{100}}$ & $5828.16_{\textcolor{gray}{100}}$ & $2.89_{\textcolor{gray}{100}}$ & $1.51_{\textcolor{gray}{100}}$ & $242.55_{\textcolor{gray}{100}}$ & 
\:\, 305 & 6.5 \\
\midrule
$G9$ & 29 & 18 & 42 
& $\AmbPA{23}{7}{9}$
& 16 & 0.24 & 2.81 
& $5.14_{\textcolor{gray}{100}}$ & $155.44_{\textcolor{gray}{100}}$ & $6.45_{\textcolor{gray}{100}}$ & $3.26_{\textcolor{gray}{100}}$ & $23.77_{\textcolor{gray}{100}}$ & 
{\tikz[remember picture]\node(vline-bot){};549} & 25 \\
\bottomrule
\end{tabular}
\end{adjustbox}
\caption{Aggregate results from \tool runs. The table presents the
    number of terminals ($|\Sigma|$), the number of nonterminals
    ($|V|$), the number of productions ($|P|$), the total number of
    ambiguities in each grammar reported by \menhir ($A_{p,a}$) where
    $p$ refers to the number of ambiguities \wrt precedence order and
    $a$ refers to the number of ambiguities \wrt associativity, the
    average number of prompts ($\Delta$), time spent in converting the
    grammar to TA in \si{\ms} (Conv), time spent for TA learning in
    \si{\ms} (Learn), TA intersection time in \si{\ms} with percentage
    of scenarios successfully disambiguated as subscript
    ($I^{\text{def}}_{\textcolor{gray}{\%}}$), TA intersection time in
    \si{\ms} without reachability-based optimisation
    ($I^{1}_{\textcolor{gray}{\%}}$), without duplicate removal
    optimisation ($I^{2}_{\textcolor{gray}{\%}}$), without epsilon
    introduction optimisation ($I^{3}_{\textcolor{gray}{\%}}$), and
    without all three optimisations
    ($I^{123}_{\textcolor{gray}{\%}}$), each with their respective
    percentage fixed as subscript, total time for manual
    disambiguation in \si{\s} ($T_{\text{man}}$), and number of
    production edits for manual disambiguation
    ($\Delta_{\text{man}}$).}
\label{tab:grammar-stats}
\end{table}
\begin{tikzpicture}[remember picture, overlay]
  \draw[gray!70, line width=0.6pt]
    ($(vline-top) + (-0.6em, 1.8ex)$)
      --
    ($(vline-bot) + (-0.6em, -0.8ex)$);
\end{tikzpicture}

\subsection{Experimental Results}
\label{subsec:results}
Aggregate results from completed runs of \tool are presented
in~\autoref{tab:grammar-stats}.
%
%
Grammars are arranged in an increasing order of size per benchmark
source, where the size can be approximated by the numbers of
productions $|P|$ as well as terminals $|\Sigma|$, and the
nonterminals $|V|$.
Variations of each grammar, should there be any, are sorted by the
number of ambiguities in an increasing order, as shown by the
$A_{p,a}$ column where $p$ is the number related to precedence order
and $a$ related to associativity. 
%
In addition to the default configuration
($I^{\text{def}}_{\textcolor{gray}\%}$), we report results under
several ablations of the intersection algorithm, shown in the
$I^{1}_{\textcolor{gray}\%}$, $I^{2}_{\textcolor{gray}\%}$,
$I^{3}_{\textcolor{gray}\%}$, and $I^{123}_{\textcolor{gray}\%}$
columns in~\autoref{tab:grammar-stats}.\footnote{The subscripted percentage {\textcolor{gray}\%} refers to success rate under each approach.
}
These ablations selectively disable optimisations introduced
in~\autoref{algo:intersect}: (i) $I^{1}$ disables
computing raw cross-products of all transitions without restricting to
reachable states via learned $\Delta$; (ii) $I^{2}$ disables
duplicate-state removal; (iii) $I^{3}$ disables
$\epsilon$-introduction; and (iv) $I^{123}$ disables all three
optimisations.
This breakdown allows us to isolate how each optimisation affects both effectiveness and efficiency of \tool. 
We address the following research questions to evaluate the results: 
\begin{itemize}
    %
\item {\tname{RQ1:}} How \emph{effective} is \tool in eliminating
  \emph{all} the ambiguities in the grammars, and what factors
  contribute to its performance?
    %
    \item {\tname{RQ2:}} How does increase in ambiguities or
    grammar size affect \emph{efficiency} of \tool?
    %
    \item {\tname{RQ3:}} How \emph{scalable} is \tool \wrt the
    input grammar size or number of ambiguities? 
\end{itemize}\vspace{-0.5\baselineskip}

\subsubsection{RQ1: Effectiveness}
\label{subsec:effectiveness}
%
%
Overall, \tool successfully repairs ambiguous grammars with an average
of $85\%$ fix rate. 
%
Seven out of ten grammars show a perfect $100\%$ fix rate, with the
lowest fix rates from $G1$ for which most failures come primarily from
the fact that \menhir, an LR(1) parser, can look ahead only \emph{one}
token at a time and is unable to distinguish differences in parsing
options that would be clear if it could look ahead further.
In other words, there are situations when \tool reports remaining
ambiguities in the repaired grammar as it relies on \menhir to
identify them, even though there are no ambiguities in the underlying
CFG. In such cases, attempts by \tool to remove forbidden trees will
not change the grammar further. 

\begin{wrapfigure}[6]{r}{.41\textwidth}
\vspace*{-15pt}
\begin{lstlisting}[basicstyle=\linespread{1.0}\ttfamily\footnotesize,numbers=none]
x2 -> IF cond THEN x2 ELSE x2
x2 -> x3
x3 -> IF cond THEN x3
 :
\end{lstlisting}
\setlength\abovecaptionskip{2pt}
\caption{Productions \wrt \code{if6} $<$ \code{if4}}
\label{fig:higherif4}
\end{wrapfigure}
These cases are reported as failures, contributing to lower fixed
rates of $G0$, $G1$ and $G3$ in~\autoref{tab:grammar-stats}.  
%
%
For example, in the case of $G0$, if a tree specifying \code{if6} $<$
\code{if4} is selected, \tool produces a grammar containing the
fragment shown in~\autoref{fig:higherif4}.
%
%
Given these productions, suppose \menhir tries to parse an expression
like \code{*if* TRUE THEN *if* TRUE THEN 1 ELSE 2}. It can be done by
first applying the rule \code{x2 -> IF cond THEN x2 ELSE x2}, and then
at the nonterminal \code{x2} after \code{THEN}, applying either (1)
the same rule \emph{or} (2) $\epsilon$-transitioning to \code{x3}
and then applying \code{x3 -> IF cond THEN x3}, even though (1) would
produce different number of \code{ELSE}s in the program. 
\begin{wrapfigure}[6]{r}{.41\textwidth}
\vspace*{-15pt}
\begin{lstlisting}[basicstyle=\linespread{1.0}\ttfamily\footnotesize,numbers=none]
x2 -> IF cond THEN x2 
x2 -> x3
x3 -> IF cond THEN x3 ELSE x3
 :
\end{lstlisting}
\setlength\abovecaptionskip{2pt}
\caption{Productions \wrt \code{if4} $<$ \code{if6}}
\label{fig:higherif6}
\end{wrapfigure}
On the other hand, if \code{if4} $<$ \code{if6} is
selected, generating transitions in~\autoref{fig:higherif6}, \menhir
no longer reports ambiguities as parsing can be done only in one way:
first, by applying \code{x2 -> IF cond x2}, then at the nonterminal
\code{x2} after \code{THEN}, $\epsilon$-transitioning to \code{x3},
and applying \code{x3 -> IF cond THEN x3 ELSE x3}.

Across all benchmarks, the success rate ({\textcolor{gray}{\%}}) of
$I^{1}_{\textcolor{gray}{\%}}$ matches that of the default
configuration ($I^{\text{def}}_{\textcolor{gray}{\%}}$). 
Although $I^{1}_{\textcolor{gray}{\%}}$ performs intersection using
unrestricted cross-products, the subsequent duplicate-state
elimination and $\epsilon$-introduction steps ensure that the
resulting automata and grammars are identical to those produced under
all optimisations. 
This indicates that optimisations~$1$ primarily improves efficiency
rather than correctness. 
In contrast, disabling duplicate-state removal
($I^{2}_{\textcolor{gray}{\%}}$) or $\epsilon$-introduction
($I^{3}_{\textcolor{gray}{\%}}$) significantly degrades effectiveness
for several grammars, notably $G0$, $G1$s, and $G3$s. 
In these cases, the resulting tree automata and CFGs remain
semantically correct, but contain redundant or overlapping states and
nonterminals that render the grammar unsuitable as input to a
deterministic LR parser generator. 
This highlights an important distinction: while tree automata and CFGs
tolerate redundancy, deterministic parsing does not. 
Consequently, optimisations such as duplicate elimination and
$\epsilon$-introduction are not merely performance improvements,
but are necessary to ensure compatibility with LR parsing, which \tool
relies on via \menhir. 
%
In addition, when all optimisations are disabled
($I^{123}_{\textcolor{gray}{\%}}$), these issues compound, leading to
consistently low success rates for the same grammar. 

For almost all benchmark data, there is at least one scenario, \ie, a
combination of user selections, which leads to elimination of all the
ambiguities.
This means, \tool can be used to offer parsing options that eventually
make the grammar conflict-free, which is potentially useful for the
users who are looking for a possible way to disambiguate the grammar.
%
It takes about $6.3$ prompts to fully fix a grammar with an average of $8$ ambiguities.
%
%
This shows the benefit of our interaction design, as it indeed
requires less than $N$ prompts to address $N$ ambiguities on average.

\subsubsection{RQ2: Efficiency} 
\label{subsec:efficiency}
%
\tool is fast in performing disambiguation across different grammars,
with an average of less than 20\si{\ms} running time (excluding the
user interaction time), which is barely noticeable to a user.
%
The time measured refers to an average amount of time it took for
successful cases, \ie, when the scenarios being tested result in
resolution of all the ambiguities. 
%
%
As shown in~\autoref{tab:grammar-stats}, the total runtime is split
into three different components, showing the time spent for conversion
(Conv), TA learning (Learn), and TA intersection with all
optimisations enabled ($I^{\text{def}}_{\textcolor{gray}{\%}}$).
%
Across most grammars, conversion contributes the least to the overall
runtime, followed by learning, with intersection typically dominating. 
%
%
%
The optimisations made in~\autoref{algo:intersect}, bring an
algorithm that is usually strictly quadratic in the number of
productions, to one that is more efficient. 
%
%
Intersection is slowest for $G7$ which has the most productions, and
for which there are many reachable states to compute.
%
Time spent for conversion and intersection appears largely unaffected
by the number of ambiguities, as shown in
in~\autoref{fig:convert-time} and \autoref{fig:intersect-time}.
%
\autoref{fig:learn-time} shows more prominent slowdown for
learning \wrt number of ambiguities, with a clearer
upward trend. 

%
$G8$, one of the larger grammars, has a significantly slower learning
and conversion time compared to the other grammars. We believe that
this is likely due to its larger number of states in combination with
a large number of productions, for which book-keeping procedures in
conversion and learning in our implementation, scale quadratically.
In contrast, the default intersection time for $G8$ remains relatively
modest. This is because intersection is computed only over reachable
state pairs, which dramatically reduces the effective search space. 
The impact of this optimisation is confirmed by the ablation results:
when reachability-based pruning is disabled ($I^{1}$ and $I^{123}$),
intersection time for $G8$ increases by several orders of magnitude. 
This shows that the reachability optimisation is important for
containing the cost of intersection in grammars with large numbers of
states and productions. 
That is, while large grammars can result in expensive conversion and
learning, the intersection optimisations are effective at preventing
state explosion from dominating total runtime. 

\subsubsection{RQ3: Scalability} 
\label{subsec:scalability}
We examine performance trends as the input grammar size (\ie, the
numbers of productions $|P|$, of terminals $|\Sigma|$, and of
nonterminals $|V|$) and ambiguity count increases to assess
scalability. 
%
Across grammar families $G1$-$G6$, increasing the number of
ambiguities has only a moderate impact on conversion or intersection
time, with a slower-than-linear upward trend in learning time, as can
be shown in~\autoref{fig:performance-metrics}.
%
%
%
This indicates that increasing the number of ambiguities has only a
moderate impact on learning time, indicating that \tool scales well
with respect to ambiguity count alone. 
%
On the other hand, grammars with large numbers of productions and
symbols or complex nonterminal structure (\eg, deeply nested or
mutually recursive nonterminals) incur higher costs, reflecting the
increased size of the induced automata. 
For instance, even for a grammar with more than 20 ambiguities (\eg, $G9$), \tool successfully eliminates all supported ambiguities within a few milliseconds. 
Moreover, \tool takes the longest on average to repair $G8$, despite
having fewer ambiguities than grammars like $G2$c, $G4$b-c, $G6$b-c
and $G9$, due to its large production set combined with its
productions containing a large number of mutually recursive
nonterminals. 
These results suggest that scalability of \tool is governed less by
the sheer number of ambiguities than by the size and the structural
complexity of the grammar.

Although these trends may suggest that \tool scales robustly in
ambiguity count and reasonably in grammar size, there is a caveat: the
times measured were obtained from successful runs, whereas a bigger
and more complex grammar is likely to contain edge scenarios which are
outside the scope of \tool or of \menhir that \tool depends on. 
This implies that scalability of \tool is subject to its scope and
limitations which we discuss further in~\autoref{subsec:scopelimits}.
%

\begin{figure*}[t]
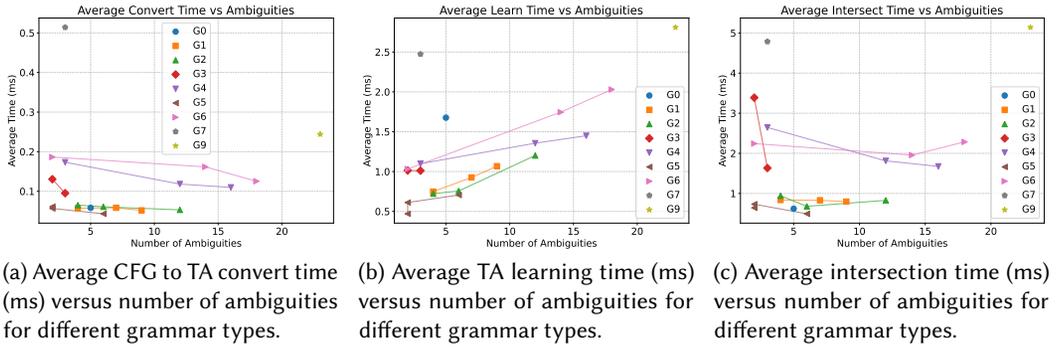

    \centering
    \begin{subfigure}[b]{0.32\textwidth}
        \centering
        \includegraphics[width=\textwidth]{figs/convert_time_vs_ambiguities_v2.pdf}
        \caption{Average CFG to TA convert time (ms) versus number of ambiguities for different grammar types.}
        \label{fig:convert-time}
    \end{subfigure}
    \hfill
    \begin{subfigure}[b]{0.32\textwidth}
        \centering
        \includegraphics[width=\textwidth]{figs/learn_time_vs_ambiguities_v2.pdf}
        \caption{Average TA learning time (ms) versus number of ambiguities for different grammar types.}
        \label{fig:learn-time}
    \end{subfigure}
    \hfill
    \begin{subfigure}[b]{0.32\textwidth}
        \centering
        \includegraphics[width=\textwidth]{figs/intersect_time_vs_ambiguities_v2}
        \caption{Average intersection time (ms) versus number of ambiguities for different grammar types.}
        \label{fig:intersect-time}
    \end{subfigure}
    
\setlength\abovecaptionskip{5pt}
\setlength\belowcaptionskip{-10pt}
    \caption{Performance metrics for various grammar types across different numbers of ambiguities. Each point represents a grammar variant, with markers indicating the grammar family.}
    \label{fig:performance-metrics}
\end{figure*}

\subsubsection{Manual Disambiguation}
\label{subsec:manual}
To assess the effort required to repair grammars \emph{without} \tool,
two of the authors independently performed manual disambiguation for
each grammar using \menhir alone. 
Manual disambiguation consisted of repeatedly inspecting \menhir's
error messages and conflict reports, editing grammar productions to
address reported ambiguities, and re-running \menhir until no further
conflicts were reported. 
Both authors were equally familiar with \menhir's input format and the
interpretation of its conflict diagnostics. 

For each grammar, we recorded the total time spent on manual repair
and the number of production edits made; the values in columns
$T_{\text{man}}$ and $\Delta_{\text{man}}$
of~\autoref{tab:grammar-stats} are the averages of the time and of
the edits made.
%
In practice, manual disambiguation takes on the order of minutes
rather than (milli)seconds and requires dozens of production edits,
even for grammars with a modest number of ambiguities.
%
Moreover, manual fixes were occasionally incorrect or overly
restrictive, unintentionally eliminating valid parses. 
Such cases required undoing or revising earlier changes, further
increasing the overall effort. 
A key contributor to this cost is the form in which ambiguities are
presented to the user: \menhir reports conflicts at the level of LR
automaton states and actions, requiring users to reconstruct the
competing parse structures and infer the intended disambiguation
before modifying the grammar. 
%
An example illustrating the user interfaces of \menhir and \tool is
provided in the supplementary, highlighting the contrast in how
ambiguities are presented to users.

\subsection{Scope and Limitations}
\label{subsec:scopelimits}

\subsubsection{\menhir vs. \tool: Scope of Ambiguities Addressed}
\label{subsec:scope}
While \tool relies on \menhir's ambiguity detection mechanisms,
\menhir reports an over-approximation of ambiguities that arise during
LR(1) parser construction, in the form of shift/reduce and
reduce/reduce conflicts. 
These conflicts may be caused by various sources such as operator
precedence and associativity, dangling-else ambiguities, grammar
underspecification, or LR(1)-specific limitations unrelated to genuine
syntactic ambiguity.
Out of these conflicts detected by \menhir as LR(1) conflicts, \tool
operates on a subset of them: \ie, ambiguities for which parsing
preferences can be captured by selecting one of the two alternative
tree examples, as described in~\autoref{subsec:learner}. 
In other words, not all conflicts reported by \menhir are addressable
by \tool, and we discuss these non-addressable conflicts in details
in~\autoref{subsec:nonaddressable}.
Accordingly, \menhir acts as a the conflict detection backend, and
\tool focuses on repairing those conflicts that can be resolved via
example-driven grammar restrictions.

\subsubsection{Termination Behaviour}
\label{subsec:termination}
\tool operates by iteratively repairing the input grammar and
re-invoking \menhir on the resulting grammar. 
The procedure terminates in one of three cases: 
(i) \menhir reports no remaining conflicts; 
(ii) the remaining conflicts do not fall within \tool's example-based ambiguity class (\autoref{subsec:learner}); or
(iii) \menhir reports conflicts that are false positives and cannot be
eliminated by further restriction of the grammar. 
In cases (ii) and (iii), \tool detects that no further progress can be
made, halts without attempting further iterations, and reports to the
user that the remaining conflicts are not addressable. 

The user cannot become stuck in an infinite loop. 
Whenever \tool applies a repair step, the resulting grammar admits a
strict subset of the parse trees admitted by the previous grammar. 
Since the original grammar yields a finite set of parse trees and
conflicts, this monotonic reduction of admissible parse trees
guarantees termination. 
We now discuss the kinds of conflicts that belong to cases (ii) and
(iii), and therefore constitute inherent limitations of \tool.

\subsubsection{Limitations and Non-addressable Conflicts}
\label{subsec:nonaddressable}
\tool is designed to resolve ambiguities that can be expressed as
\emph{example-based binary choices} (\autoref{subsec:learner},
\autoref{subsec:scope}), allowing it to offer a lightweight and
intuitive interaction model. 
However, this design choice also imposes inherent limitations on the
classes of ambiguities that can be addressed.
As described in~\autoref{subsec:termination}, there are two scenarios
where \tool may terminate without resolving all reported conflicts:
(1) false positives reported due to limitations of \menhir, (2)
ambiguities for which tree examples cannot be constructed because the
underlying parsing preferences cannot be represented as two
alternatives. 
In both cases, the remaining conflicts fall outside the scope of \tool
and are therefore reported as non-addressable.

First, the scenario (1) happens due to \tool's dependency on \menhir,
as mentioned in~\autoref{subsec:effectiveness}. 
%
A natural way we can address this issue is by employing a parser that
is not limited by the number of lookahead tokens. 
For example, \tname{dypgen}~\cite{Onzon:dypgen2012} is a GLR parser
generator not limited to a fixed number of lookahead tokens.
Alternatively, \tname{Earley}~\cite{Raffalli:earley2020} allows a
construction of a parser with an unbounded number of lookahead tokens
based on an algorithm that creates a chart of possible parses via
dynamic programming.
While these are viable options, a potential issue would be their
inefficiency (both have cubic time complexity
compared to linear time of \menhir), and, as for \tname{Earley}, the
input CFG needs to be formatted with a combination of functions,
requiring more work to define a parser.
Moreover, scenario (2) can happen when examples cannot be constructed
to form trees as per~\autoref{subsec:learner}. Informally, this can be
understood as 3 or more different productions contributing to an
ambiguity, making it difficult to show parsing preferences as two
different alternatives.



\section{Related Work}
\label{sec:related}

\paragraph{Grammar repair}
%
%
Our work was inspired by the approach of Adams and
Might~\cite{Adams-Might:OOPSLA17}.
%
%
%
%
%
In their work, tree regular expressions capturing undesired sets of
parse trees are taken as input and are intersected with a tree
automaton corresponding to the input ambiguous CFG.
This results in a procedure whose time complexity grows exponentially
in the number of negative examples \emph{and} grammar size (due to negation). 
In our approach, in addition to
providing a simpler interface for the user than having to write formal
tree regular expressions, the learning algorithm from
\autoref{subsec:learner} constructs a \emph{single} tree automaton
encoding all user preferences, and performs a single intersection step
which is quadratic in the size of the automata regardless of the
number of examples. Additionally, our automaton learning step is
linear in the number of examples, resulting in a significantly more
efficient and scalable approach.

Other existing works present ways to resolve grammar
ambiguities~\cite{Klint:ASMICS94, deSouza:TASEP18, Visser:IWPT97} or
repair grammars~\cite{Raselimo-Fischer:SLE21}.
Grammars can be repaired with declarative disambiguation rules called
\emph{filters}~\cite{Klint:ASMICS94, deSouza:TASEP18, Visser:IWPT97}
that remove undesired parse trees either as a post-parse step, or
embedded in the generated parser. Unlike our approach, filters do not
provide a way to update a grammar incorporating the disambiguation
rules. This limits the interpretability of the interactions between
the parsing restrictions and the original CFG, since one needs to
regularly reference disambiguating rules to understand what parse
trees are really accepted by a CFG.
%
%
Another automated grammar repair
technique~\cite{Raselimo-Fischer:SLE21} addresses the problem of
repairing the grammar that fails tests from a provided test suite,
assuming that a correct grammar exists. It fixes the erroneous grammar
by iteratively applying a set of bespoke patches. \tool does not rely
on a fixed set of rewrite rules, and instead utilises a general
mechanism of tree automata derived from user-chosen examples.

%


\paragraph{Tree automata learning via Angluin's algorithm}
A well-known approach to learn deterministic finite automata (DFAs) is
Angluin's L* algorithm~\cite{Angluin:IC81,Angluin:IC87}, which relies on
an oracle, and maintains an observation table through oracle queries, 
which is later used to construct the DFA.
%
While there is existing work that applies the idea of the L* algorithm
for learning a TA~\cite{Besombes-Marion:TCS07}, we found it difficult
to adopt this approach in our setting.
%
%
This is because the L* algorithm assumes the existence of an oracle
that can answer membership queries (\ie, if a tree belongs to a TA),
%
%
and providing such an oracle would be
challenging in our programming by example setting.


We initially considered treating interactions with the user as an
oracle, and involve user prompts in the learning loop.
This approach, however, turned out to be neither desirable nor
necessary because the L* algorithm for tree automata learning requires
a representative set of example trees, so that every TA transition
exercises at least one example. This would require (a)~constructing
examples involving symbols that are not involved in any conflicts and
(b)~numerous interactions with the user in order to correctly simulate
an oracle, defeating the purpose of our tool, especially when the size
of the given grammar is large.
%
%
Our learning algorithm, on the other hand, does not need to involve
the user other than when they are asked to specify their preferred
tree examples \wrt ambiguities.
Alternatively, we could consider the input CFG---or, the TA translated
from the CFG, to be precise---as an oracle.
Such an oracle, however, could not correctly answer membership queries
about the additional user-provided restrictions.
%
%


\paragraph{Programming by example}
%
Programming by example (PBE) entails the synthesis of programs from a
set of input-output examples. PBE has been adopted in synthesising
grammars in Leung \etal's work~\cite{LeungSL15}, which constructs
parsers from user-provided parse tree examples.
This work encodes parsing restrictions through specification of
associativity and precedence order rules, similar to the construction of 
grammar restriction rules in \tool. These rules are then used to remove 
unwanted parse trees.
This approach requires the user to provide examples that fully
characterise the intended language, since the methodology does not
permit providing an existing grammar as context. For large languages,
or for expanding existing grammars, this can get quickly impractical.
Our technique uses an existing grammar as context, with user
preferences as additional constraints to produce an updated grammar.
This simplifies the process of writing a grammar from scratch, but
also allows for incremental updates and maintenance of existing
grammars.

Wang \etal~\cite{Wang-Dillig-Singh:OOPSLA17} implement PBE for learning TAs
for data completion tasks in tabular data. Similar to other PBE approaches, 
this requires user-provided examples, constrained by formulae in a
domain-specific language (DSL) for reasoning about tabular data. 
While in Wang \etal's work tree automata are employed for compact
representations of the user examples, the problem tackled in that work
is fundamentally different from ours in that the approach is
specialised for tabular data completion tasks, and relies heavily on
the user providing correct and sufficient examples.



\section{Conclusion}
\label{sec:conclusion}

In this work, we presented a novel take on the problem almost as old
as the area of programming languages itself---specifying syntax of a
programming language in a way that is deterministic and is free from
parsing ambiguities~\cite{AhoJU73}.
Specifically, we have cast the problem of disambiguating a
context-free grammar as an instance of \emph{programming by example},
structuring the process of grammar repair (\ie, removing ambiguities)
as a series of lightweight interactions with the grammar designer, who
guides the repair by choosing their preferred parse trees.

We believe that the key idea of our approach---compiling pairs of
complementary positive/negative tree examples into tree automata used
to refine an initially provided ``base'' grammar---has applications
beyond just resolving ambiguities in context-free grammars.
In particular, tree automata-based representation of examples can be
used as a tool to guide the design of formal grammars, benefitting
both programming language designers and automated tools for grammar
learning.

\begin{acks}
  We thank Hila Peleg and George P\^{i}rlea for their feedback on drafts of this
  paper.
  We also thank the anonymous OOPSLA'26 reviewers.
  This work was partially supported by a Singapore Ministry of
  Education (MoE) Tier~3 grant ``Automated Program Repair''
  MOE-MOET32021-0001.
\end{acks}

\ifext{
  \section*{Data Availability}
\label{sec:data-availability}

The software artefact accompanying this paper is available
online~\cite{greta-artifact}.
It contains the OCaml implementation of \greta, including
\textsc{GenTA} (\autoref{subsec:learner}) and \textsc{IntersectTA}
(\autoref{subsec:intersect}), as well as benchmark data and build
scripts for reproducing the evaluation results reported
in~\autoref{sec:eval}. 

  \bibliography{references}

  \clearpage
  \appendix
  \section{Formalism}
\label{sup:formalism}

This section provides formal definitions of the core concepts used in
\tool: context-free grammars, tree automata, tree acceptance, and the
translation from CFGs to tree automata.

\subsection{Context-Free Grammars}
\label{sup:cfg}

\begin{definition}[Context-Free Grammar]
A \emph{context-free grammar} (CFG) is a tuple $\mathcal{G} = (V, \Sigma, S, P)$ where:
\begin{itemize}
  \item $V$ is a finite set of \emph{nonterminal symbols},
  \item $\Sigma$ is a finite set of \emph{terminal symbols}, with $V \cap \Sigma = \emptyset$,
  \item $S \in V$ is the \emph{start symbol}, and
  \item $P \subseteq V \times (V \cup \Sigma)^*$ is a finite set of \emph{productions}.
\end{itemize}
\end{definition}

We write a production $(A, \beta) \in P$ as $A \to \beta$, where
$A \in V$ is the \emph{left-hand side} and $\beta \in (V \cup \Sigma)^*$
is the \emph{right-hand side} of the production.

\begin{definition}[Parse Tree]
Given a CFG $\mathcal{G} = (V, \Sigma, S, P)$, a \emph{parse tree} is a
finite ordered tree $t$ where:
\begin{itemize}
  \item Each internal node is labeled with a nonterminal $A \in V$,
  \item Each leaf node is labeled with a terminal $a \in \Sigma$,
  \item For each internal node labeled $A$ with children labeled
        $X_1, \ldots, X_n$ (from left to right), there exists a
        production $A \to X_1 \cdots X_n$ in $P$.
\end{itemize}
A parse tree is \emph{complete} if its root is labeled with the start
symbol $S$. We write $\mathcal{L}_{\mathcal{G}}$ for the set of all
complete parse trees of $\mathcal{G}$.
\end{definition}

\begin{definition}[Ambiguous Grammar]
A CFG $\mathcal{G}$ is \emph{ambiguous} if there exists a string
$w \in \Sigma^*$ such that $w$ is the yield of two or more distinct
parse trees in $\mathcal{L}_{\mathcal{G}}$.
\end{definition}

\vspace{5pt}
\subsection{Tree Automata}
\label{sup:ta}

We work with a variant of finite tree automata adapted for representing
CFG parse trees. In this formulation, transitions consume both states
(corresponding to nonterminals) and terminal symbols, preserving the
structure of CFG productions.

\begin{definition}[Ranked Alphabet]
A \emph{ranked alphabet} is a finite set $\mathcal{F}$ of symbols,
each associated with a non-negative integer called its \emph{rank} (or
arity). That is, $\mathcal{F} \eqdef \{(\textit{Sym}(p),
\textit{Rank}(p))~|~p \in P \}$. 
We write $\mathcal{F}^{(n)}$ for the set of symbols in $\mathcal{F}$
with rank $n$.
%
\end{definition}

For a CFG production $p: A \to X_1 \cdots X_n$, we define:
\begin{itemize}
  \item $\textit{Sym}(p)$ is a unique symbol associated with the production.
  For sake of presentation in this paper, we use the first terminal symbol 
  in $X_1 \cdots X_n$, or a distinguished symbol $\delta$ if there are no 
  terminals,
  \item $\textit{Rank}(p) = n$, the length of the right-hand side.
\end{itemize}

\begin{definition}[Tree Automaton]
\label{def:ta}
A \emph{(finite) bottom-up tree automaton} is a tuple
$\mathcal{A} = (Q, \mathcal{F}, \Sigma, Q_{\textup{f}}, \Delta)$ where:
\begin{itemize}
  \item $Q$ is a finite set of \emph{states},
  \item $\mathcal{F}$ is a ranked alphabet of \emph{constructor labels};
        we assume $(\epsilon, 1) \in \mathcal{F}$ to allow for
        $\epsilon$-transitions,
  \item $\Sigma$ is a finite set of \emph{terminal symbols}, with $Q \cap \Sigma = \emptyset$,
  \item $Q_{\textup{f}} \subseteq Q$ is a set of \emph{final} (accepting) states,
  \item $\Delta$ is a finite set of \emph{transition rules}. For a
        ranked symbol $f \in \mathcal{F}^{(n)}$, transitions labeled
        by $f$ are functions in
        $(Q \cup \Sigma)^{n} \times \{f\} \rightarrow Q$.
\end{itemize}
We write a transition as:
\[
  q \leftarrow_f k_1, k_2, \ldots, k_n
\]
where $f \in \mathcal{F}^{(n)}$, $q \in Q$, and each $k_i \in Q \cup \Sigma$.
We call $q$ the \emph{target state}, $f$ the \emph{constructor label},
and $k_1, \ldots, k_n$ the \emph{children} of the transition.
\end{definition}

The transition $q \leftarrow_f k_1, \ldots, k_n$ indicates that when
processing a node labeled with constructor $f$ whose children (from
left to right) are states or terminals $k_1, \ldots, k_n$, the node
transitions to state $q$.

%
Moreover, the above TA definition corresponds to a slightly
modified version of the usual TA definition whose $\Delta$ consists of
transition rules from a combination of terminals and/or states
(corresponding to terminals and/or nonterminals on the right-hand side
of productions in a CFG) \textit{and} a constructor label to a state
(corresponding to the left-hand side of the CFG production).\footnote{
We convert a CFG to a bottom-up TA. In the case of top-down TAs, the
rules transition in opposite direction. }
As such, the structure of the productions $P$ in the CFG is preserved
in the transition rules $\Delta$ of the TA. This formulation is
lightweight due to the need for fewer transitions and states compared
to the conventional definition, subsequently useful in TA intersection.

\begin{definition}[$\epsilon$-Transition]
An \emph{$\epsilon$-transition} is a transition of the form
$q \leftarrow_{(\epsilon, 1)} q'$ where $q, q' \in Q$. This transition
allows state $q'$ to be treated as state $q$ without consuming any
tree node.
\end{definition}
 
TAs can include $\epsilon$-transitions referring to a transition that
does not consume any symbol. In this paper, we include an epsilon
symbol $(\epsilon, 1)$ in $\mathcal{F}$ when $\Delta$ includes
$\epsilon$-transitions in order to make such a transition explicit.

\vspace{5pt}
\subsection{Tree Acceptance}
\label{sup:acceptance}

We define acceptance for bottom-up tree automata, which process trees
starting from the leaves and moving toward the root.

\vspace{3pt}
\begin{definition}[Run]
\label{def:run}
Given a tree automaton $\mathcal{A} = (Q, \mathcal{F}, \Sigma, Q_{\textup{f}}, \Delta)$
and a tree $t$, a \emph{run} of $\mathcal{A}$ on $t$ is a mapping
$\rho: \textit{Nodes}(t) \to Q$ that assigns a state to each node of $t$,
satisfying the following conditions:
\begin{enumerate}
  \item For each internal node $v$ with constructor label $f \in \mathcal{F}^{(n)}$
        and children $v_1, \ldots, v_n$, there exists a transition
        $\rho(v) \leftarrow_f k_1, \ldots, k_n \in \Delta$ where for each $i$:
        \begin{itemize}
          \item if $v_i$ is an internal node, then $k_i = \rho(v_i)$ or
                there is a sequence of $\epsilon$-transitions from $\rho(v_i)$ to $k_i$,
          \item if $v_i$ is a leaf labeled with terminal $a$, then $k_i = a$.
        \end{itemize}
\end{enumerate}
\end{definition}

\vspace{3pt}
\begin{definition}[Acceptance]
\label{def:acceptance}
A tree $t$ is \emph{accepted} by a tree automaton
$\mathcal{A} = (Q, \mathcal{F}, \Sigma, Q_{\textup{f}}, \Delta)$ if there
exists a run $\rho$ of $\mathcal{A}$ on $t$ such that $\rho(r) \in Q_{\textup{f}}$,
where $r$ is the root of $t$.
The \emph{language} of $\mathcal{A}$, written $L(\mathcal{A})$, is the
set of all trees accepted by $\mathcal{A}$.
\end{definition}

\vspace{5pt}
Intuitively, to check if a tree is accepted:
\begin{enumerate}
  \item Start at the leaves; terminal symbols are matched directly.
  \item For each internal node, find a transition rule matching the
        constructor label and the states/terminals of the children.
  \item If such a rule exists, assign its target state to the node.
  \item The tree is accepted if the root can be assigned a final state.
  \item $\epsilon$-transitions allow a state to be ``promoted'' to
        another state without consuming a node.
\end{enumerate}
\begin{wrapfigure}[11]{r}{0.19\textwidth}
\vspace{-18pt}
  \centering
  \includegraphics[width=0.18\textwidth]{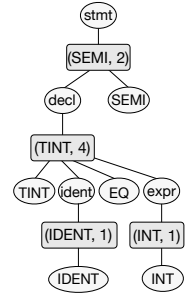}
\setlength\abovecaptionskip{5pt}
  \caption{Tree example.}
  \label{fig:trees}
\end{wrapfigure}
%

%
%
%
%
%
For example, \autoref{fig:trees} illustrates a tree example accepted
by $\mathcal{A}\textsubscript{g}$ discussed in the paper, which
represents a way to parse an expression \token{TINT IDENT EQ INT
SEMI}.
The tree is accepted because there exists a successful run of the
automaton $\mathcal{A}\textsubscript{g}$ on the tree applying the
following four rules in the bottom-up order:
(i) \code{ident <-_ident1- IDENT}, (ii) \code{expr <-_int1- INT},
(iii) \code{decl <-_tint4- TINT ident EQ expr}, and (iv) \code{stmt 
<-_semi2- decl SEMI}.
%

\vspace{5pt}
\subsection{Converting CFGs to Tree Automata}
\label{sup:cfg-to-ta}

Any CFG can be converted to a tree automaton that accepts exactly
its parse trees. We now make this construction precise.

\begin{definition}[CFG to TA Translation]
\label{def:cfg-to-ta}
Given a CFG $\mathcal{G} = (V, \Sigma, S, P)$, we construct a tree
automaton $\mathcal{A}_{\mathcal{G}} = (Q, \mathcal{F}, \Sigma, Q_{\textup{f}}, \Delta)$
as follows:
\begin{enumerate}
  \item \textbf{States:} $Q = V$ (each nonterminal becomes a state).

  \item \textbf{Final states:} $Q_{\textup{f}} = \{S\}$ (the start symbol
        is the unique accepting state).

  \item \textbf{Ranked alphabet:} For each production $p \in P$, define
        the ranked symbol $(\textit{Sym}(p), \textit{Rank}(p))$ and let
        \[
          \mathcal{F} = \{(\textit{Sym}(p), \textit{Rank}(p)) \mid p \in P\}
        \]
        We write $\textit{Prod}: \mathcal{F} \to P$ for the function
        mapping each ranked symbol back to its production.

  \item \textbf{Transitions:} For each production $p: A \to X_1 \cdots X_n$
        with ranked symbol $f = (\textit{Sym}(p), n)$, add the transition:
        \[
          A \leftarrow_f X_1, X_2, \ldots, X_n
        \]
        where each $X_i$ is either a state (if $X_i \in V$) or a
        terminal (if $X_i \in \Sigma$).
\end{enumerate}
\end{definition}

\begin{theorem}[Correctness of Translation]
\label{thm:cfg-ta-equiv}
For any CFG $\mathcal{G}$, the tree automaton $\mathcal{A}_{\mathcal{G}}$
constructed by Definition~\ref{def:cfg-to-ta} satisfies:
\[
  L(\mathcal{A}_{\mathcal{G}}) = \mathcal{L}_{\mathcal{G}}
\]
That is, $\mathcal{A}_{\mathcal{G}}$ accepts exactly the complete parse
trees of $\mathcal{G}$.
\end{theorem}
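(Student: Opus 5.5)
We prove the two inclusions $L(\mathcal{A}_{\mathcal{G}}) \subseteq \mathcal{L}_{\mathcal{G}}$ and $\mathcal{L}_{\mathcal{G}} \subseteq L(\mathcal{A}_{\mathcal{G}})$ by structural induction on trees. Both rest on a per-nonterminal generalisation. The key observation is that the translation of Definition~\ref{def:cfg-to-ta} produces no $\epsilon$-transitions, since every transition carries a label $\textit{Sym}(p)$ of a genuine production. So in a run (Definition~\ref{def:run}) each child state $k_i$ must equal $\rho(v_i)$ exactly. Throughout, we identify a parse-tree node labelled $A$ and built with production $p$ with a tree node carrying constructor $f = (\textit{Sym}(p), \textit{Rank}(p))$. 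This identification is a bijection because $\textit{Sym}$ is unique per production and $\textit{Prod}$ inverts it.

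The induction hypothesis is: for every nonterminal $A \in V$ and every tree $t$ over $\mathcal{F} \cup \Sigma$, $t$ is a parse tree with root labelled $A$ iff there is a run $\rho$ of $\mathcal{A}_{\mathcal{G}}$ on $t$ with $\rho(\mathit{root}) = A$.

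\emph{Soundness direction.} Let $\rho$ be a run with root state $A$, at a node with label $f$ and children $v_1,\dots,v_n$. Then there is a transition $A \leftarrow_f k_1,\dots,k_n$ in $\Delta$. By construction this transition comes from a unique production $\textit{Prod}(f) = A \to X_1\cdots X_n$ with $k_i = X_i$. For each $i$, one of two cases holds:
\begin{itemize}
  \item if $v_i$ is a leaf, then $k_i = X_i$ is a terminal;
  \item if $v_i$ is internal, then $\rho(v_i) = k_i = X_i \in V$, and the induction hypothesis makes the subtree at $v_i$ a parse tree rooted at $X_i$.
\end{itemize}
Hence $t$ is a parse tree rooted at $A$.

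\emph{Completeness direction.} Given a parse tree rooted at $A$ using production $p$, we build the run by setting $\rho(\mathit{root}) = A$ and taking the runs supplied by the induction hypothesis on the nonterminal children. Terminal leaves match $X_i$ directly, and the transition added for $p$ witnesses condition (1) of the run definition.

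Specialising to $A = S$ and using $Q_{\textup{f}} = \{S\}$ together with Definition~\ref{def:acceptance} gives $L(\mathcal{A}_{\mathcal{G}}) = \mathcal{L}_{\mathcal{G}}$.

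The main obstacle is not the induction but aligning the two representations. Parse trees label internal nodes by nonterminals, while TA-accepted trees label them by constructor symbols. We must fix an explicit translation between them and check that the translation is injective. This relies on $\textit{Sym}$ being unique per production, including the $\delta$-labelled unit productions, which therefore need distinct symbols. We must also confirm that the absence of $\epsilon$-transitions in $\mathcal{A}_{\mathcal{G}}$ makes the ``sequence of $\epsilon$-transitions'' clause collapse to equality, so no spurious runs arise.
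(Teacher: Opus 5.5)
Your proposal is correct and matches the paper's argument: the paper only says the theorem \emph{follows directly from the construction}, and your structural induction over each nonterminal is the routine way to make that precise, with the absence of $\epsilon$-transitions forcing each child state to equal $\rho(v_i)$ exactly. Your caveat about $\textit{Sym}$ being injective is apt, since the paper's presentational convention (first terminal, else $\delta$) does not by itself guarantee it; even without it, each transition $A \leftarrow_f X_1,\ldots,X_n$ determines its production uniquely, so a run still determines the parse tree.
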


\begin{proof}
Follows directly from the construction.
\end{proof}

\vspace{6pt}
\section{Proofs of Theorems}
\label{sup:proofs}

\begin{lemma}[Relation of $O_p$ and $\mathcal{L}_r$]
\label{lemma:oplr-relation}
    For some $(s_1, o_1), (s_2, o_2) \in O_p$ where $s_1$ and $s_2$ are involved in a precedence conflict, we have that 
    $$
        o_1 \leq o_2 \iff \text{$\mathcal{L}_r$ includes trees with $s_1$ directly above $s_2$}
    $$
\end{lemma}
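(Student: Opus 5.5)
We prove the two directions separately, relying only on the shape of the transitions that \textsc{GenTA} (\autoref{algo:ta-learner}) emits for ordered states. The basic invariant is read off the construction. Every non-trivial symbol $s$ with $(s,i)\in O_p$ yields a transition with target $e_i$ whose nonterminal children are $e_i$ or $e_{i+1}$. The $\epsilon$-transitions $e_i\leftarrow e_{i+1}$ only allow a state $e_j$ to be promoted to $e_i$ when $i\le j$. Hence a node labelled $s_2$ can be assigned state $e_j$ only if $(s_2,j)\in O_p$, and it can then occupy a child slot of target state $e_k$ only if $k\le j$. We make this precise as a small auxiliary claim, proved by inspecting the four loops of \textsc{GenTA}: a node labelled $s$ is assigned $e_j$ in some run only if $(s,j)\in O_p$ or $(s,j)$ comes from a \textit{HighToLow} transition.

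\emph{Direction ($\Rightarrow$).} Assume $o_1\le o_2$. We build a tree in $\mathcal{L}_r$ whose $s_1$-node has an $s_2$-child at some position $p$.
\begin{itemize}
\item Take the transition $\kronecker_{\mathcal{F}}(e_{o_1},\widebar{e_{o_1}},s_1)$, or its associativity variant with one child at $e_{o_1+1}$.
\item Put an $s_2$-node at state $e_{o_2}$ in a child slot of state $e_{o_1}$ or $e_{o_1+1}$. This is legal by the $\epsilon$-chain, since $o_1\le o_2$. If $s_1=s_2$, choose the position $p$ not recorded in $O_a$.
\item Close the remaining children with any tree reaching the required state. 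Such trees exist because each level $e_i$ receives the non-conflicting symbols $S$ (the reinsertion in \textsc{LearnOaOp}) and, eventually, the terminal-producing symbols.
\item Put a context above the $s_1$-node leading to $e_0$. This is possible because $e_0$ reaches $e_{o_1}$ via $\epsilon$-transitions, and we wrap only with symbols whose transitions reach $e_0$.
\end{itemize}

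\emph{Direction ($\Leftarrow$).} We argue by contraposition. Suppose $o_1>o_2$ for the relevant pair; since $s_1$ and $s_2$ are in a precedence conflict, this is the order \textsc{LearnOaOp} assigned from the user's choice. Consider any run in which an $s_1$-node has an $s_2$-child. By the invariant, the parent's state is $e_k$ with $(s_1,k)\in O_p$, the child slot carries $e_k$ or $e_{k+1}$, and the child's state $e_j$ must satisfy $j\ge k$. Because $s_1$ and $s_2$ lie in the same set of $\mathcal{S}_{\text{E}}$, \textsc{LearnOaOp} places them at distinct single indices $o+i$ of the ordered set $G$. So $j=o_2<o_1=k$, a contradiction.

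The main obstacle is the \textit{HighToLow} transitions, which add $e_{oh}\leftarrow_{sh}\widebar{e_{ol}}$ and can send a child slot \emph{down} the hierarchy. We must show they never create an $s_1$-over-$s_2$ edge against the learned order for a conflicting pair. The plan here is to use the footnote to \textsc{LearnOaOp}: conflicting symbols share an order or sit at merged adjacent orders, and \textit{HighToLow} only relates pairs that are ordered differently in $O_{\text{bp}}$ through a grammar cycle. Such pairs are therefore not both in the same conflict group. If this fails, we would restrict the lemma's reading of ``$(s_i,o_i)\in O_p$'' to the highest order of $s_1$ and the lowest order of $s_2$, matching how \textit{HighToLow} selects $o_h$ and $o_l$.
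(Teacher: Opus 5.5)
Your two directions follow the paper's proof. For ($\Rightarrow$) you move the $s_2$-node's state $e_{o_2}$ down the $\epsilon$-chain into a slot of the $s_1$-transition. For ($\Leftarrow$) you use the invariant that ordinary transitions into $e_i$ mention only $e_i$ or $e_{i+1}$, while $\epsilon$-transitions only lower the index.

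The gap is in how you dispose of the \textit{HighToLow} transitions. Showing that $s_h$ and $s_l$ never share a conflict group does not address the danger. The transition $e_{o_h}\leftarrow_{s_h}\overline{e_{o_l}}$ puts \emph{every} nonterminal slot of $s_h$ at $e_{o_l}$, and through the $\epsilon$-chain such a slot accepts a node in any state $e_j$ with $j\ge o_l$. So once the conflicting symbol $s_1$ is itself some $s_h$ (its production mentions a nonterminal of smaller level), any $s_2$ with $o_l\le o_2<o_1$ can sit directly below it. The condition $o_l\le o_2$ is the normal situation, because $s_l$ lives at a strictly smaller base level.

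Here is a concrete counterexample. Take $S\to E\ \texttt{SEMI}$ and $E\to E\ \texttt{PLUS}\ E \mid \texttt{LET}\ S\ \texttt{IN}\ E \mid \texttt{INT}$. Here \textit{let} and \textit{plus} are in a genuine two-production conflict on $\texttt{LET}\ S\ \texttt{IN}\ E\ \texttt{PLUS}\ E$. Suppose the user lets \textit{let} bind tighter. Then \textsc{LearnOaOp} gives $O_p=\{(\textit{semi},0),(\textit{plus},1),(\textit{int},1),(\textit{let},2),(\textit{int},2)\}$, and \textit{HighToLow} (with $s_h=\textit{let}$, $s_l=\textit{semi}$) adds $e_2\leftarrow_{\textit{let}}\texttt{LET}\ e_0\ \texttt{IN}\ e_0$. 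The tree for \texttt{LET INT SEMI IN INT PLUS INT SEMI}, with \textit{plus} directly below \textit{let}, is then accepted, although $o_{\textit{let}}=2>1=o_{\textit{plus}}$. Note that \textit{let} and \textit{semi} are not in a common conflict group, exactly as your argument predicts, yet the violation occurs. The transition also cannot simply be dropped: without it no \textit{let}-node could have any statement child.

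So ($\Leftarrow$) is false as literally stated, and no argument about \textit{HighToLow} can rescue it. Your fallback reading changes nothing either. As you yourself use, conflicting symbols receive a single order in $O_p$, so ``highest order of $s_1$'' and ``lowest order of $s_2$'' are just $o_1$ and $o_2$.

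What you need is an extra hypothesis, and this is what the paper does. Its proof uses the invariant ``with the exception of introduced cycles'' and declares conflicts whose symbols take part in such cycles as not handled by the algorithm. If you assume that $s_1$ is not the $s_h$ of any \textit{HighToLow} pair (or, more finely, that no \textit{HighToLow} transition labelled $s_1$ has $o_l\le o_2$), your invariant argument goes through as written.

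A smaller point in ($\Rightarrow$): you claim every $e_i$ is inhabited, but this does not follow from the reinsertion of the non-conflicting set $S$ alone. That set can be empty or contain only recursive productions, and inhabitation may depend on \textit{HighToLow} transitions. It needs its own argument or an explicit assumption; the paper's proof is silent on this as well.
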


\begin{proof} 
$ $ \newline
    ($\Rightarrow$) In the constructed tree automaton $\mathcal{L}_r$, the transition corresponding to $s_1$ will lead to state $e_{o_1}$ and the transition corresponding to $s_2$ will lead to state $e_{o_2}$. Since $o_1 \leq o_2$, it is possible for any state $e_{o_1}$ on the right hand side of the $s_1$ transition to come from state $e_{o_2}$, through only $\epsilon$-transitions. Since the start state is also reachable from $e_{o_1}$ by epsilon transitions, such a tree with $s_1$ directly above $s_2$, will be in $\mathcal{L}_r$.\\
    ($\Leftarrow$) If $o_1 > o_2$, then since, epsilon transitions are only introduced from higher-leveled states to lower-leveled states, and right hand sides of transitions at level $i$ only mention states $\geq i$ (with the exception of introduced cycles), it is not possible for the state $s_2$ to appear directly below in the tree from $s_1$. Cases of symbols at adjacent levels which are involved in a conflict (i.e. they are involved in a cycle) are explicitly not handled by the algorithm.
\end{proof}

\begin{lemma}[Preservation of Orders of symbols not involved in Conflicts]
\label{lemma:preservation}
    For some $(s_1, o_1), (s_2, o_2) \in O_{bp}$ where either $s_1$ or $s_2$ are not involved in a precedence conflict such that $s_1$ can appear above $s_2$ in some parse tree in $\mathcal{L}_g$ and $o_1 \leq o_2$. Then there exists $(s_1, o_1'), (s_2, o_2') \in O_{p}$ such that $o_1' \leq o_2'$
\end{lemma}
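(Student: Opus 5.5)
We argue by induction over the iterations of the outer loop of \textsc{LearnOaOp} (\autoref{algo:oaop-learner}), which visits the orders in $M_{\text{to}}$ in descending order. The invariant is this: for every pair $(s_1,o_1),(s_2,o_2)\in O_{\text{bp}}$ with $o_1\le o_2$, where at least one of $s_1,s_2$ is not involved in a precedence conflict, the current $O_{\text{tmp}}$ contains entries $(s_1,o_1'),(s_2,o_2')$ with $o_1'\le o_2'$. Every symbol of $\mathcal{F}\setminus\mathcal{F}_{\text{tr}}$ always has at least one entry in $O_{\text{tmp}}$, because symbols removed at order $o$ are immediately reinserted. The invariant therefore holds initially with $O_{\text{tmp}}=O_{\text{bp}}$, taking $o_i'=o_i$. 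Since $O_{\text{p}}$ is the final $O_{\text{tmp}}$, the lemma follows.

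For the inductive step, fix an iteration at order $o$ with group $G$, $\text{size}=\text{maxSize}(G)$ and non-conflicting set $S$. We first record the effect of each operation on orders. The operation $\text{pushN}(\cdot,o+1,\text{size}-1)$ is monotone: an order $x$ becomes $x$ if $x\le o$, and $x+\text{size}-1$ if $x>o$. Removing $\text{ofOrder}(O_{\text{tmp}},o)$ and reinserting places every symbol formerly at order $o$ into the window $[o,o+\text{size}-1]$. Symbols in $S$ are placed at every order of that window, and possibly also at $o+\text{size}$ after the extra push. Each conflicting symbol is placed at a single order $o+i$. The optional second push is again monotone, and it fills the newly created order only with $S$.

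Now take a witnessing pair $(s_1,o_1'),(s_2,o_2')$ before the iteration and split into cases.
\begin{enumerate}
\item Neither order equals $o$. Monotonicity of the pushes preserves $o_1'\le o_2'$. The only subtlety is $o_1'<o<o_2'$, where the push only increases $o_2'$, so the inequality survives.
\item Exactly one order equals $o$. If $o_1'=o<o_2'$, keep $s_1$'s lowest new entry, which is at most $o+\text{size}-1$; the entry of $s_2$ moves to $o_2'+\text{size}-1\ge o+\text{size}$. If $o_1'<o=o_2'$, keep any new entry of $s_2$; it is at least $o>o_1'$.
\item Both orders equal $o$. By hypothesis one of the symbols lies in $S$, since non-conflicting symbols at this order fall outside $\bigcup G$. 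If $s_1\in S$, pair the entry $(s_1,o)$ with any entry of $s_2$. If $s_2\in S$, pair its entry at $o+\text{size}-1$ (or at $o+\text{size}$ after the extra push) with any entry of $s_1$, which lies at most at $o+\text{size}-1$.
\end{enumerate}

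The main obstacle is case~3, together with making precise why $S$ is replicated across the whole window. This replication is exactly what lets a non-conflicting symbol sit both above and below every conflicting symbol that shared its original order. We must check from the loop body that $\text{withOrder}(S\cup\text{ithSymbols},o+i)$ is executed for every $i\in[0,\text{size})$, so $S$ really occupies all orders $o,\dots,o+\text{size}-1$.

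A second point to check concerns symbols that are conflicting but whose order is not $o$, and symbols that carry several entries because earlier iterations replicated them. Here we carry the existential witness through the iteration rather than tracking all entries. We also rely on the footnoted assumption that conflicting symbols at adjacent orders have been merged, so that each conflicting symbol is handled in exactly one group. The hypothesis that $s_1$ can appear above $s_2$ in $\mathcal{L}_{\text{g}}$ is not needed for the order bookkeeping. It only ensures that the pair is relevant, and we use it merely to note that the conclusion feeds into the argument of \autoref{lemma:oplr-relation}.
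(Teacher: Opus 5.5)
Your proof is correct and uses the same two facts as the paper's proof. First, the push/reinsert steps of \textsc{LearnOaOp} never invert strictly separated orders. Second, the non-conflicting set $S$ is copied to every order of the new window, which settles the case $o_1=o_2$ (the paper even obtains $o_1'=o_2'$ there).

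Your loop-invariant version is a more careful rendering of the paper's two-case ``by construction'' argument. It also explicitly covers the mixed case where only one witness sits at the processed order, and it does not need the paper's side claim that both symbols are conflict-free when $o_1<o_2$.
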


\begin{proof}
    If $o_1 < o_2$, then both symbols are not involved in conflicts and as a result of $O_p$ learning, maintain their order in $O_p$. That is, there are new orders in $O_p$, $o_1'$ and $o_2'$ are such that $o_1' < o_2'$.

    If $o_1 = o_2$, and since by assumption, at least one of these symbols is not involved in a conflict, again by the construction of $O_p$, which copies the non-conflicting symbols to each newly inserted orders, there remain orders $o_1'$ and $o_2'$ such that $o_1' = o_2'$.  
\end{proof}

{\parindent0pt
\textbf{Theorem 3.1}~~(Soundness of \textsc{GenTA})
\hskip 1em}
Let $\mathcal{A}_{\text{r}}$ be the finite TA returned by GenTA and
$\mathcal{A}_{\text{g}}$ be the TA derived from CFG $\mathcal{G}$,
$T^{-}$ be the negative tree examples.
Further, $\mathcal{L}_{\text{r}} = L(\mathcal{A}_{\text{r}})$,
$\mathcal{L}_{\text{g}} = L(\mathcal{A}_{\text{g}})$, and
$\mathcal{L}^{-} = \bigcup\limits_{t \in T^{-}} P^{-}(t)$.
%
Then, the following statements hold: 
\begin{enumerate}
  {\item $\mathcal{L}_r \supseteq \mathcal{L}_{\text{g}} \setminus \mathcal{L}^-$}
  {\item $\mathcal{L}_r \cap \mathcal{L}^- = \emptyset$.}
\end{enumerate} 
\begin{proof}
  We provide proofs to the two statements. 
  \begin{enumerate}
    {\item Proof by combination of~\autoref{lemma:oplr-relation} and~\autoref{lemma:preservation}. 
    For all trees $t \in \mathcal{L}_g \setminus \mathcal{L}^-$, we have that for each of the direct-parent child symbol ($s_1$, $s_2$) relationships, either (1) both symbols are involved in $T^-$ but are correctly oriented, (2) $s_1$ appears directly above $s_2$, but $o_1 > o_2$ (there is a cycle), or (3) they satisfy the precondition for~\autoref{lemma:preservation}.

    In case (1), since the $O_p$ learning algorithm maintains the total ordering of conflicting symbols, the correct order of $s_1$ and $s_2$ is maintained in $O_p$. As a result, such relationships are preserved in $L_r$. This can be seen with a similar argument as the ($\Rightarrow$) direction of~\autoref{lemma:oplr-relation}

    For case (2), the cyclic productions introduced by GenTA maintains transitions that consume such parent-child relationships.

    Finally, in the case that at least one symbol is not involved in $T^-$, and $o_1 \leq o_2$ in $O_{bp}$, by~\autoref{lemma:preservation}, there exists $o_1', o_2'$ in $O_p$ such that $o_1' \leq o_2'$. Then by a similar argument to the ($\Rightarrow$) direction of~\autoref{lemma:oplr-relation}, there exist transitions in $\mathcal{A}_r$ that consume such nodes in the tree. 

    Since for all direct parent-child nodes in the tree $t$, our tree automaton has transitions that consume it, it must be that the tree $t$ is accepted by the automaton, implying that $t \in \mathcal{L}_r$
    
    }
    {\item Consider $T^{-} = T^{-}_{a} \cup T^{-}_{p}$. 
    First, $\forall t_{p} \in T^{-}_{p}$ where $s_{1}$ is the top symbol and $s_{2}$ is the bottom symbol, $O_{p}$ is added with $(s_{1}, o_{1})$ and $(s_{2}, o_{2})$ where $o_{1} > o_{2}$. 
    Then, based on~\autoref{lemma:oplr-relation}, $P^{-}(t_{p}) \cap \mathcal{L}_{r} = \emptyset$.
    Next, $\forall t_{a} \in T^{-}_{a}$ where $s$ is a symbol in $t$, $idx = t_{idx}$, and $(s, i) \in O_{p}$, $\delta_{s} \eqdef \delta(e_{i}, [ \ldots; e_{i+1}; \ldots ], s)$ is added to $\Delta_{r}$ where $e_{i+1}$ occurs at $idx$.
    This means \textit{ParseTrees}$(t) \cap \mathcal{L}_{r} = \emptyset$. Therefore, we have that $\mathcal{L}_{r} \cap \mathcal{L}^{-} = \emptyset$.
    }
  \end{enumerate}
\end{proof}

{\vspace{5pt}
\parindent0pt
\textbf{Theorem 3.2}~~(Correctness of \tool)
\hskip 1em}
The intersection of automaton $\mathcal{A}_{\text{r}}$
(\textsc{GenTA}'s result) with $\mathcal{A}_{\text{g}}$ (the automaton
derived from CFG $\mathcal{G}$) produces a tree automaton recognizing
the language $\mathcal{L}_{\text{g}} \setminus \mathcal{L}^{-}$.

\begin{proof}
  The resulting automaton from the intersection accepts the tree
  language $\mathcal{L}_{\text{r}} \cap \mathcal{L}_{\text{g}}$, which
  together with Theorem $3.1$ implies that the
  language it recognizes is exactly $\mathcal{L}_{\text{g}} \setminus
  \mathcal{L}^{-}$. 
\end{proof}

\section{Algorithm Complexity Analysis}
\label{sup:complexity}

\subsection{Algorithm 3.1 \textsc{LearnOaOp}}
\label{sup:algo-learnoaop}
\textbf{Algorithm 3.1} essentially re-inserts the conflicting symbols (of at most size $|\mathcal{F}|$) at different orders, while updating the orders of other symbols in $\mathcal{F}$, so in the worst case, could have $O(|\mathcal{F}|^2)$ time complexity, and $O(|\mathcal{F}|)$ space complexity.

\subsection{Algorithm 3.2 \textsc{GenTA}}
\label{sup:algo-genta}

\begin{itemize}
  \item Let $|\mathcal{F}|$, $|O_{p}|$, $|O_{a}|$ be sizes of ranked symbols, and the sets $O_{p}$ and $O_{a}$, and let $r_{\max}$ be a max rank in $\mathcal{F}$.

  \item \textbf{States $Q$}: The algorithm populates $Q$ with $m$---max order of $O_{p}$---states, so $|Q| = m$.

  \item \textbf{Build $\delta_{\mathcal{F}}$ from $P$}: The delta generator retrieves a template from a predefined productions map which is built once outside the algorithm, with space cost of $O(r_{\max} \cdot |\mathcal{F}|)$. And since the RHS of productions is part of the grammar, the only space overhead associated with $\delta_{\mathcal{F}}$ is the productions map, so $O(|\mathcal{F}|)$. Also, lookup of a symbol on the map is $O(1)$ and traversing RHS of the corresponding production is proportional to the production length, which is at most the max arity of the ranked symbols, so time complexity of $O(r_{\max})$.

  \item \textbf{Loop over $O_{p}/O_{a}$}: Inside the loop over $(s,i) \in O_{p}$, it does constant-time membership checks in $O_{a}$ and calls the $\delta$-generator. This results in time complexity of $O(r_{\max} \cdot |O_{p}|)$ and space complexity of $O(|O_{p}|)$.

  \item \textbf{Loop over $\mathcal{F}_{tr}$ and $[1..m-1]$}: This part makes a pass over $\mathcal{F}_{tr}$ and a pass over levels $1, \ldots, m-1$, resulting in time and space complexity of $O(|\mathcal{F}_{tr}| + m)$.

  \item \textbf{Loop over symbols in $\mathit{HighToLow}(G, O_{p})$}: $\mathit{HighToLow}(G,O_{p})$ is bounded by $O(|\mathcal{F}|)$. And this part calls $\delta_{\mathcal{F}}$ inside the loop, resulting in time complexity of $O(r_{\max} \cdot |\mathcal{F}|)$, and the space cost is $O(|\mathcal{F}|)$.

  \item Combining the above together, Algorithm 3.2 has the time and space complexity of $O(|\mathcal{F}|)$.
\end{itemize}

\subsection{Algorithm 3.3 \textsc{IntersectTA}}
\label{sup:algo-intersect}

\begin{itemize}
  \item Let $|\Delta_{g}|$, $|\Delta_{r}|$, and $|\mathcal{F}|$ be sizes of transitions of the input TAs, the ranked symbols, and let $r_{\max}$ be a max rank in $\mathcal{F}$.

  \item \textbf{Initialisation of $Q_{f}, Q, Q_{tmp}$}: Each of the input automata has a single accepting state, taking constant time to compute.

  \item \textbf{Loop over reachable states}:
  \begin{itemize}
    \item Each product state enters $Q_{tmp}$ \emph{at most} once, where the number of iterations is bounded by $|Q| \leq |Q_{g}| \cdot |Q_{r}|$.

    \item Then, looking up reachable symbols for each product state takes $|\mathcal{F}|$ time, and iterating over these symbols has time complexity of $O(|\mathcal{F}|)$ because all symbols can be reachable in the worst case.

    \item For a reachable symbol and a reachable product state $(e_{g}, e_{r})$, the number of candidate pairs of transitions is $|\Delta_{g}| \cdot |\Delta_{r}|$ in the worst case. Note that even though the for loop over the reachable symbols is within reachable product states, overall number of pairs of transitions is still bounded by $|\Delta_{g}| \cdot |\Delta_{r}|$, hence no need to multiply by $|\mathcal{F}|$.

    \item For each pair of transitions, the algorithm takes cross product of the transitions, which is bounded by max arity, thus at the cost of $O(r_{\max})$.

    \item Under the bounded $r_{\max}$, as is typically the case for most CFGs, the loop over reachable states has time and space complexity of $O(|\Delta_{g}| \cdot |\Delta_{r}|)$.
  \end{itemize}

  \item \textbf{Finding and removing duplicate states} (Algorithm 3.4 \textsc{FindDupStates}):
  \begin{itemize}
    \item In Algorithm 3.4, for each pair of states, comparing their sets of transitions is bounded by running time of $O(|Q|^{2} \cdot |\Delta|)$, where $Q$ and $\Delta$ respectively refer to the numbers of states and of transitions after the main loop in Algorithm 3.3.

    \item At most $O(|Q|^{2})$ pairs of states exist, and storing the per-state transition sets has space cost of $O(|\Delta|)$, resulting in Algorithm 3.4 with a space complexity of $O(|Q|^{2} + |\Delta|)$.
  \end{itemize}

  (Back to Algorithm 3.3:)
  \begin{itemize}
    \item Then, running merge/removal operation on all pairs of duplicate state pairs ($Q_{dup}$) has time cost of $O(|Q|^{2})$ and space cost of $O(|Q|)$.

    \item Rewriting transitions by making pass over all transitions has time and space cost of $O(|\Delta|)$.

    \item So, this part---dominated by detection cost---has time and space complexities of $O(|Q|^{2} \cdot |\Delta|)$ and of $O(|Q|^{2} + |\Delta|)$.
  \end{itemize}

  \item \textbf{$\varepsilon$-transition introduction}: This part traverses the ordered states and for each pair, checks the corresponding transitions. This part is negligible compared to the duplicate detection process.

  \item Combining the above together, Algorithm 3.3 has the time complexity of $O((|Q_g| \cdot |Q_r|)^{2} \cdot |\Delta_g| \cdot |\Delta_r|)$ and space complexity of $O((|Q_g| \cdot |Q_r|)^{2} + |\Delta_g| \cdot |\Delta_r|)$.
\end{itemize}

\section{User Interaction: Menhir vs. \tool}
\label{sup:ui}

This section provides a concrete example illustrating the difference
between the user interaction offered by \menhir, showing the form and
complexity of the information presented to users during manual
disambiguation, and to contrast it with the example-based interaction
used by \tool when resolving grammar ambiguities. 
The example is drawn from our experience repairing the Michelson
grammar $G7$ from the paper benchmark.
This section also indicates that manual disambiguation effort is
dominated not only by the number of ambiguities, but also by the
cognitive cost of interpreting automaton-level conflict reports and
anticipating the effects of grammar edits. 

\paragraph{\menhir conflict report.}
When applied to grammar $G7$, \menhir generates a conflict report
in~\autoref{fig:menhir-report}. 
We include the report largely verbatim to demonstrate the level of details exposed to the user. 
Resolving this conflict manually requires the user to inspect the
automaton state, identify the relevant productions, reconstruct the
competing parse trees, and infer which grammar modification reflects
the intended parsing preference. 
This process demands both familiarity with the grammar and experience
with LR parser diagnostics. 

\begin{figure}[t]
\centering
\begin{minipage}{0.95\linewidth}
\begin{lstlisting}[style=HL,numbers=none]
** Conflict (shi(*@f@*)t/reduce) in state 134.
** Token involved: ELT
** This state is reached from toplevel after reading:

CODE LBRACE MNEMONIC tyy SOME literal

** The derivations that appear below have the following common factor:
** (The question mark symbol (?) represents the spot where the derivations begin to di(*@f@*)fer.)

toplevel
script EOF
CODE LBRACE instlist RBRACE
(*@\hspace{6em}@*)singleinst
(*@\hspace{6em}@*)MNEMONIC tyy literal
(*@\hspace{13em}@*)(?)

** In state 134, looking ahead at ELT, reducing production..

literal -(*@>@*) SOME literal

** is permitted because of the following sub-derivation:

literal ELT literal // lookahead token appears
SOME literal .

** In state 134, looking ahead at ELT, shi(*@f@*)ting is permitted..
\end{lstlisting}
\end{minipage}
\caption{Excerpt from a \menhir conflict report for the Michelson
grammar \texttt{G7}.}
\label{fig:menhir-report}
\end{figure}

\paragraph{\tool user interface.} 
For the same ambiguity, \tool presents the user with the prompt
in~\autoref{fig:greta-ui}, exposing the alternative parse structures
involved in the conflict. 
Here, the ambiguity is expressed explicitly as a choice between concrete parse alternatives. 
Then, the user resolves the conflict by selecting the parse that
matches their intent, without inspecting parser states or conflict
reports. 
This example-based interaction avoids accidental over-disambiguation
and backtracking, and leads to more predictable repair behaviour.

\begin{figure}[th]
\centering
\begin{minipage}{0.95\linewidth}
\begin{lstlisting}[style=HL,numbers=none]
Choose your preference(*@!@*)
(Type either 0 or 1.)

Option 0:
(*@\hspace{4em}@*)( SOME ( literal ELT literal ) )
Option 1:
(*@\hspace{4em}@*)( ( SOME literal ) ELT literal )
\end{lstlisting}
\end{minipage}
\caption{User interface presented by \tool for the same ambiguity shown in~\autoref{fig:menhir-report}.}
\label{fig:greta-ui}
\end{figure}

}{
  
  \bibliography{references}
}

\end{document}
